  \let\oldparagraph\paragraph
  \renewcommand{\paragraph}{
    \@ifstar
      \xxxParagraphStar
      \xxxParagraphNoStar
  }
  \newcommand{\xxxParagraphStar}[1]{\oldparagraph*{#1}\mbox{}}
  \newcommand{\xxxParagraphNoStar}[1]{\oldparagraph{#1}\mbox{}}
  \let\oldsubparagraph\subparagraph
  \renewcommand{\subparagraph}{
    \@ifstar
      \xxxSubParagraphStar
      \xxxSubParagraphNoStar
  }
  \newcommand{\xxxSubParagraphStar}[1]{\oldsubparagraph*{#1}\mbox{}}
  \newcommand{\xxxSubParagraphNoStar}[1]{\oldsubparagraph{#1}\mbox{}}
\patchcmd\longtable{\par}{\if@noskipsec\mbox{}\fi\par}{}{}
\def\maxwidth{\ifdim\Gin@nat@width>\linewidth\linewidth\else\Gin@nat@width\fi}
\def\maxheight{\ifdim\Gin@nat@height>\textheight\textheight\else\Gin@nat@height\fi}
\def\fps@figure{htbp}
  \renewcommand*\contentsname{Table of contents}
  \newcommand\contentsname{Table of contents}
  \renewcommand*\listfigurename{List of Figures}
  \newcommand\listfigurename{List of Figures}
  \renewcommand*\listtablename{List of Tables}
  \newcommand\listtablename{List of Tables}
  \renewcommand*\figurename{Figure}
  \newcommand\figurename{Figure}
  \renewcommand*\tablename{Table}
  \newcommand\tablename{Table}
\theoremstyle{plain}
\newtheorem{theorem}{Theorem}
\newtheorem{proposition}{Proposition}
\newtheorem{lemma}{Lemma}
\newtheorem{corollary}{Corollary}
\theoremstyle{definition}
\newtheorem{definition}{Definition}
\newtheorem{remark}{Remark}
\newtheorem{assumption}{Assumption}
\newcommand{\1}{\mathbf{1}}
\newcommand{\cx}{\mathcal{X}}
\newcommand{\cy}{\mathcal{Y}}
\newcommand{\cz}{\mathcal{Z}}
\newcommand{\ch}{\mathcal{H}}
\newcommand{\cg}{\mathcal{G}}
\newcommand{\cs}{\mathcal{S}}
\newcommand{\ce}{\mathcal{E}}
\newcommand{\cm}{\mathcal{M}}
\newcommand{\R}{\mathbb{R}}
\def\P{\mathbb{P}} % ignore existing commands
\def\E{\mathbb{E}}
\def\Q{\mathbb{Q}}
\def\s{\Sigma}
\newcommand{\ind}{\perp\!\!\!\!\perp}
\newcommand{\hatsig}{\widehat{\Sigma}}
\newcommand{\mbf}[1]{\mathbf{#1}}
\DeclareMathOperator{\var}{Var}
\DeclareMathOperator{\Tr}{Tr}
\DeclareMathOperator{\cov}{Cov}
\DeclareMathOperator{\ran}{ran}
\DeclareMathOperator{\row}{row}
\DeclareMathOperator{\supp}{supp}
\DeclareMathOperator{\sign}{sign}
\DeclareMathOperator{\rk}{rank} % rank (rk or rank; change depending on needs)
\DeclareMathOperator{\Gr}{Gr} % Grassmannian
\DeclareMathOperator{\St}{St} % Stiefel manifold
\DeclareMathOperator*{\argmin}{argmin} % * makes supscripts go to the bottom
\DeclareMathOperator*{\minimize}{minimize}
\DeclarePairedDelimiter{\abs}{|}{|}
\DeclarePairedDelimiter{\pr}{(}{)}
\DeclarePairedDelimiter{\br}{\{}{\}}
\DeclarePairedDelimiter{\bracket}{[}{]}
\DeclarePairedDelimiter{\norm}{\|}{\|}
\newcommand{\clo}[1]{\overline{#1}}
\newcommand{\wt}[1]{\widetilde{#1}}
\newcommand{\wh}[1]{\widehat{#1}}
\begin{document}

\def\spacingset#1{\renewcommand{\baselinestretch}%
{#1}\small\normalsize} 

\spacingset{1}

\title{Geometry-preserving and interpretable dimension reduction of compositional data}
\author{Junyoung Park$^1$, Cheolwoo Park$^{2*}$, and Jeongyoun Ahn$^3$\thanks{Corresponding authors: Cheolwoo Park (parkcw2021@kaist.ac.kr), Jeongyoun Ahn (jyahn@kaist.ac.kr)}
}
\date{\vspace{-.3em}\normalsize$^1$Department of Biostatistics, University of Michigan, $^2$Department of Mathematical Sciences, KAIST, $^3$Department of Industrial and Systems Engineering, KAIST
\vspace{-.3em}}
\maketitle

\begin{abstract}

High-dimensional compositional data pose unique statistical challenges due to the simplex constraint and excess zeros. While dimension reduction is indispensable for analyzing such data, conventional approaches often rely on log-ratio transformations that compromise interpretability and distort the data through ad hoc zero replacements. To address these issues, we introduce a geometry-preserving framework for dimension reduction of compositional data, mapping high-dimensional compositions directly to a lower-dimensional simplex. This framework is interpretable as a softened amalgamation of compositions and enables dual visualization---showing both projected data and how variables contribute to reduced components---for at-a-glance interpretation. Within this geometry, we define a new sufficient dimension reduction (SDR) approach for compositional predictors, whose identifiable object, termed the central compositional subspace, differs from the classical central subspace in Euclidean SDR. 
For estimation, we propose a kernel-based method that yields sparse solutions and comes with an intrinsic predictive model for direct downstream analyses. We prove consistency through a new subspace-comparison argument that allows the estimated and target subspaces to have different dimensions.
Applications to real microbiome datasets demonstrate that our approach provides a powerful graphical exploration tool for uncovering meaningful biological patterns in high-dimensional compositional data.
\end{abstract}

\noindent%
{\it Keywords:} Amalgamation; Kernel method; Microbiome; Sufficient dimension reduction; Ternary plot

\spacingset{1.5}

\begin{bibunit}

\section{Introduction}\label{sec: intro}

Compositional data comprise relative proportions of variables and lie in the unit simplex $\Delta^{d-1} = \{(x_1,\ldots,x_{d})^\top\in\R^d \,|\, \sum_{i=1}^d x_i = 1,\ x_i\geq 0,\ \forall i \}$. Such data arise naturally in diverse scientific fields, including chemometrics \citep{greenacre2020amalgamations}, physical activity \citep{janssenSystematicReviewCompositional2020}, and text mining \citep{wuSparseTopicModeling2023}.
Human microbiome compositions, in particular, have attracted significant interest for their intricate associations with health conditions and diseases, including obesity, diabetes, and cancer \citep{huttenhower2012structure, petersonAnalysisMicrobiomeData2024}. 
They are typically obtained through high-throughput sequencing (e.g., 16S rRNA gene sequencing), which generates microbial taxon counts that are subsequently normalized to compositions to account for differences in total counts across samples.
However, their data-driven analysis is challenging due to the large number of microbial taxa---often exceeding the sample size---and the underlying simplex constraint. 
These challenges make dimension reduction essential, but also require that the reduced representations remain interpretable in terms of the original compositional variables.

Traditionally, dimension reduction for compositional data has largely relied on transformations that map the simplex to other spaces. The most widely used are log-ratio transformations \citep{aitchison1986statistical}, which convert compositions into log-ratios in Euclidean space and thereby enable standard methods such as principal component analysis (PCA) and sufficient dimension reduction \citep{tomassiSufficientDimensionReduction2021}. 
However, these transformations become invalid when zeros are prevalent, as is common in microbiome data \citep{lutz2022survey}. Although a common remedy replaces zeros with small positive values \citep{martin2011dealing}, such replacements are systematically amplified by subsequent log transformations, inducing data distortions and making analyses highly sensitive to the choice of zero-replacement scheme \citep{park2022kernel}. Furthermore, transformations compromise direct interpretability in terms of original compositional variables. For instance, principal components from log-ratio transformations take the form $\sum_{j=1}^d\beta_j\log x_j$ with $\sum_{j=1}^d\beta_j = 0$, but the log transformation and the interdependence among variables $x_j$ obscure how changes in the original composition affect the component \citep{liItsAllRelative2023}.
The same interpretational difficulty persists in other transformations, including spherical \citep{scealy2011regression, park2022kernel} and power transformations \citep{scealyRobustPrincipalComponent2015}, underscoring the trade-off between analytic convenience and interpretability.

To circumvent these issues, several transformation-free dimension reduction methods have been proposed recently.
\citet{kim2024principal} proposed a direct PCA approach that estimates nested subspaces intersecting the simplex and approximating compositions, and \citet{leePrincipalSubsimplexAnalysis2025} proposed a vertex aggregation method that seeks the best approximating subsimplex for given data. However, both methods rely on greedy stepwise algorithms that may yield suboptimal solutions.
Another important line of work employs amalgamation, which reduces dimension by merging variables into larger compositional categories, thereby preserving the simplex structure and offering direct interpretations \citep{aitchison1986statistical}. 
Although amalgamation has been largely overlooked because of its incompatibility with log-ratio-based linear models \citep{greenacre2020amalgamations}, recent work has developed data-driven procedures, including loss-based optimization with genetic algorithms \citep{quinnAmalgamsDatadrivenAmalgamation2020} and linear regression with parameter-fusion regularization \citep{liItsAllRelative2023}. 
However, amalgamation is inherently a discrete operation, which limits its representational flexibility and creates combinatorial optimization challenges. Consequently, the compositional data literature still lacks a dimension reduction approach that is simultaneously transformation-free, interpretable, and flexible enough to offer sufficient representational power.

In this work, we propose a new simplex-geometry-preserving and interpretable dimension reduction framework for compositional data. 
We begin by reinterpreting amalgamation as a discrete mass assignment from each vertex of the original simplex to a single vertex of a lower-dimensional simplex. We then relax this assignment to allow soft mass allocation across multiple vertices, yielding a flexible simplex-to-simplex reduction that we term \textit{compositional dimension reduction} (CDR). CDR is represented by column-stochastic matrices
\begin{equation}\label{eq: cdr matrix}
    \cm_{m, d} = \big\{P = (p_{ij})\in\R^{m\times d}\, \big|\, 0\leq p_{ij}\leq 1,\ \sum_{i=1}^m p_{ij} = 1, \ j=1,\ldots,d\big\},
\end{equation}
which map compositions in $\Delta^{d-1}$ directly to $\Delta^{m-1}$. This relaxation preserves the advantages of amalgamation: CDR avoids the zero-handling issues of log-ratio transformations, offers direct variable-level interpretability, and makes transparent how the reduction reflects changes in the original composition (Section~\ref{sec:compo-dimred}). Furthermore, it enables a \textit{dual visualization}: when plotting the reduced data on a ternary plot ($m=3$) using a CDR matrix $P\in\cm_{3,d}$, the columns of $P$ can also be plotted on a second ternary plot, providing immediate visual insight into how the original variables affect the reduced simplicial coordinates (Figure~\ref{fig: comparison with tomassi}). 
To our knowledge, this is the first work to use column-stochastic matrices for interpretable dimension reduction of compositional data; only \citet{wangDirichletComponentAnalysis2008} used such matrices for feature extraction, but without interpreting reductions directly using original variables.

\begin{figure}[t]
    \centering
    \includegraphics[width=\textwidth]{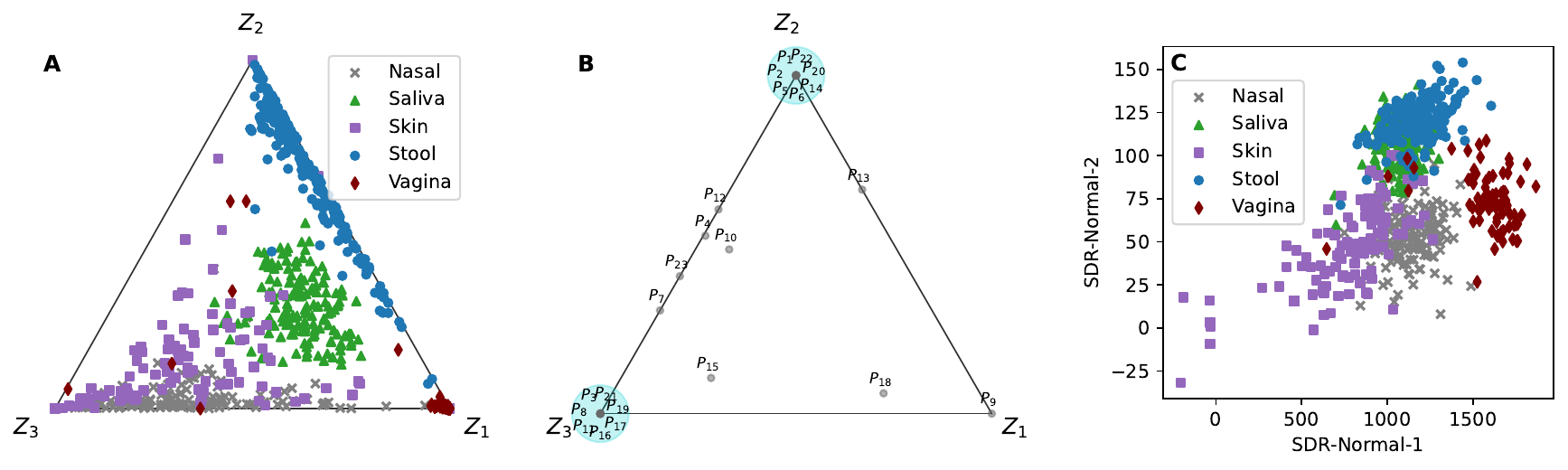}
    \vspace{-1.3em}
    \spacingset{1}
    \caption{
    Visualization of CDR versus the SDR-Normal method \citep{tomassiSufficientDimensionReduction2021} on the Human Microbiome Project data (Section~\ref{sec:visualization}). (A) Reduced data: samples cluster by body site on the simplex, in which the axes retain compositional meaning. (B) Visualization of the CDR matrix $P \in \cm_{3, 23}$ estimated from the proposed method (\cref{sec:compositional-kdr}): points represent columns $P_j$ of $P$, showing how original variables contribute to the three reduced axes. Cyan bubbles highlight clusters of similar columns, revealing amalgamation patterns. (C) Euclidean visualization from SDR-Normal: compositional interpretation is obscured, and axes are not visually interpretable.
    }
    \label{fig: comparison with tomassi}
\end{figure}

The softened CDR framework, however, introduces two major technical challenges for method development. 
First, although CDR is interpretable at the matrix level, the reduction matrix itself is not identifiable: for a compositional vector $X\in\Delta^{d-1}$, a matrix $P\in\cm_{m,d}$, and an invertible matrix $Q\in\cm_{m,m}$, the two CDR matrices $P$ and $QP \in\cm_{m,d}$ induce equivalent reductions, since $PX$ and $QPX$ are in one-to-one correspondence.
Second, because the domain $\cm_{m,d}$ contains matrices of variable rank, learning a CDR matrix $P\in\cm_{m,d}$ must avoid rank collapse below the data's intrinsic dimension.
In Euclidean dimension reduction, these issues are typically resolved using orthogonality constraints or subspace-based arguments. However, these tools do not transfer to CDR: orthogonality is incompatible with the column-stochastic constraint, and no analogous subspace theory exists for reductions of this form.
Similar difficulties have long been recognized as challenging in nonnegative matrix factorization, where identifiability requires additional geometric or sparsity conditions, and controlling rank collapse often relies on stringent assumptions that exclude degenerate or rank-deficient regimes \citep{fuIdentifiabilityNonnegativeMatrix2018, barbarinoRobustnessMinimumVolumeNonnegative2025}.

In this work, we address these challenges in the supervised setting with a response $Y$: we resolve the identifiability issue by developing a simplex-tailored sufficient dimension reduction (SDR), and handle the rank collapse via a novel asymptotic theory. Specifically, we define \textit{compositional SDR} for $X\in\Delta^{d-1}$ by the constrained conditional independence:
\begin{equation}\label{eq:compo-sdr} 
    Y\ind X \,|\, PX, \quad P\in\cm_{m,d}.
\end{equation}
This states that $PX$ retains all information in $X$ relevant for predicting $Y$ \citep{li1991sliced}, while the reduction matrix $P$ is constrained to be column-stochastic. The constraint $P\in\cm_{m,d}$ is essential and makes compositional SDR fundamentally depart from unconstrained Euclidean SDR: without it, the central subspace---the primary target of Euclidean SDR \citep{cook1998regression}---fails to exist for compositional predictors (see Section~\ref{sec:compositional-SDR}). In contrast, we prove that the constraint $P\in\cm_{m,d}$ recasts the classical SDR foundations within the native simplex geometry, defining a new identifiable objective called the \textit{central compositional subspace}.

To estimate this, we extend the principles of kernel dimension reduction (KDR) \citep{fukumizuKernelDimensionReduction2009}, which minimizes a reproducing kernel Hilbert space (RKHS)-based conditional dependence measure. Classical KDR asymptotic theory relies on re-embedding reduced data into the original Euclidean domain using semiorthogonal matrices, a construction with no direct analogue under simplex geometry. We thus reformulate KDR directly on the target simplex of reduction by deriving an equivalent form of the conditional dependence measure. The resulting method, compositional KDR (CKDR), optimizes this measure over the matrix domain $\cm_{m,d}$. 
Unlike prior KDR methods with semiorthogonal matrices, the polytopic geometry of $\cm_{m,d}$ further induces emergent sparsity in the optimized matrix, analogously to optimization over $\ell^1$-balls \citep{tibshirani1996regression}; these sparse patterns reveal underlying amalgamation structures and make the fitted matrix easier to interpret.

On this foundation, we make three additional contributions that establish CKDR as a practical and theoretically grounded method.
First, {we reveal an intrinsic predictive model underlying the CKDR objective, linked to vector-valued kernel ridge regression \citep{micchelli2005learning}. This model enables direct downstream prediction with strong empirical performance}, whereas earlier KDR methods were coupled with separate prediction procedures \citep{chen2017kernel, park2023kernel}. Second, we use this intrinsic predictive model to solve the nonconvex CKDR optimization via successive convex approximation (SCA) \citep{scutariParallelDistributedMethods2017}; in their original form, KDR objectives do not readily admit such approximations since the Gram matrix inversion involved obstructs typical kernel-linearization techniques \citep{allen2013automatic}.
Third, we close the remaining varying-rank challenge by developing a novel asymptotic theory for CKDR. Using a distance between subspaces of different dimensions \citep{yeSchubertVarietiesDistances2016}, we prove that the estimator's row space eventually contains the central compositional subspace and does not collapse below the target dimension.

Although compositional SDR offers a general formulation, obtaining an estimator by extending Euclidean SDR methods is difficult. Classical SDR methods often directly estimate the central subspace, which is undefined for compositional predictors. Optimization-based SDR approaches \citep{dongBriefReviewLinear2021} avoid this issue by targeting conditional independence directly, but their fixed-rank asymptotic analyses do not directly accommodate the varying-rank CDR domain $\cm_{m,d}$. 
Imposing a fixed rank on $\cm_{m,d}$ may seem simpler, but it destroys the compact convex structure of the full domain, complicating both estimation and theory. Our CKDR provides a principled solution to these obstacles, combining optimization over the full CDR domain with theory that controls its varying-rank behavior.

We previously developed a KDR-based method for variable selection of compositional data in \citet{park2023kernel}, which amalgamated unselected variables to preserve simplex geometry. Its discrete formulation, however, made the theory substantially simpler by avoiding the identifiability and varying-rank challenges inherent to the softened CDR framework. It also lacked the intrinsic predictive model and associated optimization strategy introduced in this paper. Nonetheless, its continuous relaxation-based algorithm substantially improved upon log-ratio-based variable selection methods, which motivated us to develop the present general, flexible yet interpretable framework with rigorous theoretical support.

{In summary, our primary novel contributions are: (i) a geometry-preserving, flexible, and interpretable dimension reduction framework for compositional data; (ii) a compositional SDR formulation with a new identifiable target; (iii) a CKDR estimator equipped with an intrinsic predictive model for downstream prediction and optimization; and (iv) a subspace-comparison argument establishing consistency under varying rank in the CDR domain $\cm_{m,d}$.}
Empirically, CKDR performs strongly on real microbiome data and yields interpretable low-dimensional representations that reveal biologically meaningful patterns (\cref{sec: experiments}). Together, this geometry-respecting dimension reduction approach paves the way for new interpretable insights into high-dimensional compositional data.

\section{Interpretability of the CDR framework}\label{sec:compo-dimred}

This section details the interpretability of the compositional dimension reduction (CDR) framework. \cref{sec:interpretability} elaborates its interpretation in connection with amalgamation, effectively illustrated through dual visualization in \cref{sec:visualization}.

\subsection{Interpretation as soft amalgamation}\label{sec:interpretability}

Amalgamation reduces the dimensionality of a composition $x = (x_1,\ldots,x_d)^\top\in\Delta^{d-1}$ by aggregating its components into $m \le d$ mutually exclusive and collectively exhaustive groups \citep{aitchison1986statistical}. This operation is expressed as 
$$Ax = \bigg({\sum_{j: A_j= e_1} x_j,\ldots, \sum_{j: A_j = e_m} x_j}\bigg)^\top\in\Delta^{m-1},$$
where $A = [A_1,\ldots, A_d]\in\cm_{m,d}$ is a binary, column-stochastic matrix. Each column $A_j$ is a standard basis vector $e_i\in\R^m$, which implements a \textit{hard assignment} of the component $x_j$ entirely to the $i$th aggregated variable (amalgam) $z_i$ in the reduced composition $z = Ax$ (solid arrows in \cref{fig: neural net}).

\begin{figure}
    \centering
    \footnotesize
    \begin{subfigure}{0.4\linewidth}
    \begin{tikzpicture}[
    shorten >=1pt, shorten <=1pt,
    legend/.style={font=\large\bfseries},
    scale = 1
  ]
    \foreach \i in {1,...,6}
    {
        \node[circle, draw, thick, fill=gray!12, minimum size=7mm] (in-\i) at (0,-\i*1) {$x_{\i}$};
    }
    \foreach \i in {1,...,4}
        \node[circle, draw, thick, fill=gray!12, minimum size=7mm] (out-\i) at (2.1,-\i*1 -1) {$z_{\i}$};

    \draw[->] (in-1) -- (out-1) node[midway, above=0.1] (x1) {\footnotesize $\!\!\!\!\!\quad +x_1{e}_1$};
    \draw[->] (in-2) -- (out-1);
    \draw[->] (in-4) -- (out-1);
    \draw[->] (in-3) -- (out-3);
    \draw[->] (in-5) -- (out-4) node[midway, above=-0.5mm] (x1) {\footnotesize $+x_5{e}_4$};
    \draw[->] (in-6) -- (out-4) node[midway, below=0.05] (x1) {\footnotesize $+x_6{e}_4$};

    \node (domain) at (0.07,0.1) {\large $\Delta^5$};
    \node (image) at (2.2, 0.1) {\large $\Delta^3$};
    \draw[->] ([yshift=-1.5mm] domain) -- (image) node[midway, above=0] (A) {$A$};

    \node[right=0 of out-1] {$=x_1+x_2+x_4$};
    \node[right=0 of out-2] {$=0$};
    \node[right=0 of out-3] {$=x_3$};
    \node[right=0 of out-4] {$=x_5+x_6$};

    \end{tikzpicture}
    \end{subfigure}
    $\qquad $\begin{subfigure}{0.2\linewidth}
    \begin{tikzpicture}[
        shorten >=1pt, shorten <=1pt,
        legend/.style={font=\large\bfseries},
        scale = 1
      ]
    \foreach \i in {1,...,6}
    {
        \node[circle, draw, thick, fill=gray!15, minimum size=7mm] (in-\i) at (0,-\i*1) {$x_{\i}$};
    }
    \foreach \i in {1,...,4}
        \node[circle, draw, thick, fill=gray!15, minimum size=7mm] (out-\i) at (2.1,-\i*1 -1) {$z_{\i}$};

    \foreach \i in {1,6}{
        \foreach \j in {1,...,4}{
            \draw[->, dashed, darkgray] (in-\i) -- (out-\j); % gray!40 etc...
        }
    }
    \node (dots) at (0.8, -3.4) {\large $\vdots$};
    \draw[dashed, darkgray] (in-1) -- (out-1) node[midway, above=0.05] (x1) {\footnotesize\textcolor{black}{$\!\!\!\quad+x_1P_1$}};
    \draw[dashed, darkgray] (in-6) -- (out-4) node[midway, below=0.05] (x6) {\footnotesize\textcolor{black}{$\!\!\!\quad+x_6P_6$}};

    \node (domain) at (0.07, 0.1) {\large $\Delta^5$};
    \node (image) at (2.2, 0.1) {\large $\Delta^3$};
    \draw[->] ([yshift=-1.5mm] domain) -- (image) node[midway, above=0] (A) {$P$};

    \end{tikzpicture}
    \end{subfigure}
    \spacingset{1}
    \caption{Illustration of compositional dimension reduction from $x\in\Delta^5$ to $z\in\Delta^3$. 
    \textit{Left}: Amalgamation $z = (x_1+x_2+x_4, 0, x_3, x_5+x_6)$  induced by a binary matrix $A = [e_1, e_1, e_3, e_1, e_4, e_4]\in\cm_{4,6}$ assigns each $x_j$ to a single component $z_i$, as depicted by the solid arrows.
    \textit{Right}: Soft amalgamation via a CDR matrix $P\in\cm_{4, 6}$ allocates each $x_j$ to multiple components according to the weights in  $P_j$, as depicted by the dashed arrows.
    }
    \label{fig: neural net}
\end{figure}
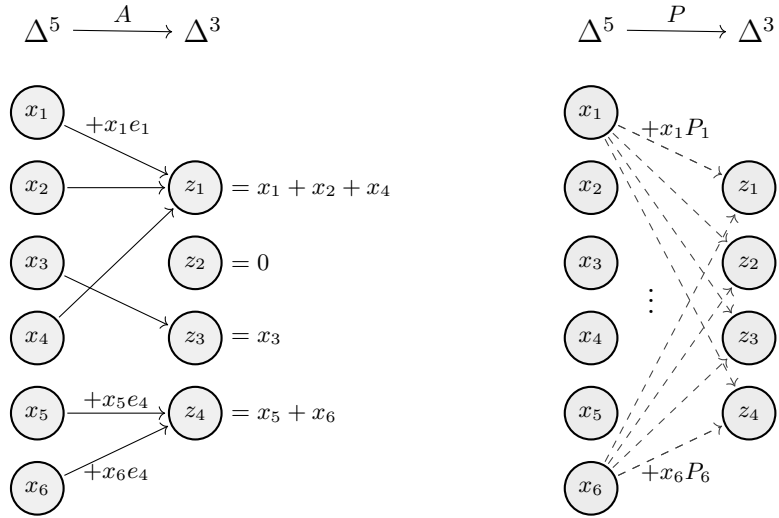

We view CDR as a relaxation of amalgamation that permits soft allocations. For a general column-stochastic matrix $P=[P_1,\ldots,P_d]\in\cm_{m,d}$, the CDR of a composition $x$ is $z:= Px \in\Delta^{m-1}$, for which each column $P_j\in \Delta^{m-1}$ acts as a weight vector, distributing the mass of $x_j$ across the $m$ components of the reduced composition (dashed arrows in \cref{fig: neural net}). Each entry $z_i$ of $z$ becomes a weighted sum of the original variables---a \textit{soft amalgam}---retaining the interpretability of amalgamation while providing greater flexibility.

The linear structure of CDR also enables a relative interpretation of effect sizes for changes in $x$, similar to recent work on transformation-free linear regression with compositions \citep{liItsAllRelative2023, fikselTransformationfreeLinearRegression2022}. Because variables of $x$ cannot vary independently, any change from $x$ to $x'$ is a zero-sum vector $\alpha = x' - x \in\R^d$, meaning any increases in one component must be offset by decreases in others. The induced change in $z$ is given by the linear contrast $P\alpha = \alpha_1 P_1 + \cdots + \alpha_d P_d$. For instance, if $x_j$ increases by $\delta$ while $x_k$ decreases by $\delta$, then $P\alpha = \delta (P_j - P_k)$, indicating that $z$ moves along the direction $P_j - P_k$ within the target simplex $\Delta^{m-1}$. If $P_j = P_k$, any such change preserving $x_j + x_k$ leaves the reduction $z$ unchanged, implying that $z$ views $x_j$ and $x_k$ as amalgamated into a single component. Thus, under this relative perspective, CDR encodes amalgamation not only through binary columns but also through similarity of columns. Because columns may occupy any point of the simplex---unlike hard amalgamation, which restricts them to vertices---the representation becomes substantially more flexible; even very low-dimensional CDRs, like projections to $\Delta^1$, can still capture arbitrary amalgamation patterns.
These simplex-native interpretations are difficult to recover in transformation-based approaches, because the zero-sum vector $\alpha = x' -x$ does not translate transparently into the transformed coordinates.

Sparsity further sharpens the link between CDR and amalgamation: (i) \emph{individual sparsity}, where each column $P_j$ contains many zeros, meaning $x_j$ influences only a few components of $z$; and (ii) \emph{equi-sparsity}, where identical columns reveal latent amalgamation structures. The geometry of the simplex, notably, connects these two types of sparsity, as strong individual sparsity tends to cluster columns near simplex vertices (Figure~\ref{fig: comparison with tomassi}B).

\subsection{Dual visualization of CDR}\label{sec:visualization}

The interpretability of CDR can be made visible through \textit{dual visualization} when the target dimension is $m=3$: the reduced compositions $z = Px$ are visualized on one ternary plot, while the columns $P_j \in \Delta^2$ are displayed on a second ternary plot, which we refer to as the variable allocation plot. As an example, we consider the Human Microbiome Project dataset from \citet{tomassiSufficientDimensionReduction2021}, consisting of $n = 681$ samples and $d = 23$ taxa collected from five body sites/specimen types: \texttt{nasal}, \texttt{saliva}, \texttt{skin}, \texttt{stool}, and \texttt{vagina}.

Figures \ref{fig: comparison with tomassi}A-B show the CDR result using a matrix $P \in \mathcal{M}_{3, 23}$ obtained by our method (\cref{sec:compositional-kdr}). The left panel displays the projected data, where the five classes are moderately separated. For example, high values of $z_1$, $z_2$, and $z_3$ are associated with \texttt{vagina}, \texttt{stool}, and \texttt{skin}/\texttt{nasal} samples, respectively. The middle panel shows the variable allocation plot, where most columns $P_j$ lie along the boundary of the simplex, indicating that each variable $x_j$ contributes primarily to one or two components of $z$. Notably, $P_9$ lies near the vertex $z_1$, identifying $x_9$ as particularly abundant in \texttt{vagina}. Similarly, seven columns cluster near $z_2$ (linked to \texttt{stool}) and seven near $z_3$ (linked to \texttt{skin}/\texttt{nasal}). The five columns near the middle of the left edge---$P_{4}, P_{7}, P_{10}, P_{12},$ and $P_{23}$---reflect an even distribution between $z_2$ and $z_3$, implying low abundance in \texttt{vagina} but limited utility in discriminating between the other classes.

\section{Compositional Sufficient Dimension Reduction}\label{sec:compositional-SDR}

While the CDR framework yields flexible and interpretable simplex-to-simplex reductions, its principled use must address the associated challenges: a CDR matrix is not identifiable, and the domain $\cm_{m,d}$ permits rank collapse. This section resolves the identifiability issue by extending sufficient dimension reduction (SDR) to the simplex geometry; Section~\ref{sec:consistency} addresses rank control through asymptotic theory for the estimator introduced in Section~\ref{sec:compositional-kdr}.

Traditionally, SDR is defined via the conditional independence relation $Y \ind X\,|\, BX,$ where $Y$ is a response, $X\in \R^d$ is a predictor, and $B\in\R^{m\times d}$ with $m\le d$ is a reduction matrix, indicating that $BX$ captures all information in $X$ that is relevant to predicting $Y$. As $Y\ind X\,|\, (QB)X$ holds for any invertible matrix $Q\in\mathbb{R}^{m\times m}$, the row space $\row(B)$ defines an equivalence class known as an SDR subspace. If the intersection of all SDR subspaces remains an SDR subspace, it is called the \emph{central subspace} \citep{cook1998regression}, which provides an identifiable target and has been the main focus of SDR approaches.

However, when $X$ is compositional, this structure breaks down: the central subspace does not exist
since the intersection of SDR subspaces always collapses to zero, making most existing SDR methods inapplicable. The following lemma formalizes this observation:
\begin{lemma}\label{lemma:zero-central-subs}
    For any response $Y$ and compositional predictor $X\in\Delta^{d-1}$, the intersection of SDR subspaces is always the zero subspace, thus does not form an SDR subspace.
\end{lemma}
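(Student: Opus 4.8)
The plan is to exploit the single algebraic feature that distinguishes compositional predictors, namely $\1^\top X = 1$ almost surely, to show that the family of SDR subspaces is so large that its members can share only the origin. The crucial observation is a \emph{shift invariance}: since $\1^\top X$ is constant, for any $B\in\R^{m\times d}$ and any $v\in\R^m$ we have $(B+v\1^\top)X = BX + v$, so $BX$ and $(B+v\1^\top)X$ generate the same $\sigma$-algebra. Hence the conditional independence $Y\ind X\mid BX$ is unchanged when we add any multiple of $\1$ to the rows of $B$; equivalently, whether $\row(B)$ is an SDR subspace depends only on $\row(B)+\spn\{\1\}$. I would record this as the first step, verifying $\sigma(BX)=\sigma(\{w^\top X : w\in\row(B)+\spn\{\1\}\})$ directly from $\1^\top X = 1$, which needs no distributional assumptions on $X$.

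Next I would fix a minimal SDR subspace $S_*$, which exists because subspace dimensions are bounded below and $\R^d=\row(I_d)$ is always an SDR subspace. Working in the nondegenerate regime $Y\not\ind X$ (so that $\{0\}$ is not an SDR subspace and $\dim S_* = k\ge 1$), the key structural claim is that $\1\notin S_*$. Indeed, if $\1\in S_*$, I could complete $\{\1\}$ to a basis of $S_*$ and discard it, obtaining a subspace $S'$ of dimension $k-1$ with $S'+\spn\{\1\} = S_* = S_*+\spn\{\1\}$; shift invariance would then force $S'$ to be an SDR subspace as well, contradicting minimality. This step is the heart of the argument and the one I expect to be the main obstacle, since it is precisely where the compositional constraint creates the pathology: the ``free'' direction $\1$ can always be peeled off, so no minimal SDR subspace can absorb it.

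With $\1\notin S_*$ in hand, the collapse of the intersection follows by a short linear-algebra computation. Writing a basis $b_1,\dots,b_k$ of $S_*$, for each $i$ I replace $b_i$ by $b_i+\1$ to get $S_i=\spn\{b_1,\dots,b_i+\1,\dots,b_k\}$; since $S_i+\spn\{\1\}=S_*+\spn\{\1\}$, shift invariance makes each $S_i$ an SDR subspace. Because $\1\notin S_*$, one checks that $S_*\cap S_i = \spn\{b_j : j\ne i\}$, the hyperplane of $S_*$ omitting $b_i$. Intersecting over $i=1,\dots,k$ then gives $S_*\cap\bigcap_{i=1}^k S_i = \bigcap_{i=1}^k \spn\{b_j : j\ne i\} = \{0\}$. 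As every subspace in this finite list is an SDR subspace, the intersection of all SDR subspaces is contained in $\{0\}$, hence equals $\{0\}$.

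Finally I would close by noting that $\{0\}$ is not itself an SDR subspace in this regime: $\row(0)=\{0\}$ corresponds to conditioning on a constant, so sufficiency would force $Y\ind X$, contrary to assumption. Thus the intersection of SDR subspaces fails to be an SDR subspace and no central subspace exists. I would remark that the excluded case $Y\ind X$ is genuinely exceptional, where $\{0\}$ is the central subspace, and that the entire argument rests only on $\1^\top X = 1$, requiring none of the distributional regularity (e.g.\ elliptical symmetry) invoked in classical SDR.
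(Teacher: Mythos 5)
Your proof is correct, but it takes a genuinely different route from the paper's. The paper's argument is a two-line explicit construction: for each $j$, the coordinate-deletion matrix $B_{-j} = (e_1,\ldots,e_{j-1},e_{j+1},\ldots,e_d)^\top$ is an SDR matrix because the unit-sum constraint gives $X_j = 1 - \sum_{k\neq j}X_k$, so $\sigma(B_{-j}X)=\sigma(X)$ and $Y \ind X \mid B_{-j}X$ holds trivially; the row spaces $\spn\{e_k : k\neq j\}$ then intersect to $\{0\}$ as $j$ ranges over $1,\ldots,d$. You instead isolate the structural mechanism behind this: the shift invariance $\sigma(BX)=\sigma((B+v\1^\top)X)$ coming from $\1^\top X = 1$, which you then exploit through a minimal SDR subspace $S_*$ (showing $\1\notin S_*$ and perturbing its basis vectors by $\1$ to manufacture SDR subspaces whose mutual intersection with $S_*$ is zero). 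Each step of yours checks out: the $\sigma$-algebra identification, the exclusion of $\1$ from $S_*$ by minimality, and the computation $S_*\cap S_i = \spn\{b_j : j\neq i\}$ (which uses $\1\notin S_*$ exactly where needed). The paper's proof buys brevity and an explicit exhibit of the offending SDR subspaces; note that it is in fact a special case of your principle, since each coordinate hyperplane $H_j$ satisfies $H_j + \spn\{\1\} = \R^d$. Your proof buys a cleaner conceptual explanation (the direction $\1$ is informationally free, so no minimal SDR subspace can retain it) and, unlike the paper, explicitly handles the degenerate case $Y\ind X$, where the intersection is still $\{0\}$ but \emph{is} an SDR subspace---a caveat the lemma's statement and the paper's proof both pass over silently.
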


This degeneracy stems from the affine structure of the simplex $\Delta^{d-1}$. Since $\1_d^\top X = 1$, where $\1_d = (1,\ldots,1)^\top$, deleting any coordinate $X_j$ loses no information about $X$; hence the corresponding coordinate-deletion matrix $B_{-j}$ is an SDR matrix, 
while $\cap_{j=1}^d\row(B_{-j})=0$ (see Appendix~\ref{sec:supp-sdr1} for full proof).
We show that this problem can be overcome by restricting the dimension reduction matrices to CDR matrices, whose row spaces necessarily contain $\1_d$. This leads to the constrained SDR framework defined below.

\begin{definition}[Compositional SDR]\label{def:compo-sdr}
    Let $X = (X_1,\ldots,X_d)^\top\in\Delta^{d-1}$ be a random compositional vector, and let $Y$ be a random response variable defined in a domain $\cy$. If 
    \begin{equation*}
        Y \ind X \,|\,PX, \qquad P\in\cm_{m,d},
    \end{equation*}
    where $m\le d$, we call $PX$ a \textit{compositional SDR} (CSDR) and $P$ a CSDR matrix.
\end{definition}

A weaker sufficiency at the conditional mean level can also be defined \citep{cook2002dimension}: if $\E[Y|X] = \E[Y|PX]$, $P\in\cm_{m,d}$, we call $PX$ a \emph{CSDR for conditional mean}. 
As in the traditional setting, multiple matrices can define the same CSDR, since any invertible matrix $Q\in\cm_{m,m}$ satisfying $QP\in\cm_{m,d}$ yields an equivalent reduction. For identifiability of CSDR, we focus on the row space $\row(P)$.
A subspace of $\mathbb{R}^d$ is called a CSDR subspace if it is the row space of a CSDR matrix. The following proposition shows that, under mild conditions on the distribution of $X$ on the simplex $\Delta^{d-1}$, intersections of CSDR subspaces remain CSDR subspaces, thus avoiding the degeneracy issue described in Lemma~\ref{lemma:zero-central-subs}.

\begin{proposition}\label{prop:intersection-csdr}
    Suppose that $X$ admits a density on $\Delta^{d-1}$ supported on a convex set with nonempty interior in $\Delta^{d-1}$. Then, the intersection of any collection of CSDR subspaces is itself a CSDR subspace.
\end{proposition}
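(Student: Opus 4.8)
The plan is to exploit the one structural feature that distinguishes CSDR subspaces from ordinary SDR subspaces: because every $P\in\cm_{m,d}$ is column-stochastic, $\mathbf{1}_m^\top P=\mathbf{1}_d^\top$, so $\mathbf{1}_d\in\row(P)$. Hence every CSDR subspace contains the all-ones vector, and so does any intersection $\cs_*=\bigcap_\alpha\cs_\alpha$. This immediately rules out the degeneracy of \cref{lemma:zero-central-subs} (in particular $\cs_*\neq\{0\}$) and is exactly the hypothesis needed for realizability. I would first record the elementary lemma that a subspace $\cs\subseteq\R^d$ equals $\row(P)$ for some $P\in\cm_{m,d}$ with $m\le d$ if and only if $\mathbf{1}_d\in\cs$. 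For sufficiency, take a matrix $B$ whose rows form a basis of $\cs$, add large multiples of $\mathbf{1}_d^\top$ (which lies in $\cs$) to its rows so that all entries become positive, rescale by the largest column sum to get $\hat B$ with column sums at most $1$, and append the single nonnegative deficit row $\mathbf{1}_d^\top-\mathbf{1}_k^\top\hat B$; the result is column-stochastic and its row space is still $\cs$ (the extra row is $\mathbf{1}_d^\top$ minus a row-space element). This realizes $\cs_*$ as $\row(P)$ for a genuine CSDR matrix $P$.

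Next I would reduce to the pairwise case. Since any intersection of subspaces of $\R^d$ equals a finite sub-intersection, it suffices to show that $T_1\cap T_2$ is a CSDR subspace whenever $T_1=\row(P_1)$ and $T_2=\row(P_2)$ are, and then induct. Fixing regular conditional distributions, let $g(x)=\mathrm{Law}(Y\mid X=x)$; the defining relation $Y\ind X\,|\,P_iX$ is equivalent to $g$ being $\P_X$-a.e.\ a function of $P_ix$, i.e.\ of the values $\{\langle r,x\rangle:r\in T_i\}$. Choosing a basis of $\R^d$ adapted to the flag $\cs_*=T_1\cap T_2\subseteq T_1,T_2\subseteq T_1+T_2$ (using $A_0,B_0$ complementing $\cs_*$ inside $T_1,T_2$, which satisfy $A_0\cap B_0=0$), write the coordinates of $x$ as $(u_\cs,u_A,u_B,u_C)$. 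In these coordinates the two hypotheses read $g=G_1(u_\cs,u_A)=G_2(u_\cs,u_B)$ a.e., and the goal is $g=G(u_\cs)$, which then certifies $Y\ind X\,|\,PX$ with $\row(P)=\cs_*$.

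The analytic core is to deduce $g=G(u_\cs)$ from $g=G_1(u_\cs,u_A)=G_2(u_\cs,u_B)$, and this is where the hypothesis that $X$ has a density on a convex set with nonempty interior enters. Heuristically, at a fixed $u_\cs$ a function that depends only on $u_A$ and only on $u_B$ must be constant in both on a connected slice; convexity of the support makes each such slice convex, hence connected, while the nonempty-interior condition guarantees the support is full-dimensional within the simplex's tangent space $\mathbf{1}^\perp$, so no coordinate block is spuriously absent. I would make this rigorous by a Fubini argument over the convex support: positivity of the density on a convex set is precisely the classical sufficient condition for the intersection property of conditional independence, which lets the two $\P_X$-a.e.\ factorizations be composed without null sets obstructing the argument. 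I expect this measure-theoretic gluing—upgrading the two almost-everywhere factorizations into a single factorization through $u_\cs$ over the convex, full-dimensional support—to be the main obstacle; by comparison, the containment of $\mathbf{1}_d$, the realizability lemma, and the reduction to two subspaces are routine bookkeeping.
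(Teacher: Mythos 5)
Your proposal is correct and takes essentially the same route as the paper: its proof likewise rests on a realizability lemma (\cref{lemma:1subspace}, proved by adding large multiples of $\1_d$ to a basis of $V\cap\1_d^\perp$ and normalizing, a construction equivalent to your slack-row one), followed by an adaptation of Cook's classical convex-support intersection argument to the geometry relative to the simplex. The only difference is presentational: the paper abstracts your connectivity-of-convex-slices gluing into the ``$M$-set'' (stairway) condition of Yin et al., verifies that a convex support with nonempty relative interior implies it, and, like you, defers the remaining measure-theoretic details to the classical SDR proof (Cook's Proposition 6.4).
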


In Appendix~\ref{sec:supp-central-compo-subsp}, we prove this result under even milder conditions. The proof largely parallels classical SDR arguments, with the main additional challenge being to show the existence of a CDR matrix whose row space equals the intersection of CSDR subspaces. The following lemma provides this existence, a key fact that underlies compositional SDR and our estimator's asymptotic guarantees (\cref{sec:consistency}).

\begin{lemma}\label{lemma:1subspace}
    Let $V$ be a subspace of $\R^d$ with $\dim V= m$ and $\1_d\in V$. Then, there exists a CDR matrix $P\in\cm_{m,d}$ with $\row(P)=V$.
\end{lemma}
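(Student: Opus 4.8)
The plan is to realize $V$ as the row space of an \emph{explicitly perturbed} column-stochastic matrix, exploiting that $\1_d$ is a strictly positive vector lying inside $V$. The first step is to restate the constraint $P\in\cm_{m,d}$ in terms of the rows $r_1,\ldots,r_m\in\R^d$ of $P$: the matrix is column-stochastic with $\row(P)=V$ if and only if the $r_i$ are (a) componentwise nonnegative, (b) linearly independent, and (c) satisfy $\sum_{i=1}^m r_i=\1_d$ with each $r_i\in V$. Indeed, condition (c) forces every column of $P$ to sum to one (so its entries automatically lie in $[0,1]$), while (b) together with $\dim V=m$ upgrades ``$r_i\in V$'' to ``$\spn\{r_1,\dots,r_m\}=V$''.

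Next I would fix a basis of $V$ of the form $\{\1_d,v_2,\ldots,v_m\}$, which exists since $\1_d\in V\setminus\{0\}$. The naive choice $r_1=\cdots=r_m=\tfrac1m\1_d$ meets (a) and (c) but has rank one, so the real work is to perturb it into full rank without losing nonnegativity. To this end, set
\[
  r_1=\tfrac{1}{m}\1_d-\epsilon\sum_{i=2}^{m}v_i,\qquad r_i=\tfrac{1}{m}\1_d+\epsilon v_i\ \ (i=2,\ldots,m),
\]
for a small parameter $\epsilon>0$. By construction each $r_i\in V$ and $\sum_{i=1}^m r_i=\1_d$, so (c) holds for free.

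It then remains to verify (a) and (b). Nonnegativity is the easy half: since $\tfrac1m\1_d$ is strictly positive and only finitely many bounded perturbations are added, every $r_i$ is strictly positive once $\epsilon$ is sufficiently small, so all entries lie in $(0,1)$. Linear independence is where the construction is designed to succeed for \emph{every} $\epsilon\neq0$: writing the $r_i$ in the coordinates of $\{\1_d,v_2,\ldots,v_m\}$ gives an arrowhead-type matrix $\bigl(\begin{smallmatrix} a & b^\top\\ c & D\end{smallmatrix}\bigr)$ with $a=\tfrac1m$, $b^\top=(-\epsilon,\ldots,-\epsilon)$, $c=(\tfrac1m,\ldots,\tfrac1m)^\top$, and $D=\epsilon I_{m-1}$; the Schur-complement identity $\det\bigl(\begin{smallmatrix} a & b^\top\\ c & D\end{smallmatrix}\bigr)=\det(D)\,(a-b^\top D^{-1}c)$ evaluates its determinant to $\epsilon^{m-1}\neq0$. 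Hence the rows span $V$, and stacking $r_1,\ldots,r_m$ as the rows of $P$ yields $P\in\cm_{m,d}$ with $\row(P)=V$.

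Conceptually, the entire argument is the statement that $\1_d$ lies in the \emph{relative interior} of the nonnegative cone $C=V\cap\R^d_{\ge0}$, which makes $C$ full-dimensional in $V$; the explicit perturbation merely extracts $m$ independent nonnegative generators summing to $\1_d$. The only genuine obstacle is the tension between conditions (a) and (b): nonnegativity pushes toward the single interior point $\tfrac1m\1_d$, whereas full rank demands spreading out in all $m-1$ complementary directions. The interiority of $\1_d$ is precisely what resolves this, since its strict positivity provides the slack to perturb in every direction $v_i$ at once while remaining inside the cone.
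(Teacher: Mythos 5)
Your proof is correct and takes essentially the same approach as the paper: both constructions perturb the rank-one candidate whose rows all equal $\tfrac1m\1_d$ by directions spanning a complement of $\1_d$ in $V$, arranged to sum to zero so that the columns remain stochastic, and both use the strict positivity of $\1_d$ to retain nonnegativity (the paper's shift $v_i = u_i + N\1_d$ with large $N$, followed by division by $mN$, is exactly your small-$\epsilon$ perturbation). The only cosmetic differences are that the paper draws its perturbation directions from the zero-sum subspace $V\cap\{x\in\R^d:\sum_j x_j=0\}$ and argues linear independence via affine independence, whereas you complete $\1_d$ to an arbitrary basis of $V$ and verify independence by a Schur-complement determinant.
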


For the remainder of the paper, we assume that the conclusion of Proposition~\ref{prop:intersection-csdr} holds. This ensures that the following compositional analogue of the minimal equivalence class of CSDR matrices is well-defined:
\begin{definition}\label{def:central-subsp}
    The \textit{central compositional subspace} is defined as the intersection of all CSDR subspaces and is denoted by $\mathcal{C}_{Y|X}$. 
\end{definition}

An analogous construction, obtained by replacing CSDR with CSDR for conditional mean, yields the {central mean compositional subspace} $\mathcal{C}_{\E[Y|X]}$, which is a subspace of $\mathcal{C}_{Y|X}$. As both subspaces contain $\1_d$, their dimensions are at least two unless $X\ind Y$.

We illustrate using the relative-shift linear model \citep{liItsAllRelative2023}: $Y = \sum_{j=1}^d \beta_jX_j +\varepsilon,$  where $X\in \Delta^{d-1}$, $Y\in\R$, and $\varepsilon \ind X$. Assume $\beta_1\leq\cdots\leq \beta_d $ without loss of generality and $\beta_1\neq \beta_d$ to avoid the independence between $Y$ and $X$. Define, for each $j=1, \ldots, d$,
\begin{equation}\label{eq:rs-es}
    P_j = \dfrac{\beta_d-\beta_j}{\beta_d-\beta_1} (1, 0)^\top + \dfrac{\beta_j-\beta_1}{\beta_d-\beta_1} (0, 1)^\top 
\end{equation}
and let $P = (P_1,\ldots, P_d)\in\cm_{2,d}$. Then, $Y = (\beta_1-\beta_d, 0)PX + \beta_d + \varepsilon$, establishing the SDR relations $Y\ind X \,|\, PX$ and $\E[Y|X] = \E[Y|PX]$. Since $X$ and $Y$ are not independent and $\E[Y|X]$ is non-constant, the dimension of $\row(P)$ is minimal among the CSDR subspaces. Therefore, we conclude that $\row(P) = \mathcal{C}_{Y|X} = \mathcal{C}_{\E[Y|X]}. $

\begin{remark}\label{rmk:suff-amalgam}
    Under sparsity, restricting CSDR matrices to binary matrices yields the analogous notions of \textit{sufficient amalgamation} and \textit{central amalgamation subspace}. In Appendix~\ref{sec:supp-suff-amalg},
    we prove the equivalence between sufficient amalgamation and column-wise equi-sparsity in a CSDR matrix, aligning with the discussion in Section~\ref{sec:interpretability}. This link offers a natural interpretation: clusters of nearly identical columns in the estimated CSDR matrix indicate an underlying sufficient amalgamation, so variables within each cluster can be merged without loss of predictive information (bubbles in Figure~\ref{fig: comparison with tomassi}B).
\end{remark}

\section{Estimation of compositional SDR}\label{sec:compositional-kdr}

This section develops compositional KDR (CKDR), an estimator of compositional SDR that optimizes a kernel-based conditional dependence measure directly over the CDR domain $\cm_{m,d}$ (Sections~\ref{sec:conditional-covariance} and \ref{sec:ckdr-method}). We then show that the CKDR objective is equivalent to a vector-valued kernel ridge regression problem, yielding an intrinsic predictive model (Section~\ref{sec: vv-krr}) and enabling us to propose a successive convex approximation (SCA) algorithm for the nonconvex empirical objective (\cref{sec:optimization-sca}).

\subsection{KDR Criterion via Conditional Covariance Operator}\label{sec:conditional-covariance}

Let $k$ be a positive definite kernel function on a generic domain $\cz$, with the associated reproducing kernel Hilbert space (RKHS) $\ch$, satisfying $k(z, \cdot)\in\ch$ and $\langle f, k(z, \cdot)\rangle_\ch = f(z)$ for all $z\in\cz$ and $f\in\ch$; we use $\langle \cdot, \cdot\rangle_\ch$ and $\|\cdot\|_\ch$ to denote inner product and norm of $\ch$. In this paper, we assume that $k$ is continuous and satisfies $\E_{Z\sim\Q}[k(Z, Z)] < \infty $ for all probability measures $\Q$ on $\cz$. This ensures that $\ch$ is continuously embedded in $L^2(\Q)$, which in turn guarantees the boundedness of the covariance operators introduced below.

A kernel $k$ is said to be \emph{characteristic} if the space $\ch + \R$ is dense in $L^2(\Q)$ for all probability measures $\Q$ on $\cz$. When $\cz$ is compact, we say $k$ is \emph{universal} if $\ch$ is dense in $C(\cz)$, the space of continuous functions on $\cz$ equipped with the uniform norm. It is known that universal kernels are characteristic \citep[Lemma 1]{gretton2012kernel}.

Let $\cx\subseteq\R^d$ denote the domain of predictors and $\cy$ the domain of responses, where $\cy$ is assumed to be a separable metric space. Let $k_\cx: \cx\times\cx\to\R$ and $k_\cy: \cy\times\cy\to\R$ be positive definite kernels with associated RKHSs $\ch_\cx$ and $\ch_\cy$, respectively. Consider a joint random vector $(X, Y)\in\cx\times \cy$, and denote the marginal distributions by $\P_X$ and $\P_Y$. 

The \emph{cross-covariance operator} $\s_{YX}:\ch_\cx\to\ch_\cy$ is defined as a linear operator satisfying
\begin{align}\label{eq: cov-def}
    \begin{split}
        \langle g, \Sigma_{YX}f \rangle_{\ch_\cy} 
    &= \cov \left[f(X), g(Y)\right], \quad \forall f\in\ch_\cx,\ \forall g\in\ch_\cy.
    \end{split}
\end{align}
By definition, its adjoint satisfies $(\s_{YX})^* = \s_{XY}$. When $X = Y$, we write $\s_{XX}$ to denote the covariance operator of $X$. These operators admit a \textit{correlation operator} representation \citep{baker1973joint}: there exists a unique bounded operator $V_{YX}:\ch_\cx\to\ch_\cy$ such that
\begin{equation}\label{eq: correlation}
    \Sigma_{YX} = \Sigma_{YY}^{1/2}V_{YX}\Sigma_{XX}^{1/2}, \quad \|V_{YX}\|\leq 1,\ \text{ and }\ V_{YX} = \Pi_{\clo{\ran}(\Sigma_{YY})}V_{YX}\Pi_{\clo{\ran}(\Sigma_{XX})}, 
\end{equation}
where $\|\cdot\|$ denotes the operator norm, $\clo\ran(\cdot)$ the closure of the range, and $\Pi_W$ the orthogonal projection onto a subspace $W$. 

The \emph{conditional covariance operator} on $\ch_\cy$ is defined as
\begin{equation}\label{eq: condcov-def}
    \s_{YY|X} = \s_{YY} - \s_{YY}^{1/2}V_{YX}V_{XY}\s_{YY}^{1/2}. 
\end{equation}
When $\s_{XX}$ is invertible, this reduces to the familiar expression $\s_{YY} - \s_{YX}\s_{XX}^{-1}\s_{XY}$. For any $g\in\ch_\cy$, the operator $\s_{YY|X}$ characterizes the minimum residual variance: 
\begin{align}
    \langle g, \s_{YY|X} g \rangle_{\ch_\cy} &= \inf_{f\in\ch_\cx} \var(g(Y) - f(X)). \label{eq:residual} 
\end{align}
In particular, if $\{g_i\}_{i=1}^\infty$ is a complete orthonormal system (CONS) of $\ch_\cy$, then the trace $\Tr(\s_{YY|X}) = \sum_{i=1}^\infty\langle g_i, \s_{YY|X} g_i\rangle_{\ch_\cy}$ aggregates such variance over all directions in $\ch_\cy$.

We then consider a target domain $\cz\subseteq\R^m$ with $m\le d$, representing a reduced-dimensional space. Let $k_\cz$ be a kernel defined on $\cz$. For any measurable map $p:\cx\to\cz$, the induced random vector $p(X)$, together with the kernel $k_\cz$, gives rise to operators analogous to those defined for $X$: the cross-covariance operator $\s_{Y p(X)}$, the correlation operator $V_{Y p(X)}$, their adjoints, and the conditional covariance operator $\s_{YY|p(X)}$.

The following proposition establishes the fundamental link between conditional covariance operators and SDR, providing the theoretical basis for the KDR framework.

\begin{proposition}\label{thm:sdr-population-guarantee}
    Suppose that $\ch_\cx$ is dense in $L^2(\P_X)$ (e.g., when $k_\cx$ is universal), and let $p:\cx\to\cz$ be a measurable map. Then,
    $$\s_{YY|p(X)} \succeq \s_{YY|X}, $$
    where $\succeq$ denotes the positive semi-definite order on self-adjoint operators. Moreover, if the kernel $k_\cz$ is characteristic, the following statements hold: 
    \begin{enumerate}
        \item[(i)] If $k_\cy$ is also characteristic, then $\,\s_{YY|p(X)} = \s_{YY|X} \iff Y\ind X\mid p(X).$
        \item[(ii)] If $\cy\subset\ch$ for some separable Hilbert space $\ch$, and $k_\cy$ is the linear kernel $\langle \cdot , \cdot \rangle_\ch$, then
        $\s_{YY|p(X)} = \s_{YY|X} \iff \E[Y|X] = \E[Y|p(X)] \text{ a.s.} $
    \end{enumerate}    
\end{proposition}

Part (i) of Proposition~\ref{thm:sdr-population-guarantee} broadens the scope of the SDR characterization of \citet{fukumizuKernelDimensionReduction2009}, developed for $p$ in the Stiefel manifold $\St(m,d)=\{B\in\R^{m\times d}: BB^\top = I_m\}$, to arbitrary measurable maps $p$. Part (ii) further extends this characterization to SDR for conditional mean; for Euclidean responses, this is practically useful since fixing $k_\cy$ to a linear kernel eliminates kernel selection on the response domain. The operator inequality implies $\Tr(\s_{YY|p(X)}) \geq \Tr(\s_{YY|X})$, with equality if and only if $\s_{YY|p(X)} = \s_{YY|X}$. Hence, the trace $\Tr(\s_{YY|p(X)})$ serves as a natural loss function for SDR.

\begin{remark}\label{rmk:re-embedding}
    We construct the operators differently from earlier KDR work to circumvent orthogonality-specific arguments. Prior approaches use a pullback kernel $k_\cx^p(x, x') = k_\cz(p(x), p(x'))$ on the original domain $\cx$, which induces an RKHS $\ch_\cx^p$ and associated cross-covariance and correlation operators $\s_{YX}^p$ and $V_{YX}^p$, yielding another conditional covariance operator $\s_{YY|X}^p$. With semiorthogonal matrices $B\in\St(m, d)$, theoretical analysis of this pullback formulation proceeds by re-embedding the reduced data $BX$ into $B^\top BX\in\cx$. However, this mechanism fails for general reductions, including our CDR setting: $P^\top PX \notin \Delta^{d-1}$ for general $P\in\cm_{m,d}$ and $X\in\Delta^{d-1}$, hindering a direct extension of KDR beyond the Stiefel manifold. Instead, we build all operators using the RKHS $\ch_\cz$ on the \textit{target domain} $\cz$, yielding the operator $\s_{YY|p(X)}$ that avoids re-embedding. These two constructions differ subtly---$\ch_\cx^p$ is not isomorphic to $\ch_\cz$ when $p:\cx\to\cz$ is not surjective, a situation common for matrices in $\cm_{m,d}$. We show, nonetheless, in Appendix~\ref{sec:supp-kdr-pullback}
    that they carry equivalent distributional information; in particular,  $\s_{YY|X}^p = \s_{YY|p(X)}$ (Lemma~\ref{lemma:condcov-equality}). This target-domain formulation thus preserves equivalent computational properties while facilitating asymptotic analysis on domains with varying matrix rank (Section~\ref{sec:consistency}).
\end{remark}

\subsection{Compositional KDR}\label{sec:ckdr-method}

We now introduce the compositional KDR (CKDR) method for estimating compositional SDR (CSDR) matrices. Let $X\in\cx=\Delta^{d-1}$ and define the target domain as $\cz = \Delta^{m-1}$ with $m \le d$. We assume that the kernel $k_\cz$ on the target simplex is characteristic, which holds, for instance, for the standard Gaussian kernel. By Proposition~\ref{thm:sdr-population-guarantee}, the CKDR population-level criterion is defined as
\begin{equation}\label{eq:ckdr-population}
    \argmin_{P\in\cm_{m,d}} \Tr(\s_{YY|PX}).
\end{equation}
When $k_\cy$ is characteristic and $m \ge \dim(\mathcal{C}_{Y|X})$, where $\mathcal{C}_{Y|X}$ denotes the central compositional subspace defined in \cref{sec:compositional-SDR}, Proposition~\ref{thm:sdr-population-guarantee} ensures that any minimizer $P$ of \eqref{eq:ckdr-population} is a valid CSDR matrix. Moreover, if $k_\cy$ is a linear kernel over a Euclidean or Hilbert space, a similar guarantee holds for recovering the CSDR for conditional mean.

We estimate the minimizer of \eqref{eq:ckdr-population} from an i.i.d. sample $(x_1, y_1), \ldots, (x_n, y_n) \in \Delta^{d-1} \times \cy$ from the joint distribution of $(X, Y)$. Replacing the population covariance in \eqref{eq: cov-def} with its empirical counterpart, the empirical cross-covariance operator $\hatsig_{YX}: \ch_\cx \to \ch_\cy$ is given by
\begin{align*}
    \langle g, \hatsig_{YX} f\rangle_{\ch_\cy} &= \frac{1}{n}\sum_{i=1}^n g(y_i)f(x_i) - \left(\frac{1}{n}\sum_{i=1}^n g(y_i) \right) \left(\frac{1}{n}\sum_{i=1}^n f(x_i)\right)
\end{align*}
for all $g\in\ch_\cy$ and $f\in\ch_\cx$. Let $\hatsig_{XX}$ and $\hatsig_{YY}$ denote the corresponding empirical covariance operators. To ensure operator inversion, we introduce a regularization parameter $\varepsilon_n > 0$. The empirical conditional covariance operator is then defined by
\begin{equation*}%\label{eq: empirical condcov}
    \hatsig_{YY|X}= \hatsig_{YY} - \hatsig_{YX}(\hatsig_{XX} + \varepsilon_n I)^{-1}\hatsig_{XY}. 
\end{equation*}

Given this definition, we estimate a CSDR matrix $P \in \cm_{m,d}$ by replacing $X$ with $PX$ and minimizing the empirical objective $\Tr(\hatsig_{YY|PX})$, computed analogously to \citet{fukumizuKernelDimensionReduction2009} following the discussion in Remark~\ref{rmk:re-embedding}. Specifically, let $K_{PX} = (k_\cz(Px_i, Px_j))_{i,j=1}^n$ be the Gram matrix of the projected data, and define its centered version $G_{PX} = H K_{PX} H$, where $H= I_n - \frac{1}{n} \1_n \1_n^\top$. Similarly, let $K_Y = (k_\cy(y_i, y_j))_{i,j=1}^n$ and $G_Y = H K_Y H$. The empirical conditional trace is then computed as:
\begin{equation}\label{eq:empirical-objective}
    T_n(P) := \Tr(\hatsig_{YY|PX}) = \varepsilon_n \Tr((G_{PX} + n\varepsilon_nI_n)^{-1}G_Y).
\end{equation}
Accordingly, the CSDR estimator is obtained by solving
\begin{equation}\label{eq:empirical-estimate}
    \underset{P\in\cm_{m,d}}{\argmin}\ T_n(P).
\end{equation}
Since the parameter space $\cm_{m,d}$ is compact and the kernel $k_\cz$ is continuous, the empirical objective \eqref{eq:empirical-estimate} admits at least one minimizer, denoted $\wh P_n$. 
In \cref{sec:consistency}, we show that this estimator is consistent, achieving compositional SDR as $n$ tends to infinity.

\begin{remark}\label{rmk:sparsity}
    As illustrated in \cref{fig: comparison with tomassi}, the estimator $\wh P_n$ exhibits emergent sparsity without explicit sparsity-inducing regularization, thereby revealing sufficient amalgamation structure (Remark~\ref{rmk:suff-amalgam}) and enhancing interpretability. This phenomenon can be attributed to the polytopic geometry of the constraint set $\cm_{m,d}$, in analogy to optimization over $\ell^1$-balls that yields sparse solutions \citep{tibshirani1996regression}. \citet{wuSparseTopicModeling2023} also noted a similar property in a different nonlinear optimization over $\cm_{m,d}$, where sparsity was ascribed to the set's inherent nonnegativity constraint. In contrast, classical KDR on the Stiefel manifold does not yield sparse solutions, requiring extra penalties to induce sparsity \citep{liuSparseKernelSufficient2024}.
\end{remark}

\subsection{Intrinsic Predictive Model of CKDR}\label{sec: vv-krr}

Although SDR is a supervised technique, many SDR methods stop at identifying an SDR subspace, without directly addressing the prediction of $Y$. In contrast, CKDR embeds a predictive model within the dimension-reduced domain, where predictions are explicitly computable from samples. This built-in structure enables evaluating reduction quality via predictive performance, supports principled cross-validation for hyperparameters such as the target dimension and the regularization parameter $\varepsilon_n$, and facilitates our optimization algorithm via successive convex approximation (Section~\ref{sec:optimization-sca}).

The intrinsic predictive model of the CKDR framework emerges from a fundamental connection between our trace objective $\Tr(\hatsig_{YY|PX})$ and the objective function of kernel ridge regression (KRR). Specifically, taking a CONS $\{g_i\}_{i=1}^\infty$ of $\ch_\cy$, we can express% the trace as
\begin{align}\label{krr}
    \Tr(\hatsig_{YY|PX}) % &= \sum_{i=1}^\infty \langle g_k, \hatsig_{YY|PX} g_k\rangle_{\ch_\cy} \nonumber \\
    &= \sum_{k=1}^\infty \min_{f \in \ch_\cz,\, c_k\in\R}\bracket*{\frac{1}{n}\sum_{i=1}^n \pr*{ {g_k}(y_i)  - f(Px_i) - c_k }^2 + \varepsilon_n\|f\|_{\ch_\cz}^2 },
\end{align}
where each summand, computed analogously to \eqref{eq:residual}, corresponds to a KRR problem for the scalar response values $(g_k(y_i))_{i=1}^n$ and inputs $(Px_i)_{i=1}^n$. This decomposition, similarly noted by prior studies \citep{fukumizuKernelDimensionReduction2009, liuSparseKernelSufficient2024}, reveals that CKDR implicitly performs an infinite sequence of KRR tasks in the reduced feature space. 
However, for general $k_\cy$, this infinite-sum form leaves it unclear how to obtain explicit predictions.

We show that the expression \eqref{krr} can be \emph{vectorized} by representing the responses $y_i$ via its canonical feature map $k_\cy(y_i, \cdot)\in\ch_\cy$ and invoking the notion of vector-valued RKHS \citep{micchelli2005learning}. Specifically, consider the $\ch_\cy$-valued RKHS $\cg_\cz$ associated with $k_\cz$, defined by a closed linear span of the $\ch_\cy$-valued functions of the form $z \mapsto k_\cz(z, \cdot)\gamma$ for $z\in\cz$ and  $\gamma\in\ch_\cy$. This space satisfies the vector-valued reproducing property 
\[
\langle F, k_\cz(z, \cdot)\gamma\rangle_{\mathcal G_\cz} = \langle F(z), \gamma \rangle_{\ch_\cy} \quad \text{ for all 
 } F\in\cg_\cz, \; z\in\cz, \;\gamma \in \ch_\cy.
 \]
Using this vectorization, we have the following equivalence result:
\begin{theorem}\label{thm:trace=centered-vv-krr}
Define the joint objective function
\begin{equation}\label{eq:krr-objective}
    J_n(P, F, \gamma) = \frac{1}{n}\sum_{i=1}^n \norm*{k_\cy(y_i, \cdot) - F(Px_i) - \gamma}_{\ch_\cy}^2 + \varepsilon_n\norm*{F}_{\mathcal{G}_\cz}^2.
\end{equation}
The equality in \eqref{krr} becomes
\begin{equation}\label{eq:empirical-kdr=vv-krr}
        T_n(P) = \underset{F\in\mathcal{G}_\cz,\,
        \gamma\in\ch_\cy}{\min}\ J_n(P, F, \gamma).
\end{equation}
Therefore, the empirical CKDR estimation \eqref{eq:empirical-estimate} is equivalent to solving
\begin{align}\label{eq:joint-learning}
    \minimize_{P\in\cm_{m,d},\, F\in\mathcal{G}_\cz,\,
        \gamma\in\ch_\cy}\ J_n(P, F, \gamma).
\end{align}
\end{theorem}

That is, $T_n(P)$ corresponds to the vector-valued KRR problem with intercept $\gamma\in\ch_\cy$ on the data $\mathcal{T}_n= \{(Px_i, k_\cy(y_i, \cdot))\}_{i=1}^n \subset\Delta^{m-1}\times \ch_\cy$, and minimizing it amounts to finding $P\in\cm_{m,d}$ that attains the best KRR fit. This illustrates that even when the target dimension is underspecified, $m < \dim \mathcal{C}_{Y|X}$, the CKDR estimator $\wh P_n$ still seeks a reduction that best preserves predictive information. 

As a result of vectorization, we obtain a unique minimizer $(F_P, \gamma_P)$ in an explicit form: writing $\Psi = (k_\cy(y_1, \cdot),\ldots,k_\cy(y_n, \cdot))^\top\in(\ch_\cy)^n$ and $(\alpha_1,\ldots,\alpha_n)^\top = (G_{PX} + n\varepsilon_nI_n)^{-1}H\Psi\in(\ch_\cy)^n$, we have
\begin{equation}\label{eq:vv-krr-closed-form}
    F_P(\cdot) = \sum_{i=1}^n k_\cz(Px_i, \cdot)\alpha_i \quad\text{and}\quad \gamma_P = \frac{1}{n}\sum_{i=1}^n (k_\cy(y_i, \cdot) - F_P(Px_i)).
\end{equation}
Thus, after estimating $\wh P_n$, we can use the pair $(F_{\wh P_n}, \gamma_{\wh P_n})$ as the final predictive model. For any out-of-sample point $(x', y')\in\Delta^{d-1}\times \cy$, the squared prediction error is computed using the reproducing property of $\ch_\cy$ as
\begin{align}\label{eq:out-of-sample-error}
    \begin{split}
        \ce(x', y'\,|\,\mathcal{T}_n) &= \|k_\cy(y', \cdot) - F_{\wh P_n}(\wh P_n x') - \gamma_{\wh P_n}\|_{\ch_\cy}^2, \\
        &= k_\cy(y', y') - 2\,\mbf{k}_{y'}^\top\,\mbf{v}_{x'}\,  + \mbf{v}_{x'} ^\top\, K_Y \mbf{v}_{x'}\,,
    \end{split}
\end{align}
where $\mbf{k}_{y'} =(k_\cy(y_1,y'),\ldots, k_\cy(y_n, y'))^\top \in \R^n$ and $\mbf{v}_{x'}\, = {H}(G_{\wh P_nX}+n\varepsilon_nI_n)^{-1}\wt{\mbf{k}}_{x'} + \frac{1}{n}\mbf 1_n $ with $\wt{\mbf{k}}_{x'}^\top = \left(k_\cz(\wh P_nx_1, \wh P_nx'), \ldots, k_\cz(\wh P_nx_n, \wh P_nx')\right) - \frac{1}{n}\mbf 1^\top K_{\wh P_nX}$. The sum of such errors over a test dataset provides a natural measure of CKDR's generalization performance, which we use for hyperparameter selection via cross-validation.
Moreover, when responses are real-valued and $k_\cy$ is linear, the $\ch_\cy$-valued predictions $F_{\wh P_n}(\wh P_n x') + \gamma_{\wh P_n}$ translate to $\cy$-valued predictions since $\ch_\cy = \R$; these downstream predictions demonstrate competitive and often best performance relative to competitors in our experiments (\cref{sec: experiments}).

\subsection{Optimization of CKDR via Successive Convex Approximation}\label{sec:optimization-sca}

The empirical CKDR objective $T_n(P)$ is nonconvex in $P\in\cm_{m,d}$, due to the nonlinear dependence of the Gram matrix $G_{PX}$ on~$P$ and the invariance of the objective under row permutations of~$P$. To address this, we develop an algorithm based on successive convex approximation over the convex domain $\cm_{m,d}$, using the Gaussian kernel $k_\cz(z,z') =\exp (-\|z - z'\|^2/2\sigma^2)$ on the target simplex $\Delta^{m-1}$.

Given the current CDR matrix $P_t\in\cm_{m,d}$ at iteration $t$, a natural first attempt at building a convex surrogate for $T_n$ is to linearize the Gaussian kernel at $P_t$: writing $z_i^t = P_tx_i$,
$$ k_\cz(Px_i, Px_j) \approx k_\cz(z_i^t, z_j^t)\Bigl(1 - \frac{1}{\sigma^2}(z_i^t - z_j^t)^\top (P-P_t)(x_i - x_j)\Bigr). $$
Substituting this linearization into the Gram matrix $G_{PX}$ yields a first-order approximation $\wt G_{PX}$, which has often been considered in RKHS-based optimization problems \citep{allen2013automatic}. However, in KDR-based objectives like \eqref{eq:empirical-objective}, the matrix inverse $(\wt G_{PX} + n\varepsilon_n I_n)^{-1}$ may become ill-defined because the linearized matrix $\wt G_{PX}$ need not be positive semidefinite, rendering direct kernel linearization unsuitable in our problem.

We circumvent this difficulty by leveraging the joint learning formulation \eqref{eq:joint-learning}.
Let $(F_t, \gamma_t) := (F_{P_t}, \gamma_{P_t})$ denote the unique vector-valued KRR solution at $P_t$, with $F_t(\cdot) = \sum_{j=1}^n k_\cz(z_j^t, \cdot)\alpha_j^t$ as defined in \eqref{eq:vv-krr-closed-form}. Since $(F_t, \gamma_t)$ minimizes $J_n(P_t, \cdot\,,\cdot)$, the first-order optimality conditions for the partial Fr\'echet derivatives $D_FJ_n(P_t, F_t, \gamma_t) = D_\gamma J_n(P_t, F_t, \gamma_t) = 0$ are satisfied. Then, by the 
envelope theorem \citep[Remark~4.14]{bonnansPerturbationAnalysisOptimization2000},
\begin{equation*}
    \nabla T_n(P_t) = D_P J_n(P_t, F_t, \gamma_t).
\end{equation*}
This enables first-order approximating the objective $T_n$ at $P_t$ using the partial gradient $D_P J_n(P_t, F_t, \gamma_t)$, which regards the function $F_t= F_{P_t}$ as fixed. 

Given $J_n(P, F_t, \gamma_t) = \frac{1}{n}\sum_i\|k_\cy(y_i, \cdot) - F_t(Px_i) - \gamma_t\|_{\ch_\cy}^2 + \varepsilon_n\|F_t\|_{\cg_\cz}^2$, we consider an affine approximation of $P\mapsto F_t(Px_i)$ using the linearization of Gaussian kernel $k_\cz(z_j^t, \cdot)$:
\begin{equation*}
    \wt{F}_{t,i}(P) = F_t(z_i^t) - \frac{1}{\sigma^2}\sum_{j=1}^n k_\cz(z_i^t, z_j^t)(z_i^t - z_j^t)^\top (P - P_t) x_i\, \alpha_j^t.
\end{equation*}
Let $R_t = (G_{P_tX} + n\varepsilon_n I_n)^{-1}$ and $A_\alpha^t = R_t G_Y R_t^\top$. For each sample $i$, define $V_i^t = (z_i^t - z_1^t, \ldots, z_i^t - z_n^t)\in\R^{m\times n}$, $C_i^t = \sigma^{-2}\operatorname{diag}(k_\cz(z_i^t, z_1^t), \ldots, k_\cz(z_i^t, z_n^t))\in\R^{n\times n}$, and 
$$
S_i^t = V_i^t C_i^t A_\alpha^t C_i^t (V_i^t)^\top\in\R^{m\times m}, \qquad a_i^t = n\varepsilon_n\, V_i^t C_i^t A_\alpha^t e_i\in\R^m,
$$
where each $S_i^t$ is positive semidefinite. Then, by substituting the affine approximation $\wt{F}_{t, i}(P)$ into $J_n$, we define the convex quadratic surrogate $Q_t(P)$ as
\begin{align}\label{eq:surrogate-quadratic}
    \begin{split}
        Q_t(P) &:= \frac{1}{n}\sum_{i=1}^n \|k_\cy(y_i,\cdot) - \wt{F}_{t,i}(P) - \gamma_t\|_{\ch_\cy}^2 + \varepsilon_n \|F_t\|_{\cg_\cz}^2 \\
        &= \mathrm{const} + \frac{2}{n}\sum_{i=1}^n (a_i^t)^\top (P - P_t) x_i + \frac{1}{n}\sum_{i=1}^n x_i^\top (P - P_t)^\top S_i^t (P - P_t) x_i,
    \end{split}
\end{align}
where the const term is independent of $P$.
By the envelope argument above, $\nabla Q_t(P_t) = \nabla T_n(P_t) = \frac{2}{n}\sum_{i=1}^n a_i^t x_i^\top$, ensuring $Q_t$ as a valid first-order convex surrogate for $T_n$ at $P_t$.

Using this surrogate construction, we propose to solve the CKDR optimization \eqref{eq:empirical-estimate} via a successive convex approximation (SCA) algorithm, outlined in Algorithm~\ref{alg:sca}. At each iteration $t$, we minimize the convex surrogate $Q_t$ over $\cm_{m,d}$ using Nesterov's accelerated projected gradient descent with the column-wise simplex projection \citep{duchi2008efficient}, and then apply Armijo backtracking to the original objective $T_n$.

\begin{algorithm}[t]
\spacingset{1}
\caption{Successive Convex Approximation for CKDR}\label{alg:sca}
\begin{algorithmic}[1]
\REQUIRE Data $(x_i, y_i)_{i=1}^n\subset\Delta^{d-1}\times\cy$; target dimension $m$; regularization $\varepsilon_n$; Gaussian kernel bandwidth $\sigma$; Armijo parameters $c = 10^{-4}$, $\beta = 0.5$; tolerance $\epsilon_{\mathrm{tol}}$.
\STATE Initialize $P_0\in\cm_{m,d}$ (we draw each column from $\operatorname{Dir}(m^2,\ldots,m^2)$).
\FOR{$t = 0, 1, 2, \ldots$}
\STATE Build the surrogate $Q_t(\cdot)$ using $P_t$ via \eqref{eq:surrogate-quadratic}: compute $V_i^t$, $C_i^t$, $S_i^t$, $a_i^t$ for each $i$.
\STATE Compute $\nabla T_n(P_t) = \frac{2}{n}\sum_{i=1}^n a_i^t x_i^\top$.
\STATE Minimize $Q_t(\cdot)$ over $\cm_{m,d}$ by accelerated projected gradient descent; let $\wt{P}_t$ be the result and set $D_t = \wt{P}_t - P_t$.
\IF{$\min\big\{\|D_t\|_F, Q_t(P_t) - Q_t(\widetilde{P}_t)\big\} \le \epsilon_{\mathrm{tol}}$}
\STATE \textbf{stop}
\ENDIF
\STATE Choose $\lambda_t\in\{1, \beta, \beta^2, \ldots\}$ as the first value satisfying
$$T_n(P_t + \lambda_t D_t) \le T_n(P_t) + c\,\lambda_t\langle \nabla T_n(P_t), D_t\rangle_F.$$

\STATE Set $P_{t+1} = P_t + \lambda_t D_t$.
\ENDFOR
\end{algorithmic}
\end{algorithm}

We establish the stationary convergence result below using similar proof structures to standard SCA analyses (e.g.,  \citealt{scutariParallelDistributedMethods2017}). While such analyses often require strongly convex surrogates, our compact convex CDR domain $\cm_{m,d}$ and the Armijo line search facilitate a convergence result that does not require $Q_t$ to be strongly convex.

\begin{proposition}\label{prop:sca-convergence}
    Suppose that each inner subproblem $\min_{P\in\cm_{m,d}} Q_t(P)$ is solved exactly. Then every accumulation point of $\{P_t\}_{t\ge 0}$ is first-order stationary for $T_n$ over $\cm_{m,d}$.
\end{proposition}

\begin{remark}\label{rmk:pgd}
    SCA approaches for optimizing nonconvex functions over convex domains have demonstrated numerous advantages, including convergence speed \citep{scutariDecompositionPartialLinearization2014, scutariParallelDistributedMethods2017}, compared to simpler alternatives such as projected gradient descent (PGD). Although PGD has demonstrated good empirical performance in related KDR contexts \citep{chen2017kernel, chenTheoryFeatureLearning2026}, we empirically found that, in most cases, the proposed SCA algorithm for CKDR converges to lower objective values than PGD.
\end{remark}

\section{Consistency of CKDR Estimator}\label{sec:consistency}

This section establishes the consistency of our CKDR estimator $\wh P_n$ for the central compositional subspace $\mathcal{C}_{Y|X}$ when $k_\cy$ is characteristic, by developing an analysis that accommodates the varying-rank structure of the matrix domain $\cm_{m,d}$.
This rank variability invalidates earlier uniform convergence arguments \citep{fukumizuKernelDimensionReduction2009}, as the population objective $T(P)=\Tr(\s_{YY|PX})$ is discontinuous when a sequence of matrices converges to a lower-rank limit (see Appendix~\ref{sec:supp-counterexample}),
and therefore cannot be uniformly approximated by the globally continuous, $\varepsilon_n$-regularized empirical objective $T_n(P)$. Consequently, we develop a new asymptotic analysis that simultaneously (i) rules out rank collapse of $\wh P_n$ below $\dim\mathcal{C}_{Y|X}$, and (ii) quantifies convergence between subspaces of potentially different dimensions, since $\rk(\wh P_n)$ can vary with $n$. An analogous conclusion holds for the mean subspace $\mathcal{C}_{\E[Y|X]}$ when $k_\cy$ is linear; for brevity, we focus here on the characteristic case.

Let $\Pi_V$ denote the orthogonal projection matrix onto a subspace $V\subseteq\R^d$. Let $\Gr(k, d)$ denote the Grassmann manifold of $k$-dimensional subspaces of $\R^d$, and let $\Gr^\1(k, d)$ denote the subset of subspaces that contain $\1_d$.

We list the following assumptions for our asymptotic analysis, which parallel common assumptions in KDR but avoid re-embedding to $\cx=\Delta^{d-1}$ as noted in Remark~\ref{rmk:re-embedding}:
\begin{assumption}\label{assumption:kernels-and-dimension}
    (a) The kernels $k_\cy$ on $\cy$ and $k_\cz$ on $\cz=\Delta^{m-1}$ are characteristic, and (b) $m\ge\dim\mathcal{C}_{Y|X}$.
\end{assumption}
Although $\dim\mathcal{C}_{Y|X}$ in Assumption~\ref{assumption:kernels-and-dimension}(b) is unknown in practice, it relaxes the exact structural dimension specification $m=\dim\mathcal{C}_{Y|X}$ required by most consistency theory of Euclidean SDR methods \citep{li2018sufficient}. Under Assumption~\ref{assumption:kernels-and-dimension}, Proposition~\ref{thm:sdr-population-guarantee} ensures the population objective $T$ attains its global minimum at some $P^\star\in\cm_{m,d}$ with $\row(P^\star)\supseteq\mathcal{C}_{Y|X}$.

\begin{assumption}\label{assumption:modified-continuity}
    For any bounded continuous function $g$ on $\cy$, the mapping
    $$V\mapsto \E[\E[g(Y)|\Pi_VX]^2] $$
    is continuous on $\Gr^\1(k,d)$ for every $k\le m$.
\end{assumption}

\begin{assumption}\label{assumption: Lipschitz}
    There exists a measurable function $\varphi:\Delta^{d-1}\to\R$ with $\E[\varphi(X)^2] < \infty$ such that the Lipschitz condition
    $$\norm*{k_\cz(P_1x, \cdot) - k_\cz(P_2x, \cdot)}_{\ch_\cz} \leq \varphi(x)\, \|P_1 - P_2\| $$
    holds for all $x\in\Delta^{d-1}$ and $P_1, P_2\in\cm_{m,d}$, where $\|\cdot\|$ is the operator norm.
\end{assumption}

Assumption~\ref{assumption:modified-continuity} modifies the continuity condition (A-1) of \citet{fukumizuKernelDimensionReduction2009} to derive the rank-wise continuity of $T$ (see Remark~\ref{rmk:consistency-proof}), addressing its discontinuity at rank changes. As shown therein, a mild sufficient condition for Assumption~\ref{assumption:modified-continuity} is when $X$ has a density on $\Delta^{d-1}$ and the conditional distribution $F_{Y|X}(y|x)$ is continuous in $x$.
Assumption~\ref{assumption: Lipschitz} is a Lipschitz condition satisfied by common kernels $k_\cz$, including the Gaussian and the rational quadratic kernel, and ensures uniform control of empirical cross-covariance operators.

To compare subspaces with possibly different dimensions, we employ the \textit{chordal distance} introduced in \citet{yeSchubertVarietiesDistances2016}. For subspaces $V$ and $W$ of $\R^d$ with dimensions $k$ and $l$, respectively, the squared distance is defined as:
\begin{equation}\label{eq:subsp-distance}
    \rho^2(V, W) = (\|\Pi_V - \Pi_W\|_F^2 - |k-l|) / 2\min(k, l),
\end{equation}
which ranges from 0 to 1. This distance vanishes when one subspace is contained in the other; that is, $\rho(V, W) = 0$ if and only if either $V\subseteq W$ or $W\subseteq V$. When $k=l$, $\rho$ reduces to a standard subspace metric on $\Gr(k, d)$.

We now state our main result below. Our asymptotic analysis involves two main steps: (i) ruling out rank deficiency $\rk(\wh P_n) < \dim\mathcal{C}_{Y|X}$, which prevents proper inclusion $\row(\wh P_n)\subsetneq \mathcal{C}_{Y|X}$; and (ii) establishing the convergence $\rho(\row(\wh P_n), \mathcal{C}_{Y|X}) \to 0$. These two steps imply that $\row(\wh P_n)$ asymptotically \textit{contains} the central compositional subspace $\mathcal{C}_{Y|X}$, thereby guaranteeing compositional SDR.

\begin{theorem}\label{thm: consistency}
    Suppose that the regularization parameter $\varepsilon_n$ in \eqref{eq:empirical-estimate} satisfies 
    \begin{equation}\label{eq: epsilon condition}
        \varepsilon_n \to 0 \quad \text{and}\quad n^{1/2}\varepsilon_n\to\infty \quad \text{as}\quad n\to\infty.
    \end{equation} 
    Under Assumptions \ref{assumption:kernels-and-dimension}, \ref{assumption:modified-continuity}, and \ref{assumption: Lipschitz}, for every positive number $\delta > 0$, we have
    $$\lim_{n\to\infty}\P\left(
    \rk(\wh P_n)\ge \dim \mathcal{C}_{Y|X}\ \wedge\ \rho(\row(\wh P_n), \mathcal{C}_{Y|X}) < \delta\right) = 1.$$
\end{theorem}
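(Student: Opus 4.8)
The plan is to prove, for an arbitrary fixed $\delta>0$, that the estimator eventually avoids the \emph{bad set}
$$B_\delta = \{P\in\cm_{m,d}:\rk(P)<r\}\cup\{P\in\cm_{m,d}:\rho(\row(P),\mathcal{C}_{Y|X})\ge\delta\},\qquad r:=\dim\mathcal{C}_{Y|X},$$
since $\widehat P_n\notin B_\delta$ is precisely the event in the theorem (its $\{\rk<r\}$ part gives the rank bound, and also rules out proper inclusion $\row(\widehat P_n)\subsetneq\mathcal{C}_{Y|X}$). Write $T(P)=\Tr(\s_{YY|PX})$, $T_n(P)=\Tr(\hatsig_{YY|PX})$ and $T^\star=\Tr(\s_{YY|X})$. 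I would first record three population facts. By \cref{thm:sdr-population-guarantee}, $T(P)\ge T^\star$ with equality iff $\row(P)\supseteq\mathcal{C}_{Y|X}$. Since $\s_{YY|PX}$ depends on $P$ only through the $\sigma$-algebra of $PX$, i.e.\ only through $\row(P)\in\Gr^{\1}(\rk(P),d)$, and equals a function of $\Pi_{\row(P)}X$, \cref{assumption:modified-continuity} makes $T$ continuous on each fixed-rank stratum $\Gr^{\1}(k,d)$. Finally $T$ is monotone under inclusion: if $V\subseteq V'$ then $\Pi_{V'}X$ is a finer reduction, so the corresponding values satisfy $T(V')\le T(V)$ (equivalently $T$ is only upper semicontinuous, jumping \emph{down} as rank increases along a sequence to a lower-rank limit).

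Next I would establish a \emph{population separation}. On each stratum $\Gr^{\1}(k,d)$ the bad subspaces form a compact set: for $k<r$ the whole stratum lies in $B_\delta$, and for $k\ge r$ the relevant part is the closed set $\{V:\rho(V,\mathcal{C}_{Y|X})\ge\delta\}$, closed because $\rho(\cdot,\mathcal{C}_{Y|X})$ is continuous within a fixed stratum. On each such compact set $T$ is continuous and strictly exceeds $T^\star$, since $\mathcal{C}_{Y|X}\not\subseteq V$ (by dimension when $k<r$, and because $\rho\ge\delta>0$ forbids containment when $k\ge r$). With only finitely many strata $k=1,\dots,m$, compactness yields a uniform gap: there is $\eta>0$ with $\inf_{P\in B_\delta}T(P)\ge T^\star+2\eta$.

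The crux is the empirical transfer, and here genuine two-sided uniform convergence $T_n\to T$ is unavailable, because the continuous $T_n$ cannot track the rank-discontinuous $T$. The key device is to compare $T_n$ not with $T$ but with the regularized population objective $T^{\varepsilon_n}(P)=\Tr\!\big(\s_{YY}-\s_{Y,PX}(\s_{PX,PX}+\varepsilon_nI)^{-1}\s_{PX,Y}\big)$, the population analogue of \eqref{eq: empirical condcov}. Through the correlation-operator representation \eqref{eq: correlation}, adding $\varepsilon_nI$ subtracts a \emph{smaller} positive semidefinite operator, so the regularized conditional covariance dominates the unregularized one and $T^{\varepsilon_n}(P)\ge T(P)$ for \emph{every} $P$, uniformly across all ranks. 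At the fixed level $\varepsilon_n$ the inverses are bounded by $\varepsilon_n^{-1}$ even across rank drops, making $P\mapsto T_n(P)$ and $P\mapsto T^{\varepsilon_n}(P)$ Lipschitz; combining \cref{assumption: Lipschitz}, a finite covering of the compact set $\cm_{m,d}$, and the $O_P(n^{-1/2})$ concentration of the empirical (cross-)covariance operators, the rate $n^{1/2}\varepsilon_n\to\infty$ gives the uniform bound $\sup_{P}|T_n(P)-T^{\varepsilon_n}(P)|\to0$ in probability. Consequently $\sup_{P}\big(T(P)-T_n(P)\big)^+\le\sup_{P}|T_n-T^{\varepsilon_n}|\to0$: the empirical objective never undershoots the population objective by more than $o_P(1)$, uniformly. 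This one-sided control is exactly what replaces the failed uniform convergence.

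To conclude, fix any global population minimizer $P^\star$ (which exists under \cref{assumption:kernels-and-dimension} with $\row(P^\star)\supseteq\mathcal{C}_{Y|X}$, $\rk(P^\star)\ge r$). Pointwise regularized consistency at this single fixed point gives $T_n(P^\star)\to T^\star$, so by optimality $T_n(\widehat P_n)\le T_n(P^\star)=T^\star+o_P(1)$. On the other hand, the separation and the one-sided bound give $\inf_{P\in B_\delta}T_n(P)\ge\inf_{P\in B_\delta}T(P)-\sup_{P}(T-T_n)^+\ge T^\star+2\eta-o_P(1)\ge T^\star+\eta$ with probability tending to one. These two inequalities are incompatible on $\{\widehat P_n\in B_\delta\}$, forcing $\P(\widehat P_n\in B_\delta)\to0$, which is the claim. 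The main obstacle is the third paragraph: recognizing that the ridge regularization $\varepsilon_nI$ simultaneously supplies the pointwise inequality $T^{\varepsilon_n}\ge T$ \emph{and} the uniform empirical-to-regularized-population convergence over the variable-rank domain $\cm_{m,d}$, thereby yielding the needed one-sided bound without ever invoking the (false) continuity of $T$; the supporting rate bookkeeping, propagating $\|\hatsig-\s\|=O_P(n^{-1/2})$ through the regularized inverse under $n^{1/2}\varepsilon_n\to\infty$, is technical but routine along the lines of \citet{fukumizuKernelDimensionReduction2009}.
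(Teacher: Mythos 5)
Your proposal is correct, and its empirical half is essentially the paper's own argument: the paper likewise compares $T_n$ with the regularized bridge $T^{\varepsilon_n}$, proves $\sup_{P\in\cm_{m,d}}|T_n(P)-T^{\varepsilon_n}(P)|=O_p(1/(\varepsilon_n\sqrt{n}))$ (\cref{cor: part (i)}), exploits the monotonicity $T\le T^{\varepsilon_n}$ (\cref{cco:lemma-pointwise convergence}) to get one-sided control, and concludes by separating the global minimum from the infimum of $T$ over the two bad regions. One small point: when you assert $T_n(P^\star)\to T^\star$, you are also using the pointwise convergence $T^{\varepsilon_n}(P^\star)\to T(P^\star)$ as $\varepsilon_n\to 0$, which is a separate ingredient (the second half of \cref{cco:lemma-pointwise convergence}); make it explicit.

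Where you genuinely diverge is the population separation, and your route is simpler. The paper proves the two gaps by different techniques: the far-distance gap via projection to the Grassmannian strata $\Gr^\1(k,d)$, stratum-wise compactness and continuity (\cref{lemma:uniform-gap2}), but the low-rank gap via a weak*-compactness (Banach--Alaoglu) argument on sequences of conditional expectations, extending \citet{chenLayeredModelsCan2025} (\cref{lemma:uniform-gap-rank}). You handle both regions with the single Grassmannian argument: for $k<r$ the whole compact stratum is bad, for $k\ge r$ the closed subset $\{V:\rho(V,\mathcal{C}_{Y|X})\ge\delta\}$ is bad, $J$ is continuous on each stratum by \cref{assumption:modified-continuity}, minima are attained, and non-containment of $\mathcal{C}_{Y|X}$ forces strict excess. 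This unification is valid and avoids the functional-analytic machinery entirely; the trade-off is that it invokes \cref{assumption:modified-continuity} on the low-rank strata, whereas the paper's rank lemma holds under \cref{assumption:kernels-and-dimension} alone (irrelevant inside the theorem, where both assumptions hold, but it explains the paper's choice). One caveat you must add: the strict-excess step at a stratum minimizer $V_k$ requires $V_k$ to be the row space of an actual CDR matrix, because $\mathcal{C}_{Y|X}$ is by definition an intersection of row spaces of CSDR matrices; knowing only $Y\ind X\,|\,\Pi_{V_k}X$ for an unrealizable subspace $V_k$ would not contradict $\mathcal{C}_{Y|X}\not\subseteq V_k$. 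This realizability is exactly the surjectivity of the row-space map $\cm_{m,d}^{(k)}\to\Gr^\1(k,d)$ established in the paper's \cref{lemma:1subspace}, which your write-up never cites; with that citation inserted, your argument is complete.
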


As a corollary, we guarantee the exact recovery of $\mathcal{C}_{Y|X}$ when $m=\dim\mathcal{C}_{Y|X}$ is specified.

\begin{remark}\label{rmk:consistency-proof}
    We prove Theorem~\ref{thm: consistency} in two steps. First, extending prior KDR results, we show that $T(\wh P_n) \to T(P^\star)$ in probability. Second, we show that $T(P^\star)$ is separated from the infimum of $T$ over each of two excluded subsets of $\cm_{m,d}$: matrices with $\rk(P) < \dim\mathcal{C}_{Y|X}$, and matrices satisfying $\rho(\row(P), \mathcal{C}_{Y|X}) \ge \delta$ for a fixed $\delta>0$. These positive gaps imply that $\wh P_n$ avoids both subsets with probability tending to one, thereby proving the theorem.    
    While a related lower-rank separation result is suggested by \citet[Lemma~12]{chenTheoryFeatureLearning2026}, their $L^2$ weak-limit argument has a counterexample; see Remark~\ref{rmk:counterexample-previous} in the Appendix. Instead, we control the infima over both excluded subsets by combining rank-wise continuity of $T$ on $\cm_{m,d}^{(k)}$ with the surjective row-space map $\cm_{m,d}^{(k)} \to \Gr^\1(k,d)$ from Lemma~\ref{lemma:1subspace}, which addresses the non-compactness of $\cm_{m,d}^{(k)}$. 
    See Appendix~\ref{sec:supp-consistency-proof} for full details.
\end{remark}

\section{Simulations and Real Data Analysis}\label{sec: experiments}

In this section, we assess the utility and performance of our method via simulations and real-world microbiome datasets. We consider both binary and univariate continuous responses.

\subsection{Experimental Settings}\label{sec:experimental-settings}

CKDR uses the Gaussian kernel $k_\cz(z,z') =\exp (-\|z - z'\|^2/2\sigma^2)$ on the target simplex and is implemented by the SCA algorithm (Section~\ref{sec:optimization-sca}). For real-valued responses, we use the linear kernel $k_\cy(y, y') = yy'$. For binary responses, we encode $\cy=\{-1,1\}$, making the linear kernel $k_\cy$ characteristic on $\cy$; in this case, CKDR estimates the central compositional subspace $\mathcal{C}_{Y|X}$, whereas it targets the mean subspace $\mathcal{C}_{\E[Y|X]}$ for continuous responses. Hyperparameters are selected via 5-fold cross-validation using the test error in \cref{eq:out-of-sample-error}. We set the bandwidth $\sigma$ to the median pairwise distance among $\{\|x_i-x_j\|\}_{i<j}$. The regularization parameter $\varepsilon_n$ is chosen from 10 values equally spaced on the log scale in $[10^{-4}, 10^{-2}]$. To assess the effect of the target dimension $m$, we consider two scenarios: (a) CKDR$^*$, where $m\in\{ 3, 4, 5, 6, 7\}$ is tuned jointly with $\varepsilon_n$, and (b) CKDR-$m$, where $m$ is fixed a priori.

We also compare the performance of the intrinsic predictive model of CKDR against existing competitors. Under the linear kernel $k_\cy$, the model yields real-valued predictions $\hat y \in \ch_\cy=\R$, which are used for evaluation. For binary responses, we apply $\sign(\hat y)\in\{-1, 1\}$. Competitors include the log-contrast model with $\ell_1$-penalty (LC-Lasso) \citep{linVariableSelectionRegression2014, luGeneralizedLinearModels2019}, KRR or support vector machine with a Gaussian kernel after centered log-ratio (clr) transformation (clr-Kernel), random forest after clr transformation (clr-RF), and relative-shift regression with equi-sparsity penalty (RS-ES) \citep{liItsAllRelative2023}, which is included for continuous responses. For the three log-ratio-based methods, zeros in $x$ are replaced by $.5 x_{\min}$, where $x_{\min}$ denotes the smallest positive entry in $x$. We implement LC-Lasso using the Python \texttt{c-lasso} library \citep{simpsonClassoPythonPackage2021} for continuous responses and R code from \citet{susin2020variable} for binary responses, tuning the regularization parameter over 30 values equally spaced on the log scale in $[0.001, 1]$. For clr-Kernel, we use the Gaussian kernel with bandwidth $\sigma$ set to the median pairwise distance of clr-transformed data; the ridge parameter and SVM cost parameter are tuned over ten values equally spaced on the log scale in $[0.01, 1]$ and $[1, 100]$, respectively. The clr-RF uses 100 decision trees via \texttt{scikit-learn} \citep{pedregosa2011scikit}. Finally, RS-ES uses the provided MATLAB code.

\subsection{Simulations}\label{sec: simulations}

In simulations, we assess the performance of CKDR in terms of both compositional SDR estimation and prediction. For sample sizes $n\in\{200, 500, 1000\}$, we generate $d=100$ compositional covariates by drawing $n$ vectors from a logistic Gaussian distribution with mean zero and covariance $\Sigma = (0.2^{|i-j|})_{i, j =1}^d$, truncating the lower 50\% of the entries to zero, and subsequently renormalizing to obtain compositions with structural zeros. 

The true underlying structure consists of three amalgamated variables: 
$Z_1 = \sum_{j=1}^{20} X_j,$ $Z_2 = \sum_{j=21}^{50} X_j, \; \text{and }  Z_3 = \sum_{j=51}^{100} X_j$. 
Responses are then generated from two regression and two binary classification models:
\begin{itemize}
    \item[i.] $\ Y = -5Z_1 + 4Z_3 + 0.1\epsilon$
    \item[ii.] $\ Y = 3\cos(Z_1) + {Z_3^2} /{(Z_2 + 0.01)} + 0.1\epsilon$
    \item[iii.] $\ Y = \sign(5Z_2 - 3Z_3 + 0.1\epsilon) $
    \item[iv.] $\ Y = \sign(3Z_1^2 + 4Z_2^2 - 2Z_3^2 + 0.1\epsilon) $
\end{itemize}
where $\epsilon \sim N(0,1)$. In all cases, the central compositional subspace $\mathcal{C}_{Y|X}$ coincides with the mean subspace $\mathcal{C}_{\E[Y|X]}$. The subspace dimension is $m^\star=2$ for (i) and (iii), and $m^\star=3$ for (ii) and (iv). For each setting, the averaged performance over 100 repetitions is recorded; hyperparameters are tuned in the first run and fixed for all subsequent repetitions.

Since no Euclidean SDR methods directly estimate the central compositional subspace $\mathcal{C}_{Y|X}$,
we compare the estimation performance of CKDR against methods for amalgamation: RS-ES by \citet{liItsAllRelative2023} in the regression settings (i) and (ii), and ``Amalgam'' by \cite{quinnAmalgamsDatadrivenAmalgamation2020} in the classification settings (iii) and (iv). RS-ES fits a linear model $Y = \sum_{j=1}^d \beta_j X_j$, from which the fitted coefficients $\hat\beta_j$ are used to construct a rank-2 CDR matrix $\wh P_n$ as in \eqref{eq:rs-es}. Amalgam searches a $K$-part amalgamation via a genetic algorithm with a log-ratio-based criterion after zero replacement. The resulting amalgamation yields a rank-$K$ binary CDR matrix $\wh P_n$; we set $K = 3$ (the true value) to favor this method. For CKDR, we consider the oracle-dimension setting by fixing $m=m^\star$ (CKDR-$m^\star$), as well as CKDR$^*$ in which $m$ is also cross-validated.

For the evaluation metric, we use the distance $\rho(\row(\wh P_n), \mathcal{C}_{Y|X})$ (see \cref{sec:consistency}) to assess the SDR convergence in the sense of inclusion $\row(\wh P_n)\supseteq \mathcal{C}_{Y|X}$. We further examine whether the true amalgamation structure is recovered by clustering the columns of $\wh P_n$ in the simplex using the $k$-quantiles clustering \citep{wei2017k-quantiles} with $k=3$, and then computing the adjusted Rand index (ARI) relative to the true variable amalgamation.

\begin{table}[t]
    \centering
    \spacingset{1}
    \caption{Simulation results on estimation accuracy for SDR and true amalgamation, with standard errors in parentheses. Bold-faced numbers indicate the best result for each setting.}
    \label{tab:simulation-subspace}
    \begin{scriptsize}
    \begin{tabular*}{\columnwidth}{@{\extracolsep{\fill}}llcccccc}
    \toprule
     &  & \multicolumn{3}{c}{$\rho(\row(\wh P_n), \mathcal{C}_{Y|X}) \times 100$} & \multicolumn{3}{c}{ARI $\times 100$} \\
    \cmidrule(lr){3-5} \cmidrule(lr){6-8}
    Setting& Method & $n=200$ & $n=500$ & $n=1000$ & $n=200$ & $n=500$ & $n=1000$ \\
    \midrule
     \multirow{3}{*}{(i)} & CKDR-$m^\star$ & 13.1 (0.4) & 7.5 (0.6) & \textbf{4.2 (0.0)} & 98.0 (0.7) & 96.3 (1.4) & 98.8 (0.8) \\
      & CKDR$^*$ & \textbf{11.9 (0.4)} & \textbf{6.4 (0.1)} & 4.4 (0.0) & 53.6 (2.3) & 37.7 (1.5) & 49.9 (2.5) \\
      & RS-ES & 12.8 (0.1) & 6.5 (0.1) & 4.3 (0.0) & \textbf{99.5 (0.1)} & \textbf{97.6 (1.2)} & \textbf{98.8 (0.8)} \\
    \addlinespace
     \multirow{3}{*}{(ii)} & CKDR-$m^\star$ & 56.3 (0.4) & \textbf{44.1 (0.7)} & \textbf{31.4 (0.8)} & 62.0 (1.7) & 82.0 (2.4) & 88.2 (2.2) \\
      & CKDR$^*$ & \textbf{55.9 (0.4)} & 45.0 (0.7) & 32.7 (0.7) & \textbf{62.7 (1.5)} & \textbf{84.1 (2.2)} & \textbf{91.7 (1.8)} \\
      & RS-ES & -- & -- & -- & 55.9 (1.4) & 68.3 (2.1) & 74.6 (2.3) \\
    \addlinespace
     \multirow{3}{*}{(iii)} & CKDR-$m^\star$ & \textbf{35.7 (0.3)} & \textbf{18.7 (0.2)} & \textbf{12.2 (0.1)} & \textbf{45.5 (0.7)} & \textbf{73.3 (0.9)} & \textbf{93.6 (0.5)} \\
      & CKDR$^*$ & 56.4 (1.0) & 29.6 (1.7) & 12.3 (0.7) & 41.1 (1.0) & 69.6 (1.3) & 79.1 (1.4) \\
      & Amalgam & 56.2 (0.4) & 42.5 (0.5) & 35.8 (0.4) & 20.2 (0.5) & 34.2 (0.6) & 40.6 (0.4) \\
    \addlinespace
     \multirow{3}{*}{(iv)} & CKDR-$m^\star$ & \textbf{64.4 (0.2)} & \textbf{56.8 (0.3)} & \textbf{49.9 (0.6)} & \textbf{42.6 (0.8)} & \textbf{66.9 (1.1)} & \textbf{76.9 (1.4)} \\
      & CKDR$^*$ & 74.8 (0.3) & 68.0 (0.5) & 53.1 (0.5) & 42.2 (0.9) & 63.1 (1.2) & 76.3 (1.4) \\
      & Amalgam & 75.1 (0.2) & 70.4 (0.2) & 66.6 (0.2) & 17.4 (0.5) & 25.3 (0.6) & 34.5 (0.9) \\
    \bottomrule
    \end{tabular*}
    \end{scriptsize}
\end{table}

The results shown in \cref{tab:simulation-subspace} indicate that the oracle CKDR-$m^\star$ gives the most consistently accurate recovery of both the central compositional subspace and the true amalgamation structure. 
While CKDR$^*$ remains overall highly competitive in subspace recovery, its ARI deteriorates in settings (i) and (iii), which may reflect cross-validation’s selection of a dimension $m$ larger than the true $m^\star=2$. 
RS-ES achieves the highest ARI in the correctly specified linear setting (i), but its performance degrades in the nonlinear setting (ii). Amalgam fails to recover the true sufficient amalgamation in both classification settings. Consequently, although CKDR does not explicitly target discrete amalgamation---unlike RS-ES and Amalgam---it recovers the underlying sufficient amalgamation structure with good overall accuracy. % in the nonlinear and classification settings.

\begin{table}[t]
\spacingset{1}
\centering
\caption{Simulation results on prediction performance, measured by MSE for settings (i) and (ii), and MCR for (iii) and (iv). Standard errors are given in parentheses.}
\label{tab:simulation_predictions}
\scriptsize
\begin{tabular*}{\columnwidth}{@{\extracolsep{\fill}}lllcccccc}
\toprule
Metric & Setting & $n$ & CKDR-$m^\star$ & CKDR$^*$ & LC-Lasso & clr-Kernel & clr-RF & RS-ES \\
\midrule
 \multirow{7}{*}{MSE} & \multirow{3}{*}{(i)} & 200 & .026 (.001) & .030 (.001) & .032 (.000) & .125 (.002) & .316 (.004) & \textbf{.020 (.000)} \\
  &  & 500 & .014 (.001) & .020 (.000) & .019 (.000) & .090 (.001) & .281 (.002) & \textbf{.012 (.000)} \\
  &  & 1000 & .012 (.000) & .014 (.000) & .017 (.000) & .079 (.000) & .262 (.001) & \textbf{.011 (.000)} \\
 \addlinespace
  & \multirow{3}{*}{(ii)} & 200 & \textbf{.077 (.004)} & .081 (.006) & .164 (.005) & .185 (.005) & .345 (.008) & .130 (.004) \\
  &  & 500 & \textbf{.031 (.001)} & .032 (.002) & .107 (.002) & .145 (.002) & .315 (.004) & .096 (.002) \\
  &  & 1000 & \textbf{.022 (.001)} & .023 (.001) & .099 (.002) & .136 (.003) & .306 (.004) & .091 (.002) \\
\midrule
 \multirow{7}{*}{MCR} & \multirow{3}{*}{(iii)} & 200 & \textbf{.156 (.003)} & .161 (.003) & .229 (.004) & .224 (.004) & .338 (.003) & -- \\
  &  & 500 & \textbf{.088 (.001)} & .090 (.002) & .191 (.002) & .180 (.002) & .290 (.002) & -- \\
  &  & 1000 & \textbf{.067 (.001)} & .069 (.001) & .154 (.001) & .155 (.001) & .258 (.002) & -- \\
 \addlinespace
  & \multirow{3}{*}{(iv)} & 200 & .180 (.003) & \textbf{.179 (.003)} & .286 (.004) & .256 (.004) & .361 (.004) & -- \\
  &  & 500 & \textbf{.114 (.002)} & .116 (.002) & .212 (.002) & .201 (.002) & .315 (.003) & -- \\
  &  & 1000 & .100 (.001) & \textbf{.098 (.001)} & .183 (.001) & .178 (.001) & .284 (.001) & -- \\
\bottomrule
\end{tabular*}
\end{table}

Next, we assess the predictive performance using the intrinsic predictive model. We compute the mean squared error (MSE) for settings (i) and (ii) and the misclassification rate (MCR) for settings (iii) and (iv), based on independent test data of size $n$. \cref{tab:simulation_predictions} reports the results averaged over 100 repetitions. In the linear setting (i), CKDR-$m^\star$, CKDR$^*$, and RS-ES perform comparably, with RS-ES achieving the lowest MSE owing to its correct model specification. Across all settings, however, CKDR-$m^\star$ and CKDR$^*$ consistently deliver strong performance, outperforming most of the competing methods. 

\subsection{Analysis of Real Microbiome Data}\label{sec:real-data-experiments}

We apply our method to two real-world microbiome datasets to demonstrate its utility across different tasks: the Crohn's disease (CD) study \citep{gevers2014treatment} for binary classification, and the vaginal microbiome study \citep{ravel2011vaginal} for continuous response regression.

We first apply our method to the CD study \citep{gevers2014treatment} to understand the association between CD status and ileum microbiome compositions. The data, available at ML Repo \citep{vangayMicrobiomeLearningRepo2019}, comprises treatment-naive pediatric patients with newly diagnosed CD. After removing taxa observed in fewer than five samples, we obtain $d=194$ microbial taxa counts at the highest available taxonomic resolution across $n=140$ subjects, with 82\% of counts equal to zero. These counts are normalized to compositions. The dataset includes 78 CD patients and 62 healthy controls, forming the binary response variable.

\begin{figure}%[t]
    \centering
   \includegraphics[width=.8\textwidth]{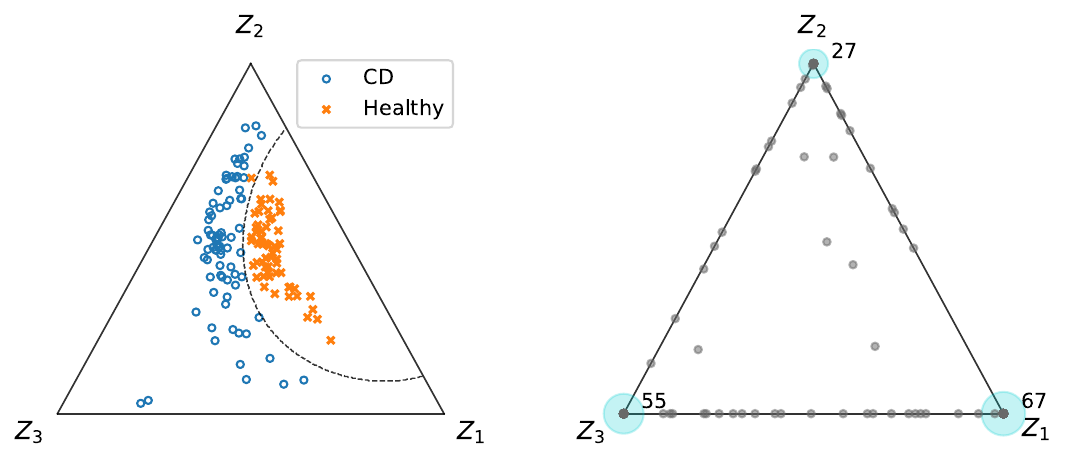}
   \spacingset{1}
    \caption{
    Dual visualization of the ileum microbiome dataset from CKDR-3. \textit{Left}: data projected onto $\Delta^2$, with the dashed curve showing the decision boundary estimated from the intrinsic predictive model. 
    \textit{Right}: variable allocation plot illustrating the contributions of the original variables to the dimension-reduced predictors. Cyan bubbles mark clusters of variables near vertices, with their sizes and labels indicating the cluster counts. 
    }
    \label{fig: ileum}
\end{figure}

\cref{fig: ileum} presents the dual visualization of the ileum microbiome data using CKDR. The left panel shows the projection onto $\Delta^2$, where CD and healthy subjects are clearly separated by a nonlinear decision boundary (dashed curve) derived from the intrinsic predictive model. The discrimination is primarily driven by the subcomposition $(z_1, z_3)$: higher relative abundance of $z_3$ over $z_1$ corresponds to CD, whereas the reverse indicates healthy status, with $z_2$ contributing little. The right panel displays the variable allocation plot, which reveals pronounced emergent sparsity: most columns $P_j = (p_{1j}, p_{2j}, p_{3j})^\top$ of $\wh P_n$ lie on the simplex boundary, with 77\% clustering near vertices ($\max_k p_{kj} > 0.9$). Within the subcomposition $(z_1, z_3)$, columns near the left and right edges correspond to higher abundance of $z_3$ and $z_1$, respectively. We interpret the left-edge cluster as CD-associated and the right-edge cluster as health-associated, with representative genera listed in \cref{tab:ileum_taxa}. 
These data-driven findings align with existing literature. Genera such as \textit{Haemophilus} and \textit{Lachnoclostridium}, which have been reported as enriched in CD patients \citep{metwalyIntegratedMicrobiotaMetabolite2020, alsulaimanGutMicrobiotaAnalyses2023}, cluster near the CD-associated edge. In contrast, short-chain fatty-acid (SCFA)--producing bacteria, including \textit{Blautia}, \textit{Eubacterium}, and \textit{Roseburia}, cluster near the healthy-associated edge, consistent with reports of their depletion in CD and their protective role in delaying disease progression \citep{zhangShortchainFattyAcids2023, maGutMicrobiotaEarly2022}.

\begin{table}%[t]
\centering
\caption{
Top 5 frequent genera (with the species counts in parentheses) near the left edge (CD) and the right edge (Healthy) of the variable allocation plot in \cref{fig: ileum}. The proximity to each edge is defined as $\{j\in[d]: p_{3j} > 10\cdot p_{1j}\}$ for CD and $\{j\in[d]: p_{1j} > 10\cdot p_{3j}\}$ for healthy, where $P_j = (p_{1j}, p_{2j}, p_{3j})^\top$.
}
\label{tab:ileum_taxa}
\footnotesize
\begin{tabularx}{\columnwidth}{l X} % X automatically adjusts the width
\toprule
CD & \textit{Bacteroides} (7); \textit{Dialister} (3); \textit{Haemophilus} (3); \textit{Lachnoclostridium} (3); \textit{Tyzzerella} (3)
\\
\addlinespace
Healthy & \textit{Blautia} (4); \textit{Erysipelatoclostridium} (4); \textit{Eubacterium} (4); \textit{Parabacteroides} (4); \textit{Roseburia} (4)
\\
\bottomrule 
\end{tabularx}
\end{table}

Next, we apply our method to the vaginal microbiome study \citep{ravel2011vaginal} to understand the association between bacterial vaginosis (BV) and vaginal microbiome compositions. The data, also available at ML Repo \citep{vangayMicrobiomeLearningRepo2019}, comprises $d=241$ microbial taxa counts at the highest available taxonomic resolution across $n=388$ subjects, with 91\% of counts equal to zero. These counts are normalized to compositions. The response variable is the Nugent score (0--10), a Gram stain-based diagnostic index for BV, where scores of 7--10 indicate BV and lower values indicate a healthy vaginal microbiome.

\begin{figure}%[t]
    \centering
    \includegraphics[width=.8\textwidth]{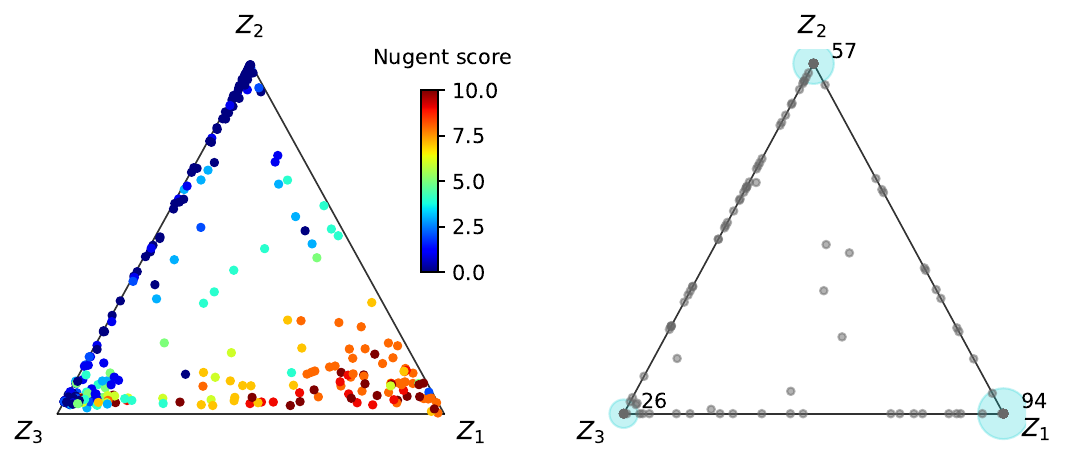}
    \caption{Dual visualization of the vaginal microbiome data, presented similarly to \cref{fig: ileum}.
    }
    \label{fig: ravel}
\end{figure}

\cref{fig: ravel} presents the dual visualization of the vaginal microbiome data. The left panel shows the projection onto $\Delta^2$, which reveals a clear Nugent score gradient: individuals cluster at the left edge with scores of 0, which increase toward the bottom edge as the relative abundance of $z_1$ over $z_2$ grows. Thus, the response is primarily explained by the subcomposition $(z_1, z_2)$, with $z_3$ contributing little. The right panel displays the variable allocation plot, which shows a similar sparsity pattern as before. We interpret the bottom-edge cluster as associated with high Nugent scores and the left-edge cluster as associated with low Nugent scores; representative genera are listed in \cref{tab:vaginal_taxa}. 
These identified taxa align well with prior findings: \textit{Prevotella} and \textit{Peptoniphilus} appear near the bottom edge and are known to be associated with BV, whereas \textit{Lactobacillus} appears near the left edge, consistent with its protective role in maintaining vaginal health \citep{abou2022bacterial}.

\begin{table}%[t]
\centering
\spacingset{1}
\caption{Top 5 frequent genera (with the species counts in parentheses) near the bottom edge (High-NS) and the left edge (Low-NS) from the variable allocation plot of \cref{fig: ravel}, where NS denotes the Nugent score.}
\label{tab:vaginal_taxa}
\footnotesize
\begin{tabularx}{\columnwidth}{l X} % X automatically adjusts the width
\toprule
High-NS & \textit{Anaerococcus} (9); \textit{Corynebacterium} (9); \textit{Prevotella} (9); \textit{Peptoniphilus} (8); \textit{Streptococcus} (8)
\\
\addlinespace
Low-NS & \textit{Streptococcus} (8); \textit{Lactobacillus} (6); \textit{Bacteroides} (5); \textit{Corynebacterium} (5); \textit{Peptoniphilus} (4)
\\
\bottomrule
\end{tabularx}
\end{table}

Finally, we assess the prediction performance of the CKDR method against the existing competitors. Test performance is averaged over 100 random 80/20 train-test splits. For the CD dataset, we measure the misclassification rate (MCR); for the vaginal dataset, we evaluate the mean squared error (MSE). 
\cref{tab: real prediction combined} summarizes the prediction results for both datasets. For the CD classification task, CKDR-5 achieves the lowest average MCR, while CKDR-3 and CKDR$^*$ also significantly outperform LC-Lasso, clr-Kernel, and clr-RF. For the Nugent score regression task, CKDR-5 again yields the lowest average MSE, with CKDR-3 and CKDR$^*$ achieving nearly identical results to CKDR-5. All CKDR variants significantly outperform LC-Lasso and RS-ES, while clr-Kernel and clr-RF show no statistically significant difference from CKDR-5. These consistently superior results confirm that, although CKDR is primarily designed for dimension reduction, its intrinsic predictive model is highly competitive, and in particular, CKDR-3 provides intuitive low-dimensional visualizations that generalize well to unseen data.

\begin{table}[b]
    \spacingset{1}
    \centering
    \caption{Prediction performance for ileum and vaginal microbiome data. The ileum row reports misclassification rate (MCR, \%; standard errors in parentheses), and the vaginal row reports mean squared error (standard errors in parentheses).}
    \label{tab: real prediction combined}
    \scriptsize
    \begin{tabular*}{\columnwidth}{@{\extracolsep{\fill}}lccccccc}
        \toprule
         & CKDR-3 & CKDR-5 & CKDR$^*$ & LC-Lasso & clr-Kernel & clr-RF & RS-ES \\
        \midrule
        Ileum MCR (\%) & 26.3 (0.8) & \textbf{25.5 (0.8)} & 26.3 (0.8) & 28.5 (0.7) & 28.3 (0.6) & 34.5 (0.9) & -- \\
        Vaginal MSE & 3.27 (0.08) & \textbf{3.26 (0.08)} & 3.27 (0.08) & 3.76 (0.06) & 3.50 (0.06) & 3.31 (0.07) & 4.20 (0.09) \\
        \bottomrule
    \end{tabular*}%
\end{table}

\section{Discussions}\label{sec: discussions}

This paper proposes a novel approach for interpretable dimension reduction of compositional data. 
Extending the concept of amalgamation, the CDR framework preserves the simplex geometry, thereby eliminating the need to impute zeros, and its dual visualization effectively conveys an at-a-glance interpretation.
A new composition-tailored SDR is formulated under this framework. For estimation, we develop the CKDR method that yields sparse estimation. We reveal an intrinsic predictive model that enables downstream tasks and optimization via successive convex approximation, and we establish the estimator's consistency through a novel subspace-comparison argument. Applications to microbiome data illustrate the effectiveness of our method in generating predictive low-dimensional visualizations that reveal biologically interpretable patterns. Python codes for the proposed method and experiments are available at \texttt{\url{https://github.com/pjywang/CKDR}}.

\section*{Acknowledgements}
The work of Jeongyoun Ahn and Cheolwoo Park was partially supported by the National Research Foundation of Korea (RS-2022-NR068758).

\putbib
\end{bibunit}

\newpage
\appendix

\renewcommand{\refname}{Supplementary References}
\begin{bibunit}

\renewcommand{\thetable}{S\arabic{table}}  
\renewcommand{\thefigure}{S\arabic{figure}}

\setcounter{figure}{0}
\setcounter{table}{0}
\setcounter{page}{1}

\numberwithin{theorem}{section} \makeatletter \let\c@lemma=\c@theorem \let\c@proposition=\c@theorem \makeatother \renewcommand\thelemma{\thetheorem} 
\renewcommand\theproposition{\thetheorem} 
\renewcommand\thecorollary{\thetheorem}
\renewcommand\thedefinition{\thetheorem}

\vspace*{1em}
\spacingset{1}
\begin{center}
\LARGE{
Supplementary material for ``Geometry-preserving and interpretable dimension reduction of compositional data''
}
\vskip 1em
\end{center}
\bigskip
\begin{abstract}
    \noindent In this supplementary material, we provide technical details of the proposed approach and proofs of the main theoretical results. Section~\ref{sec:supp-sdr} presents technical details and proofs for the proposed compositional SDR framework. Section~\ref{sec:supp-operator-prelim} provides brief preliminaries on random elements in Hilbert spaces. In Section~\ref{sec:supp-kdr}, we develop technical results for the proposed CKDR method introduced in Section~\ref{sec:compositional-kdr}. Finally, the consistency of the CKDR estimator is proved in Section~\ref{sec:supp-consistency-proof}.
    
\end{abstract}
\bigskip

\spacingset{1.5}

\section{Technical details of compositional SDR}\label{sec:supp-sdr}

This section proves the essential results of compositional SDR discussed in \cref{sec:compositional-SDR}. Section~\ref{sec:supp-sdr1} proves the nonexistence of the traditional central subspace with compositional predictors, while Section~\ref{sec:supp-central-compo-subsp} proves the existence of the central compositional subspace. Additionally, Section~\ref{sec:supp-suff-amalg} provides the equivalence between equi-sparse columns in compositional SDR and sufficient amalgamation.
Throughout the section, $X\in\Delta^{d-1}$ is a random compositional predictor variable, $Y$ a random response, and $\supp X$ denotes the support of the distribution $X$ inside $\Delta^{d-1}$. 

\subsection{Nonexistence of the classical central subspace}\label{sec:supp-sdr1}

We prove Lemma~\ref{lemma:zero-central-subs}.
For each $j\in\{1,\ldots, d\}$, define a coordinate-dropping matrix $$B_{-j} = (e_1,\ldots,e_{j-1}, e_{j+1},\ldots,e_d)^\top\in\R^{(d-1)\times d},$$ where the $e_j$ are standard basis vectors in $\R^d$. Note that $B_{-j}$ does not belong to $\cm_{(d-1), d}$ since the $j$th column is zero. These matrices establish the relations
$$Y\ind X\,|\, B_{-j}X\quad \text{for all}\quad j=1,\ldots,d,$$
as the unit-sum constraint on $X \in \Delta^{d-1}$ allows removing each variable $X_j = 1 - \sum_{k \neq j} X_k$ without losing information of $X$. Thus, the matrices $B_{-j}$ are SDR matrices, whose row space is spanned by the vectors $e_1,\ldots,e_{j-1}, e_{j+1},\ldots,e_d$. Therefore, the intersection of all SDR subspaces is always zero, proving that the traditional central subspace does not exist for compositional predictors. \qed

\subsection{Existence of the central compositional subspace}\label{sec:supp-central-compo-subsp}
In this section, we prove that the central compositional subspace $\mathcal{C}_{Y|X}$ exists under a milder assumption (Proposition~\ref{prop:intersection-csdr}). The existence of the central mean compositional subspace $\mathcal{C}_{\E[Y|X]}$ is verified by the same logic, which we omit for brevity.

The argument adapts the classical SDR theory and proceeds similarly, except for the additional existence argument for a CDR matrix $P\in\cm_{m,d}$ corresponding to the intersection of an arbitrary collection of CSDR subspaces---Lemma~\ref{lemma:1subspace} guarantees this, proved at the end of this section. We first introduce a technical but mild condition on subsets of simplices, called $M$-sets, where $M$ stands for ``matching'' \citep{yinSuccessiveDirectionExtraction2008}, adapted to our compositional setting.

\begin{definition}
    A subset $\mathfrak{M}$ of $\Delta^s\times \Delta^t$ is an $M$-set if, for every two pairs $(u, v)$ and $(u', v')$ in $\mathfrak M$, there is a sequence of pairs $(u^{(0)}, v^{(0)}), \ldots, (u^{(l)}, v^{(l)}) $ in $\mathfrak M$ such that (i) $(u^{(0)}, v^{(0)}) = (u, v) $ and $(u^{(l)}, v^{(l)}) = (u', v')$; (ii) for each $i=1, \ldots, l-1$, at least one coordinate remains fixed: $u^{(i)} = u^{(i+1)} $ or $v^{(i)} = v^{(i+1)} $.
\end{definition}

The definition intuitively says that any two pairs $(u,v),\ (u',v')\in\mathfrak{M}$ can be connected by a ``stairway'', where subsequent pairs $(u^{(i)}, v^{(i)})$ and $(u^{(i+1)}, v^{(i+1)})$ share one coordinate value. This is a very mild condition. For example, any open and connected subset $\mathfrak{M}$ of $\Delta^s\times \Delta^t$ is an $M$-set because any two points can be connected by a path, covered by a finite collection of open balls in $\mathfrak{M}$, within which we can locally replace the path with stairways by fixing one component while varying the other. One can also construct disconnected $M$-sets in various scenarios \citep{yinSuccessiveDirectionExtraction2008}.

Returning to CSDR, let $\mathscr{S}_1$ and $\mathscr{S}_2$ be CSDR subspaces of dimensions $m$ and $k$, spanned by rows of $P\in\cm_{m,d}$ and $Q\in\cm_{k,d}$, respectively. Letting $r = \dim(\mathscr{S}_1\cap \mathscr{S}_2)$, we choose a CDR matrix $R\in\cm_{r,d}$ such that $\row(R) = \mathscr{S}_1\cap\mathscr{S}_2$ using Lemma~\ref{lemma:1subspace}. For any point $z$ in the simplex $\Delta^{r-1}$, define
$$\Omega_z = \br*{(Px, Qx)\in\Delta^{m-1}\times\Delta^{k-1}: Rx = z}. $$
Then, the joint distribution $(X,Y)$ is said to satisfy the \textit{$M$-set condition} if $\Omega_z$ is an $M$-set for every projection $z=Rx$ of the point $x\in\Delta^{d-1}$ in the interior of the support of $X$, and for every pair of CSDR subspaces $(\mathscr{S}_1, \mathscr{S}_2)$. 

It is easy to see that under conditions of Proposition~\ref{prop:intersection-csdr}, the joint distribution $(X, Y)$ satisfies the $M$-set condition: letting $S_z = \{x \in \text{rel-int}(\supp X): Rx = z\}$, where rel-int denotes the relative interior to $\Delta^{d-1}$, the slice $S_z$ is path-connected and open relative to the hyperplane $\{x: Rx= z\}$, from which we can construct the stairway in $\Omega_z$ using a finite relative-open cover of a path connecting two points within $S_z$. As mentioned above, the $M$-set condition is much milder than having a convex support with nonempty interior.
Therefore, the following proposition completes the proof of Proposition~\ref{prop:intersection-csdr} under a more general scenario:

\begin{proposition}
    Suppose that the joint pair $(X, Y)$ satisfies the $M$-set condition. Then, the intersection of any collection of CSDR subspaces is itself a CSDR subspace.
\end{proposition}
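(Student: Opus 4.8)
The plan is to reduce the arbitrary-collection statement to the case of two subspaces and then to prove the two-subspace case by the stairway argument that the $M$-set condition is tailored to support. First I would record that every CSDR subspace lies in $\R^d$ and contains $\1_d$, so the intersection of an arbitrary collection $\{\mathscr{S}_\alpha\}$ coincides with the intersection of a finite subcollection: greedily select subspaces that strictly lower the running dimension, which terminates in at most $d$ steps. It therefore suffices to show (i) that the intersection of two CSDR subspaces is again a CSDR subspace, and (ii) that this can be iterated. The iteration is immediate once (i) is in hand, because each running intersection $\mathscr{S}_{\alpha_1}\cap\cdots\cap\mathscr{S}_{\alpha_j}$ is itself a CSDR subspace containing $\1_d$, and the $M$-set hypothesis is assumed to hold for \emph{every} pair of CSDR subspaces, so it applies to the pair formed by the running intersection and the next subspace.

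For the core two-subspace case, write $\mathscr{S}_1=\row(P)$ and $\mathscr{S}_2=\row(Q)$ with $Y\ind X\mid PX$ and $Y\ind X\mid QX$. Since $\1_d\in\mathscr{S}_1\cap\mathscr{S}_2$, \cref{lemma:1subspace} supplies a column-stochastic $R\in\cm_{r,d}$ with $\row(R)=\mathscr{S}_1\cap\mathscr{S}_2$ and $r=\dim(\mathscr{S}_1\cap\mathscr{S}_2)$; the goal is to prove $Y\ind X\mid RX$. The conditional independence $Y\ind X\mid PX$ means the conditional law of $Y$ given $X=x$ factors through $Px$, so there is a map $h_P$ with $\P(Y\in\cdot\mid X=x)=h_P(\cdot,Px)$, and likewise $\P(Y\in\cdot\mid X=x)=h_Q(\cdot,Qx)$. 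Fixing a level $z$ and two support points $x,x'$ in the slice $\{Rx=z\}$, the $M$-set structure of $\Omega_z$ produces a finite stairway $(Px^{(i)},Qx^{(i)})_{i=0}^{l}$ joining $(Px,Qx)$ to $(Px',Qx')$ in which consecutive pairs agree in at least one coordinate. On any step where the $P$-coordinate is held fixed, the $h_P$-representation shows the conditional law is unchanged; on any step where the $Q$-coordinate is held fixed, the $h_Q$-representation does the same. Chaining along the stairway forces the conditional law of $Y$ to agree at $x$ and $x'$, so it depends on $x$ only through $Rx$; hence $Y\ind X\mid RX$ and $\mathscr{S}_1\cap\mathscr{S}_2$ is a CSDR subspace.

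The logical skeleton is the compositional transcription of the Euclidean argument of \citet{yinSuccessiveDirectionExtraction2008}, so the genuinely new content is twofold. The first is merely the existence of the column-stochastic representative $R$, which is nontrivial in the nonnegative, unit-sum-constrained setting and is precisely what \cref{lemma:1subspace} delivers. The second, and the real obstacle, is the measure-theoretic bookkeeping underlying the factorizations $h_P,h_Q$ and the pointwise stairway: conditional laws are defined only up to null sets, so $\P(Y\in\cdot\mid X=x)=h_P(\cdot,Px)$ holds only for almost every $x$, and one must ensure each node of the stairway lands in the relevant full-measure set. I would handle this by working on $\text{rel-int}(\supp X)$ and using separability of $\cy$ to reduce the equality of conditional laws to countably many events $\{Y\in A\}$, governed by a single null set; the path-connected, relatively open structure of the slices $\{x\in\text{rel-int}(\supp X):Rx=z\}$ then lets a connecting path be converted into a finite sequence of coordinate-fixing moves through support points, via a finite relatively-open cover, exactly as sketched after the $M$-set definition. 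This completes the proof of \cref{prop:intersection-csdr} under the stated milder hypothesis.
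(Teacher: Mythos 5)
Your proposal is correct and takes essentially the same route as the paper: the paper omits this proof precisely because it parallels Proposition 6.4 of \citet{cook1998regression} via the stairway/$M$-set chaining argument of \citet{yinSuccessiveDirectionExtraction2008}, with the only substantively new ingredient being the existence of a column-stochastic matrix $R$ whose row space is the intersection, supplied by \cref{lemma:1subspace} --- exactly the two components you identify and assemble (reduction to finitely many pairwise intersections, the $h_P$/$h_Q$ factorizations, and chaining along the stairway in $\Omega_z$). Your closing measure-theoretic caveat about null sets is a reasonable refinement, but it does not separate your argument from the paper's, which (like the cited literature) treats conditional laws at the same pointwise level of rigor.
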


The proof of this proposition essentially parallels Proposition 6.4 of \citet{cook1998regression}, as also noted in \citet{yinSuccessiveDirectionExtraction2008}, and is therefore omitted. The only substantive difference arises from the geometry \textit{relative to} the simplex, while the classical SDR relies on the geometry in the ambient Euclidean space. In particular, the $M$-set argument for classical SDR fails for compositional predictors because openness relative to the simplex does not translate to openness in the Euclidean setting. This dimension deficiency violates the Euclidean version of the $M$-set condition for compositions and enables the counterexample in Lemma~\ref{lemma:zero-central-subs}.
\qed

\subsubsection{Proof of Lemma~\ref{lemma:1subspace}}
    We prove the existence by explicit construction of a CDR matrix $P\in\cm_{m,d}$ with $\row(P) = V$. Note that this existence is automatic in classical Euclidean SDR, but it becomes nontrivial in our nonnegative, unit-sum-constrained framework, necessitating some geometric arguments.
    
    Let $W_c$ denote the affine hyperplane $\{x\in\R^d\,|\,x_1+\cdots + x_d = c\} $ in $\R^d$ for $c\in\R$. Denote 
    $$V' := V \cap W_0 = V\cap W_d - \1_d$$ by the $m-1$ dimensional subspace of $V$ without the vector $\1_d$.

    Pick any basis vectors $u_1,\ldots, u_{m-1}$ that spans $V'$, and let $u_m = -(u_1+\cdots+u_{m-1})$. Then, choose a sufficiently large number $N > 0 $ so that every vector 
    $$v_i:= u_i + N \1_d \in V$$
    has strictly positive components. Then, the $v_i$ are \textit{linearly independent}, as the vectors $\{u_i\}_{i=1}^m$ are affinely independent and $\1_d$ is not contained in their linear span. The $v_i$ are thus positive vectors spanning the subspace $V$ since $m=\dim V$. As we have the equality
    $v_1+\cdots+v_m = mN\1_d,$
    the matrix $$P = (\frac{1}{mN}v_1,\ldots \frac{1}{mN}v_m)^\top$$ is a column-stochastic matrix contained in $\cm_{m,d}$. This CDR matrix $P$ has positive entries and $\row(P) = V$, completing the proof. \qed

\subsection{Equi-sparsity and sufficient amalgamation}\label{sec:supp-suff-amalg}
This section provides an equivalence between the equi-sparsity structure of columns in compositional SDR and sufficient amalgamation mentioned in Remark~\ref{rmk:suff-amalgam}.

Sufficient amalgamation is defined by $Y \ind X \,|\,AX$, where $A\in\cm_{m,d}$ is a binary CDR matrix, thus a binary CSDR matrix. Let $\mathcal{A}_{Y|X}$ denote the \emph{central amalgamation subspace}, defined as the intersection of the row spaces of all binary CSDR matrices. This minimal subspace partitions the variables of $X$, corresponding to a partition of the index set $[d]=\{1,\ldots, d\}$. The following lemma establishes a natural connection between equi-sparsity in CSDR and sufficient amalgamation:

\begin{lemma}\label{lemma:suff-amalgam}
    Suppose the rows of $P\in\cm_{m,d}$ span the central compositional subspace $\mathcal{C}_{Y|X}$. Define the partition $\mathcal{P}(P)$ of $[d]$ by grouping indices according to identical columns: $\mathcal{P}(P) = \{I\subseteq[d]: P_i = P_j \text{ for all }(i, j)\in I\times I\}$. Then $\mathcal{P}(P) = \mathcal{A}_{Y|X}$.
\end{lemma}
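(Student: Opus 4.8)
The plan is to identify both $\mathcal{P}(P)$ and $\mathcal{A}_{Y|X}$ with partitions of $[d]$ and then show the two partitions agree block for block. First I would set up a partition–subspace dictionary. For a partition $\mathcal{G}$ of $[d]$, write $V_\mathcal{G}=\{v\in\R^d : v_i=v_j \text{ whenever } i,j \text{ lie in a common block}\}$, the span of the block indicators; a binary CDR matrix $A$ satisfies $\row(A)=V_{\mathcal{G}(A)}$, where $\mathcal{G}(A)$ is the partition induced by its columns. Since an intersection of partition subspaces is again a partition subspace (the vectors constant on the blocks of the join of the partitions), the central amalgamation subspace is $\mathcal{A}_{Y|X}=V_{\mathcal{P}^*}$, where $\mathcal{P}^*$ is the join of all binary CSDR partitions; this is the partition that the statement abbreviates by $\mathcal{A}_{Y|X}$, so it suffices to prove $\mathcal{P}(P)=\mathcal{P}^*$. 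The two elementary facts I would record are: $P_i=P_j \iff e_i-e_j\in\mathcal{C}_{Y|X}^\perp$, since $\mathcal{C}_{Y|X}^\perp=\row(P)^\perp=\ker P$ and $P(e_i-e_j)=P_i-P_j$; and $i,j$ share a $\mathcal{P}^*$-block $\iff e_i-e_j\in\mathcal{A}_{Y|X}^\perp$, since $e_i-e_j$ sums to zero on every block precisely when $i,j$ lie in the same one.

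The easy inclusion is that $\mathcal{P}^*$ refines $\mathcal{P}(P)$. Every binary CSDR matrix is a CSDR matrix, so the binary CSDR subspaces form a subfamily of all CSDR subspaces, whence $\mathcal{C}_{Y|X}\subseteq\mathcal{A}_{Y|X}$ and therefore $\mathcal{A}_{Y|X}^\perp\subseteq\mathcal{C}_{Y|X}^\perp$. Combined with the two facts above, if $i,j$ share a $\mathcal{P}^*$-block then $e_i-e_j\in\mathcal{A}_{Y|X}^\perp\subseteq\mathcal{C}_{Y|X}^\perp$, hence $P_i=P_j$, so $i,j$ share a $\mathcal{P}(P)$-block.

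The reverse inclusion—$\mathcal{P}(P)$ refines $\mathcal{P}^*$—is the main obstacle, and it is where the amalgamation content enters. I would show that the binary matrix $A$ amalgamating exactly the blocks of $\mathcal{P}(P)$ is itself a sufficient amalgamation. The key algebraic observation is that, grouping equal columns, $PX=\sum_\ell (AX)_\ell\, c_\ell = C(AX)$, where $c_\ell$ is the common column of $P$ on the $\ell$-th block and $C=[c_1,\dots,c_r]$; hence $\sigma(PX)\subseteq\sigma(AX)\subseteq\sigma(X)$. Because $\row(P)=\mathcal{C}_{Y|X}$ is a CSDR subspace under the standing assumption that \cref{prop:intersection-csdr} holds, we have $Y\ind X\mid PX$, i.e. $\E[g(Y)\mid X]=\E[g(Y)\mid PX]$ for every bounded measurable $g$. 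A tower-property argument then upgrades this to $AX$: since $\E[g(Y)\mid X]=\E[g(Y)\mid PX]$ is already $\sigma(AX)$-measurable, conditioning it on $\sigma(AX)$ leaves it unchanged, so $\E[g(Y)\mid AX]=\E[\E[g(Y)\mid X]\mid AX]=\E[g(Y)\mid X]$, giving $Y\ind X\mid AX$. Thus $A$ is a binary CSDR matrix, so $\mathcal{A}_{Y|X}\subseteq\row(A)=V_{\mathcal{P}(P)}$; by the order-reversing dictionary ($V_{\mathcal{P}^*}\subseteq V_{\mathcal{P}(P)}$ means $\mathcal{P}(P)$ is finer) this says $\mathcal{P}(P)$ refines $\mathcal{P}^*$.

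Combining the two inclusions yields $\mathcal{P}(P)=\mathcal{P}^*$, that is $\mathcal{P}(P)=\mathcal{A}_{Y|X}$. I expect the only genuinely delicate point to be the sufficiency of $A$: the reduction $AX$ is typically of higher rank than $PX$ (it resolves each amalgam separately, whereas $PX$ passes through the possibly rank-deficient matrix $C$), so one cannot exhibit $AX$ as a function of $PX$ and must instead argue through the nesting $\sigma(PX)\subseteq\sigma(AX)$. Everything else is bookkeeping with the partition lattice and the two orthogonality characterizations.
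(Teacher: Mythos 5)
Your proof is correct and takes essentially the same route as the paper's: both directions coincide with the paper's two partition inequalities, namely constructing the binary matrix $A$ that amalgamates the blocks of $\mathcal{P}(P)$ and showing it is a sufficient amalgamation, and then using the minimality inclusion $\mathcal{C}_{Y|X}\subseteq\mathcal{A}_{Y|X}$ for the converse. The only differences are presentational: you prove the transfer $Y\ind X\mid AX$ from scratch via $PX = C(AX)$, $\sigma(PX)\subseteq\sigma(AX)$, and the tower property, where the paper instead cites Proposition 2.3 of Li (2018), and you track column equality through orthogonal complements ($e_i-e_j\in\ker P$) rather than through row-space inclusions.
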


This result is similar to the sufficient variable selection in the sparse SDR literature \citep{yinSequentialSufficientDimension2015, zengSubspaceEstimationAutomatic2024}, where sparsity enables recovery of the minimal sufficient set of predictors. Analogously, in the compositional setting, equi-sparsity in CSDR leads to sufficient amalgamation, identifying groups of functionally similar variables that can be merged without information loss.

\subsubsection{Proof of Lemma~\ref{lemma:suff-amalgam}}

For any partitions $\mathcal{P}_1$ and $\mathcal{P}_2$ of $[d]$, denote $\mathcal{P}_1\le \mathcal{P}_2$ if $\mathcal{P}_1$ is coarser than $\mathcal{P}_2$, which defines a partial order of partitions. We will prove the two inequalities $\mathcal{P}(P)\le \mathcal{A}_{Y|X}$ and $\mathcal{P}(P) \ge \mathcal{A}_{Y|X}$.

To begin, let $e_1,\ldots, e_d \in \R^d$ be the standard basis vectors of $\R^d$. For each subset of indices $I\subseteq [d]$, define $e_I = \sum_{i:i\in I} e_i$, a binary vector with 1's at the indices of $I$. Writing $\mathcal{P}(P) = \{I_1,\ldots, I_s\}$, we can form a binary CDR matrix 
$$A = [e_{I_1},\ldots, e_{I_s}]^\top\in\cm_{s, d}.$$
By construction, the amalgamation matrix $A$ has the same column equality structures as $P$, forming the same partition $\mathcal{P}(A) = \mathcal{P}(P)$ from the columns. Thus, the rows of $P$ are linear combinations of the $e_{I_j}$; i.e., $\row(P)\subseteq \row(A)$. Since $P$ is a matrix satisfying the SDR relation, $A$ also satisfies $Y\ind X\,|\,AX$ by Proposition 2.3 of \citet{li2018sufficient}, establishing the inclusion of sufficient amalgamation subspaces $$\mathcal{A}_{Y|X}\subseteq \row (A).$$
At the corresponding partition level of amalgamation subspaces, this inclusion indicates that the partition $\mathcal{A}_{Y|X}$ is coarser than $\row(A)$, and thus $$\mathcal{A}_{Y|X}\le \mathcal{P}(A) = \mathcal{P}(P).$$

For the reverse inequality, let $A'$ be another binary CDR matrix with $\row(A') = \mathcal{A}_{Y|X}$. Letting $\mathcal{P}(A') = \{J_1,\ldots, J_t\}$, which is coarser than $\mathcal{P}(P)$, we can similarly assume that $A' = [e_{J_1},\ldots, e_{J_t}]^\top\in\cm_{t, d}$. Then, since $\row(P) \subseteq \row(A')$ due to the minimality of $\row(P)$, the rows of $P$ are linear combinations of the binary vectors $e_{J_k}$. Thus, if $i, j\in J_k$ for some $J_k$, then $P_i = P_j$ holds. This essentially shows that each $J_k$ is contained in one of the index sets $I_l$ of $\mathcal{P}(P)$, proving the reverse inequality 
$$\mathcal{P}(P) \le \mathcal{P}(A') = \mathcal{A}_{Y|X}.$$ This completes the proof of the equality $\mathcal{P}(P) = \mathcal{A}_{Y|X}$. 
\qed

\section{Preliminaries on random elements in a Hilbert space}\label{sec:supp-operator-prelim}

Before pursuing the technical exposition of our CKDR method, we introduce some preliminary notions regarding random elements in separable Hilbert spaces, along with their mean and covariance. For further properties and proofs related to these notions, see \citet{hsingTheoreticalFoundationsFunctional2015}.

Given a probability space $(\Omega,\P)$ with a Borel $\sigma$-field and a real separable Hilbert space $(\mathcal{H},\langle\cdot,\cdot\rangle_{\mathcal{H}})$, a measurable mapping \(F\colon \Omega \longrightarrow \mathcal{H}\)
is called a \emph{random element} on $\ch$. In our RKHS $(\ch_\cx, k_\cx)$ on $\cx$ and a random vector $X\in\cx$, the RKHS embedding $\Phi:= k_\cx(X,\cdot)\in\ch_\cx$ defines a random element on $\ch_\cx$, which is of our central interest. 
Since $\E[k_\cx(X, X)] < \infty$, we always have the finite second moment: $\E[\|\Phi\|_{\ch_\cx}^2] < \infty$, where $\|\cdot\|_\ch$ denotes the norm on $\ch$.

If \(\E[\|F\|_{\mathcal{H}}] < \infty,\) , the random element $F$ is \textit{Bochner integrable} (Section 2.6 of \citet{hsingTheoreticalFoundationsFunctional2015}), defining a \textit{mean element} of $F$ via:
$$\E[F] := \int_\Omega F d\P.$$
The mean element is characterized by its inner products:
\[
\langle \E[F],\, h \rangle_{\mathcal{H}}
=
\E[\langle F,\,h\rangle_{\mathcal{H}}],
\quad
\text{for all }h \in \mathcal{H},
\]
where the equality follows from the fact that the Bochner integral is interchangeable with bounded linear functionals.

Consider another Hilbert space $(\cg,\langle\cdot,\cdot\rangle_\cg)$ and a random element $G\in\cg$. If two second moments are bounded, $\E[\|F\|_{\mathcal{H}}^2], \E[\|G\|_\cg^2] < \infty$, the \emph{cross-covariance operator} is defined by
\[
\s_{GF} := \E[(G - \E[G])\otimes (F - \E[F])]\in \cg\otimes\ch,
\]
where any rank-one operator $y\otimes x\in\cg\otimes\ch$ acts as
\[
(y\otimes x)\,h
=
\langle x,h\rangle_{\mathcal{H}}\,y,
\quad h\in\mathcal{H}.
\]
The tensor product space $\cg\otimes\ch$ is isometric to the space of Hilbert-Schmidt operators, hence the norm is given by:
$$\|\s_{GF}\|_{HS}^2 = \|\E[(G - \E[G])\otimes (F - \E[F])]\|_{\cg\otimes\ch}.$$
In case $\ch=\cg$ and $F= G$, $\s_{FF}$ is called the \textit{covariance operator}, which is self-adjoint, positive semi-definite, and trace-class with 
\[
\Tr(\s_{FF})
=
\E[\|F - \E[F]\|_{\mathcal{H}}^2]
<\infty.
\]

Finally, we note that by plugging in $G=k_\cy(Y,\cdot)\in\ch_\cy$ and $F=k_\cx(X,\cdot)\in\ch_\cx$, the operator $\s_{GF}$ coincides with the cross-covariance operator $\s_{YX}$ defined on RKHSs in \eqref{eq: cov-def}.

\section{Compositional KDR formulation (\cref{sec:compositional-kdr})}\label{sec:supp-kdr}

This section provides technical details about our CKDR method given in \cref{sec:compositional-kdr}. 
In Section~\ref{sec:supp-kdr-sdrproof} we provide the proof of Proposition~\ref{thm:sdr-population-guarantee}. Section~\ref{sec:supp-kdr-pullback} provides the compatibility result between our target-domain formulation and the prior KDR development (Remark~\ref{rmk:re-embedding}). In Section~\ref{sec:supp-vv-kdr-proofs} we prove the results on the intrinsic predictive model of KDR, introduced in \cref{sec: vv-krr}. Note that the results in this section can naturally extend to general Euclidean settings, including the classical Stiefel manifold and beyond. 

We begin with a key equality regarding the conditional covariance operator $\s_{YY|X}$, which holds whenever the kernel $k_\cx$ is characteristic \citep{fukumizuKernelDimensionReduction2009}:
\begin{equation}
    \langle g, \s_{YY|X} g \rangle_{\ch_\cy} = \E[\var[g(Y)|X]]. \label{eq:condvar}
\end{equation}
This equality is instrumental in proving the results of this section and Section~\ref{sec:supp-consistency-proof}.

\subsection{Proof of Proposition~\ref{thm:sdr-population-guarantee}}\label{sec:supp-kdr-sdrproof}

The equation \cref{eq:residual} and the $L^2$-density of $\ch_\cx$ implies that
    \begin{align*}
        \langle g, \s_{YY|p(X)}g \rangle_{\ch_\cy} &= \inf_{h\in\ch_\cz} \var(g(Y) - h(p(X)))  \\
        &\overset{(*)}{=} \inf_{f\in\ch_\cx^p}\var(g(Y) - f(X))\quad\ \\
        &\ge \inf_{f\in\ch_\cx}\var(g(Y) - f(X)) = \langle g, \s_{YY|X}g\rangle_{\ch_\cy},
    \end{align*}
    which proves the L\"owner ordering of operators. In the second line, the space $\ch_\cx^p$ denotes the RKHS associated with the pullback kernel $k_\cx^p(x, x') := k_\cz(p(x), p(x'))$. The second equality $(*)$ holds since $\ch_\cx^p=\{h\circ p:\cx\to\cz \,|\, h\in\ch_\cz \}$ (see pullback theorem in Section~\ref{sec:supp-kdr-pullback}), and the inequality holds since $\ch_\cx^p$ is continuously embedded in $L^2(\P_X)$ due to the boundedness assumption $\E[k_\cz(Z, Z)] < \infty$ (\cref{sec:conditional-covariance}).
    
    For the equality case, we use \eqref{eq:condvar} under the characteristicity assumption of $k_\cz$:
    \begin{equation}\label{eq:diag-condcov-comparison}
        \langle g, (\s_{YY|p(X)} - \s_{YY|X})g \rangle_{\ch_\cy} = \E[\var(g(Y)|p(X))] - \E[\var(g(Y)|X)]. 
    \end{equation}
    Letting $Z = p(X)$, the law of total variance implies that
    $$\var(g(Y)|Z) = \E[\var(g(Y)|X, Z)|Z] + \var(\E[g(Y)|X, Z]|Z),$$
    which yields
    \begin{align*}
        \E[\var(g(Y)|Z)] &= \E[\E[\var(g(Y)|X, Z)|Z]] + \E[\var(\E[g(Y)|X, Z]|Z)]\\
        &=\E[\var(g(Y)|X, Z)] + \E[\var(\E[g(Y)|X, Z]|Z)] \\ 
        &= \E[\var(g(Y)|X)] + \E[\var(\E[g(Y)|X]|Z)],
    \end{align*}
    where the last equality uses the inclusion of the $\sigma$-fields $\sigma(Z)\subset\sigma(X)$. Then, we have
    \begin{align*}
        \s_{YY|Z} = \s_{YY|X} &\iff \E[\var(\E[g(Y)|X]|Z)] = 0,\ \forall g\in\ch_\cy \\
        &\iff \var(\E[g(Y)|X]|Z) = 0 \quad\text{a.s. } \forall g\in\ch_\cy\\
        &\iff \E[g(Y)|X] = \E[g(Y)|Z] \quad\text{a.s. } \forall g\in\ch_\cy. 
    \end{align*}
    Based on this equivalence, we prove parts (i) and (ii) of the SDR guarantees.
    
\paragraph*{Part (i): SDR guarantee.}
    The assumption that $k_\cy$ is characteristic ensures that for all measurable set $A\subset \cy$, the indicator function $\chi_A$ on $A$ is approximated by $\ch_\cy$-functions up to a constant; i.e., $$\E[g(Y)|X] = \E[g(Y)|Z] \text{ a.s., } \forall g\in\ch_\cy \iff \P_{Y|X} = \P_{Y|Z}, $$
    where the last equality is equivalent to the SDR $Y\ind X \,|\, Z$. \qed

\paragraph*{Part (ii): SDR for conditional mean.}
    In this case, note that our finiteness assumption for the kernels implicitly assumes that $\E\|Y\|^2_\ch < \E[\langle Y, Y\rangle_\ch] <\infty$, assuring the existence of the vector-valued mean of $Y$ via Jensen's inequality. Also, the linear kernel enables identifying $\ch_\cy$ as a subspace of $\ch$; we thus write $g(Y) = \langle g, Y\rangle_\ch = \langle g, Y\rangle_{\ch_\cy}$ for all $g\in\ch_\cy$ by abusing notations.
    
    Suppose first that $\s_{YY|Z} = \s_{YY|X}$. Since any continuous linear functional commutes with Bochner integration (see e.g., Theorem 3.1.7 of \citet{hsingTheoreticalFoundationsFunctional2015}), the following equality holds almost surely: for all $g\in\ch_\cy$,
    $$\langle g, \E[Y|X]\rangle_{\ch_\cy} = \E[g(Y)|X],$$
    which implies
    $$\langle g, \E[Y|X]\rangle_{\ch_\cy} = \langle g, \E[Y|Z]\rangle_{\ch_\cy}.$$
    Considering a CONS $g_1, g_2, \ldots$ of $\ch_\cy$, we can construct an almost-sure region on which the equality $\langle g, \E[Y|X]\rangle_{\ch_\cy} = \langle g, \E[Y|Z]\rangle_{\ch_\cy}$ holds for all $g\in\ch_\cy$, by linearity and continuity. Therefore,
    $$\E[Y|X] = \E[Y|Z] \quad\text{a.s. }\ \text{on }\ \ch_\cy\subseteq\ch.$$

    Conversely, if $\E[Y|X] = \E[Y|Z] \text{ almost surely}$, we can follow the previous proof in the reverse direction, yielding the equality
    $$\langle g, \E[Y|X]\rangle_\ch = \E[g(Y)|X]\quad\text{a.s.}$$
    for all $g\in\ch_\cy$. This proves the equality $\s_{YY|Z} = \s_{YY|X}$, completing the proof.
    \qed

\begin{remark}
    In this result, the role of the kernel $k_\cx$ on the original domain $\cx$ is only to provide a lower bound for the conditional covariance operator after projection, $\s_{YY|p(X)}$. The requirement that $\ch_\cx$ is dense in $L^2(\P_X)$ can be satisfied by many kernels, including $L^2$-universal kernels; see \citet{sriperumbudur2011universality} for details.
\end{remark}

\begin{remark}
    The original result of \citet{fukumizuKernelDimensionReduction2009} assumes that $\ch_\cx^p + \R$ is dense in a certain $L^2$-space on $\cx$, whereas we simply assume that $k_\cz$ is characteristic. 
\end{remark}

\subsection{Compatibility of KDR formulations (Remark~\ref{rmk:re-embedding})}\label{sec:supp-kdr-pullback}

In this section, we demonstrate the compatibility between two KDR formulations: the classical KDR using the RKHS $\ch_\cx^p$, and our target-based approach that uses the RKHS $\ch_\cz$. 
As noted in Remark~\ref{rmk:re-embedding}, these different RKHSs give rise to two conditional covariance operators $\s_{YY|p(X)}$ and $\s_{YY|X}^p$, with their associated cross-covariance and correlation operators.
The compatibility result of this section thus enables us to adopt many established results from \citet{fukumizuKernelDimensionReduction2009} to our target-based setting, while avoiding the re-embedding assumption that limits the generalization of KDR beyond Stiefel manifolds.

Here, we interpret the RKHS $\ch_\cx^p$ as a \textit{pullback} of $\ch_\cz$ and establish that the cross-covariance and the correlation operators can likewise be ``pullbacked.'' Crucially, we prove the equality between the conditional covariance operators $\s_{YY|p(X)}$ and $\s_{YY|X}^p$ and the same equality for their empirical counterparts. Although intuitive, the rigorous account of the compatibility requires understanding the interplay between the covariance operators and the \textit{pullback operator} arising from the projection map $p:\cx\to\cz$.

We first state the \textit{pullback theorem} \cite[Theorem~2.9]{saitoh2016theory} which describes the exact members of the RKHS $\ch_\cx^p$ associated with the kernel $k_\cx^p(x, x'):= k_\cz(p(x), p(x'))$:
\begin{equation}\label{eq: pullback RKHS}
    \ch_\cx^p = \{f:\cx\to\R\,|\, f = g\circ p \text{ for some } g\in\ch_\cz \}.
\end{equation}
The equality \eqref{eq: pullback RKHS} defines a \textit{pullback operator} $p^*:\ch_\cz\to\ch_\cx^p$, sending $g\in\ch_\cz$ to $g\circ p \in\ch_\cx^p$. By the pullback theorem, the pullback operator is bounded and surjective, indicating that essential information in $\ch_\cx^p$ can be completely recovered in the target RKHS $\ch_\cz$. However, the map $p^*$ is not necessarily injective, especially when $p$ is not surjective; for instance, if $g\in\ch_\cz$ is supported outside the image of $p$, it becomes zero inside $\ch_\cx^p$.
We will see, nonetheless, that at the covariance operators level, the essential covariance information is not lost, thereby establishing the equivalence at the conditional covariance level.

We start with the compatibility result related to the cross-covariance operators $\s_{Yp(X)}$ and $\s_{YX}^p$ defined on the different domains, $\ch_\cx^p$ and $\ch_\cz$. The following lemma establishes the coherence between these operators:

\begin{lemma}\label{lemma: pullback coherent}
    Write $Z=p(X)$. The pullback operator $p^*:\ch_\cz\to\ch_\cx^p$ and the covariance operators are coherent, making the following diagrams commutative:
    \begin{equation*}    
        \begin{tikzcd}[column sep=large, row sep=large] % enlarge the row arrow length
            \ch_\cx^p \arrow{dr}{\s_{YX}^p} &  & \ch_\cx^p & & \ch_\cx^p \arrow{r}{\s_{XX}^p} & \ch_\cx^p \\            
            \ch_\cz \arrow{r}[swap]{\s_{YZ}}\arrow{u}{p^*} & \ch_\cy  & \ch_\cz \arrow{u}{p^*} & \ch_\cy\arrow{l}{\s_{ZY}}\arrow{ul}[swap]{\s_{XY}^p} & \ch_\cz\arrow{r}[swap]{\s_{ZZ}}\arrow{u}{p^*} & \ch_\cz.\arrow{u}{p^*}
        \end{tikzcd}
    \end{equation*}
\end{lemma}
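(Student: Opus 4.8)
The plan is to verify each of the three commuting diagrams by reducing both composite maps to a single covariance expression in the reduced variable $Z = p(X)$. The ingredients I would use repeatedly are three: the defining identity \eqref{eq: cov-def} for (cross-)covariance operators; the reproducing property in $\ch_\cz$ and in $\ch_\cx^p$; and the explicit form of the pullback reproducing kernel, $k_\cx^p(x, \cdot) = k_\cz(p(x), p(\cdot)) = p^*\big(k_\cz(p(x), \cdot)\big)$, together with the elementary relation $(p^* h)(X) = h(p(X)) = h(Z)$ valid for every $h \in \ch_\cz$.

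For the left diagram, which maps into $\ch_\cy$, I would test the operator identity $\s_{YX}^p \circ p^* = \s_{YZ}$ against an arbitrary $g \in \ch_\cy$. Fixing $h \in \ch_\cz$ and writing $f = p^* h \in \ch_\cx^p$, the defining identity gives $\langle g, \s_{YX}^p f\rangle_{\ch_\cy} = \cov[f(X), g(Y)] = \cov[h(Z), g(Y)] = \langle g, \s_{YZ} h\rangle_{\ch_\cy}$, where the middle step uses $(p^* h)(X) = h(Z)$ and the last is \eqref{eq: cov-def} applied to $\s_{YZ}$. As $g$ is arbitrary, this yields $\s_{YX}^p p^* h = \s_{YZ} h$ for all $h$, i.e. commutativity of the first diagram.

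The remaining two diagrams map into $\ch_\cx^p$, so I would verify them by evaluating both sides pointwise at an arbitrary $x \in \cx$ via the reproducing property, i.e. by pairing with $k_\cx^p(x, \cdot) = p^*(k_\cz(p(x),\cdot))$. For the middle diagram, $(\s_{XY}^p g)(x) = \langle k_\cx^p(x,\cdot), \s_{XY}^p g\rangle_{\ch_\cx^p} = \cov[k_\cz(p(x), Z), g(Y)]$, while $(p^* \s_{ZY} g)(x) = (\s_{ZY} g)(p(x)) = \langle k_\cz(p(x), \cdot), \s_{ZY} g\rangle_{\ch_\cz} = \cov[k_\cz(p(x), Z), g(Y)]$; the two coincide, giving $p^* \circ \s_{ZY} = \s_{XY}^p$. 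The right diagram follows identically: for $h \in \ch_\cz$, both $(\s_{XX}^p p^* h)(x)$ and $(p^* \s_{ZZ} h)(x)$ reduce to $\cov[k_\cz(p(x), Z), h(Z)]$.

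The one subtlety I would flag as the main obstacle is that $p^*$ is a min-norm pullback and hence \emph{not} injective (its kernel consists of functions vanishing on $\ran(p)$), so one cannot transport inner products from $\ch_\cz$ to $\ch_\cx^p$; in particular the middle diagram is not literally the adjoint of the first unless one first establishes the co-isometry relation $p^*(p^*)^* = I_{\ch_\cx^p}$. The pointwise/reproducing-kernel route above sidesteps this entirely, since every $\ch_\cz$-function is only ever evaluated at points $p(x) \in \ran(p)$, where the non-injectivity of $p^*$ plays no role. Finally, the empirical versions asserted in \cref{rmk:re-embedding} follow by the same three computations with $\cov$ replaced throughout by its empirical counterpart.
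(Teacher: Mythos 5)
Your proof is correct, and it takes a genuinely different (more elementary) route than the paper's. The paper works at the level of random elements: it represents each cross-covariance operator as a Bochner expectation of tensor products of centered feature maps, e.g.\ $\s_{YZ} = \E[\wt\psi(Y)\otimes\wt\phi(Z)]$, uses boundedness of $p^*$ to commute it through the expectation, and identifies $p^*\wt\phi(Z)$ with the centered canonical feature of $X$ under the pullback kernel $k_\cx^p$; the three identities then follow from the tensor representation. You instead never introduce tensor products or Bochner integrals: you verify the first identity weakly, by pairing $\s_{YX}^p p^* h$ and $\s_{YZ}h$ against an arbitrary $g\in\ch_\cy$ and reducing both to $\cov[h(Z), g(Y)]$ via the defining identity \eqref{eq: cov-def} and the relation $(p^*h)(X)=h(Z)$, and you verify the two identities landing in $\ch_\cx^p$ pointwise, by evaluating both composites at $x$ through the reproducing kernel $k_\cx^p(x,\cdot)=p^*(k_\cz(p(x),\cdot))$ and the adjoint relation $(\s_{YX}^p)^*=\s_{XY}^p$. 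Both arguments ultimately rest on the same two facts---Saitoh's pullback theorem and $(p^*h)(X)=h(Z)$---so the mathematical content is the same, but your scalar-covariance route avoids the operator-valued integration machinery entirely. What the paper's tensor formulation buys is reuse: the same representation is what drives its subsequent spectral argument for the correlation operators (its Lemma on $V_{YZ}=V_{YX}^pp^*$), whereas your approach would need the co-isometry property of $p^*$ (or the paper's observation $\clo\ran(\s_{ZZ})\subseteq(\ker p^*)^\perp$) to be brought in separately at that stage. Your side remark that $p^*$ is not injective, and that the middle diagram is not formally the adjoint of the first without the co-isometry relation $p^*(p^*)^*=I_{\ch_\cx^p}$, is accurate and mirrors the caveat the paper itself raises; and your closing observation that the empirical versions follow by substituting empirical covariances matches the paper's treatment.
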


\begin{proof}
    Let $\phi(Z) = k_\cz(Z,\cdot)\in\ch_\cz$ and $\psi(Y) = k_\cy(Y,\cdot)\in\ch_\cy$ be the embedded random elements in the RKHSs. Write their means as
    $$m_Z = \E[\phi(Z)] \quad\text{and}\quad m_Y =\E[\psi(Y)], $$
    also known as kernel mean embeddings \citep{muandet2017kernel} of distributions $Z$ and $Y$. Denoting the centered elements by $\wt\phi(Z) = \phi(Z) - m_Z$ and $\wt\psi(Y) = \psi(Y) - m_Y$, the cross-covariance operator $\s_{YZ}$ can be written as:
    $$\s_{YZ} = \E[\wt\psi(Y)\otimes\wt\phi(Z)] \in \ch_\cy\otimes\ch_\cz.$$
    Since $Z=p(X)$, we can explicitly pullback the $\phi(Z)$ as
    $$p^*\phi(Z) = k_\cx(p(X), p(\cdot)) = k_\cx^p(X, \cdot)\in\ch_\cx^p, $$
    and $m_Z$ is pullbacked similarly as $$p^*m_Z = p^*\E[\phi(Z)] = \E[p^*\phi(Z)] = \E[k_\cx^p(X, \cdot)]\in\ch_\cx^p. $$
    Here, the commutativity between $p^*$ and the expectation $\E$ holds because $p^*$ is bounded.
    The pullback $p^*\wt\phi(Z)$ thus coincides with the centered kernel mean embedding of $X$ via $k_\cx^p$:
    $$p^*\wt\phi(Z) = k_\cx^p(X, \cdot) - \E[k_\cx^p(X,\cdot)]\in\ch_\cx^p.$$
    Therefore, 
    $$\s_{YX}^p = \E[\wt\psi(Y)\otimes p^*\wt\phi(Z)]\in\ch_\cy\otimes \ch_\cx^p.,$$
    which establishes the equality $\s_{YX}^p p^* = \s_{YZ}.$ The remaining results are proved similarly. 
\end{proof}

By plugging in the empirical distributions to this result, the coherence result also holds for the empirical operators $\hatsig_{Z*}$ and their pullbacks $\hatsig_{X*}^p$. From this, we can prove the equality between the regularized empirical conditional covariance operators on $\ch_\cy$: since
\begin{align*}
    \hatsig_{YX}^p (\hatsig_{XX}^p + \varepsilon_nI)^{-1}\hatsig_{XY}^p &=
    \hatsig_{YX}^p (\hatsig_{XX}^p + \varepsilon_nI)^{-1} p^*\hatsig_{ZY} \\
    &= \hatsig_{YX}^p p^*(\hatsig_{ZZ}+\varepsilon_nI)^{-1}\hatsig_{ZY} \\
    &= \hatsig_{YZ} (\hatsig_{ZZ}+\varepsilon_nI)^{-1}\hatsig_{ZY},
\end{align*}
we have the equality of empirical conditional covariance operators:
$$\hatsig_{YY|p(X)} = \hatsig_{YY|X}^p.$$
This equality is used in \cref{sec:ckdr-method} where we adopted the same computation strategy as in the classical KDR methods.

Next, we state an analogous result for the correlation operators $V_{YZ}$ and $V_{YX}^p$. The following lemma is established using the uniqueness property of such operators given in \eqref{eq: correlation}.

\begin{lemma}\label{lemma: pullback corr}
    Write $Z=p(X)$. Let $V_{YZ}$ and $V_{YX}^p$ be the correlation operators satisfying
    $$\s_{YZ} = \s_{YY}^{1/2}V_{YZ}\s_{ZZ}^{1/2}\quad\text{and}\quad \s_{YX}^p = \s_{YY}^{1/2}V_{YX}^p(\s_{XX}^p)^{1/2}. $$
    Then, the correlation operators are coherent with the pullback operator $p^*$:
    $$V_{YZ} = V_{YX}^p p^*\quad\text{and}\quad V_{XY}^p = p^*V_{ZY}.$$
\end{lemma}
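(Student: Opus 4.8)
The plan is to lean on the uniqueness clause of the correlation operator recorded in \eqref{eq: correlation}: $V_{YZ}$ is the \emph{only} bounded operator $V:\ch_\cz\to\ch_\cy$ satisfying the factorization $\s_{YZ}=\s_{YY}^{1/2}V\s_{ZZ}^{1/2}$, the norm bound $\|V\|\le 1$, and the range condition $V=\Pi_{\clo{\ran}(\s_{YY})}V\Pi_{\clo{\ran}(\s_{ZZ})}$. Thus, to prove the first identity I would exhibit $V_{YX}^p p^*$ as a candidate and verify that it meets all three of these defining properties; the second identity $V_{XY}^p=p^*V_{ZY}$ would then be obtained by the symmetric version of the argument applied to the adjoint factorization $\s_{XY}^p=(\s_{XX}^p)^{1/2}V\s_{YY}^{1/2}$.

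The crucial intermediate step, which I expect to be the main obstacle, is the square-root intertwining
$$(\s_{XX}^p)^{1/2}p^* = p^*\s_{ZZ}^{1/2}.$$
This cannot be obtained by naively cancelling square roots, since $p^*$ is in general non-injective and is not an isometry. Instead I would begin from the covariance coherence $\s_{XX}^p p^* = p^*\s_{ZZ}$ supplied by \cref{lemma: pullback coherent}, iterate it to $(\s_{XX}^p)^n p^* = p^*(\s_{ZZ})^n$ for all $n$, and extend to $q(\s_{XX}^p)p^* = p^*q(\s_{ZZ})$ for every polynomial $q$ with $q(0)=0$. Because both covariance operators are bounded, self-adjoint, and positive, the function $t\mapsto\sqrt{t}$ is uniformly approximable on a common spectral interval by such polynomials; passing to the operator-norm limit yields the displayed intertwining. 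The same functional-calculus route also shows that $p^*$ maps $\ker(\s_{ZZ})$ into $\ker(\s_{XX}^p)$ and $\clo{\ran}(\s_{ZZ})$ into $\clo{\ran}(\s_{XX}^p)$, two facts I will need for the range conditions.

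With the intertwining established, the three properties follow cleanly. For the factorization I would chain the cross-covariance coherence $\s_{YZ}=\s_{YX}^p p^*$ (\cref{lemma: pullback coherent}) with the definition $\s_{YX}^p=\s_{YY}^{1/2}V_{YX}^p(\s_{XX}^p)^{1/2}$ and commute $(\s_{XX}^p)^{1/2}$ past $p^*$, giving $\s_{YZ}=\s_{YY}^{1/2}(V_{YX}^p p^*)\s_{ZZ}^{1/2}$. For the norm bound I would invoke the pullback theorem \eqref{eq: pullback RKHS}, whose infimum characterization of the $\ch_\cx^p$-norm yields $\|p^*\|\le 1$, hence $\|V_{YX}^p p^*\|\le\|V_{YX}^p\|\,\|p^*\|\le 1$. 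The range condition splits in two: the left projection $\Pi_{\clo{\ran}(\s_{YY})}$ is inherited directly from the range condition already satisfied by $V_{YX}^p$, while the right projection reduces to showing that $V_{YX}^p p^*$ annihilates $\ker(\s_{ZZ})$ — which holds because $p^*$ sends $\ker(\s_{ZZ})$ into $\ker(\s_{XX}^p)$ and $V_{YX}^p$ already vanishes on $\ker(\s_{XX}^p)$.

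Finally, for $V_{XY}^p=p^*V_{ZY}$ I would resist simply taking adjoints of the first identity, since $(V_{YX}^p p^*)^*=(p^*)^*V_{XY}^p$ introduces the pushforward $(p^*)^*$ rather than $p^*$. Instead I would check directly that $p^*V_{ZY}$ satisfies the three defining properties of $V_{XY}^p$: the factorization again uses the intertwining together with $\s_{XY}^p=p^*\s_{ZY}$; the norm bound is identical to before; and the range condition uses that $V_{ZY}$ already carries the right projection $\Pi_{\clo{\ran}(\s_{YY})}$, while $p^*(\clo{\ran}(\s_{ZZ}))\subseteq\clo{\ran}(\s_{XX}^p)$ supplies the left one. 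Uniqueness in \eqref{eq: correlation} then closes both identities.
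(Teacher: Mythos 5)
Your proposal is correct, and while it shares the paper's overall skeleton---first establish the square-root intertwining, then invoke the uniqueness clause of \eqref{eq: correlation}---it proves the key intermediate step by a genuinely different argument. The paper obtains $p^*\s_{ZZ}^{1/2}=(\s_{XX}^p)^{1/2}p^*$ by transferring a spectral decomposition $\s_{ZZ}=\sum_i\lambda_i\, e_i\otimes e_i$ through $p^*$: it shows $\clo\ran(\s_{ZZ})\subseteq(\ker p^*)^\perp$ and then uses the fact that $p^*$ is isometric on $(\ker p^*)^\perp$ (Saitoh's pullback theorem), so that $\{p^*(e_i)\}$ remains orthonormal and $\sum_i\lambda_i\, p^*(e_i)\otimes p^*(e_i)$ is a genuine spectral decomposition of $\s_{XX}^p$, from which the square roots inherit a common eigenstructure. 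You instead iterate the covariance coherence $\s_{XX}^p p^*=p^*\s_{ZZ}$ of \cref{lemma: pullback coherent} to powers and polynomials and pass to a uniform limit approximating $t\mapsto\sqrt{t}$; this needs only boundedness, self-adjointness, and positivity of the two covariance operators, not the partial-isometry structure of $p^*$ (you use only the weaker bound $\|p^*\|\le 1$, and only for the norm condition), and the same identity $\s_{XX}^p p^*=p^*\s_{ZZ}$ hands you for free the mapping facts $p^*(\ker\s_{ZZ})\subseteq\ker\s_{XX}^p$ and $p^*(\clo\ran(\s_{ZZ}))\subseteq\clo\ran(\s_{XX}^p)$ that your range-condition checks require. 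Beyond this, your write-up is more complete than the paper's at the uniqueness step: the paper passes from $\s_{YY}^{1/2}V_{YX}^p p^*\s_{ZZ}^{1/2}=\s_{YY}^{1/2}V_{YZ}\s_{ZZ}^{1/2}$ directly to $V_{YZ}=V_{YX}^p p^*$, leaving the side conditions of \eqref{eq: correlation} for the candidate $V_{YX}^p p^*$ implicit (they are encoded only in the restricted domains $\clo\ran(\cdot)$ appearing in its commutative diagram), whereas you verify the norm bound and both range conditions explicitly, which is exactly what licenses the appeal to uniqueness. Your caution about adjoints is also well founded: bare adjunction of the first identity yields $V_{ZY}=(p^*)^*V_{XY}^p$, and converting this into $V_{XY}^p=p^*V_{ZY}$ would require the coisometry identity $p^*(p^*)^*=I$, i.e., precisely the isometry fact your route avoids; your symmetric direct verification is the consistent fix and matches the paper's unelaborated remark that the second identity follows ``symmetrically.''
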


\begin{proof}
     We first prove the similar commutativity to Lemma~\ref{lemma: pullback coherent} for the square-root operators $\s_{ZZ}^{1/2}$ and $(\s_{XX}^p)^{1/2}$. 
    Consider the spectral decomposition of $\s_{ZZ}$ with a CONS $\{e_i\}\subset\clo\ran(\s_{ZZ})$:
    $$\s_{ZZ} = \sum_{i=1}^\infty \lambda_i e_i\otimes e_i \in \ch_\cz\otimes \ch_\cz.$$
    Using Lemma~\ref{lemma: pullback coherent}, we have:
    \begin{equation}\label{eq: s_XX^p}
        \s_{XX}^p = \sum_{i=1}^\infty \lambda_i p^*(e_i)\otimes p^*(e_i) \in\ch_\cx^p \otimes \ch_\cx^p. 
    \end{equation}
    Note that we have the inclusion $\clo\ran(\s_{ZZ})\subseteq (\ker p^*)^\perp$ since, for all $h\in\ch_\cz$ and $l\in\ker p^*$, we have
    $$\langle l, \s_{ZZ}h\rangle_{\ch_\cz} = \cov[h(Z), l(p(X))] = 0. $$
    Thus, from the isometry $(\ker p^*)^\perp\cong\ch_\cx^p$ along the pullback operator $p^*$, the inner products are preserved:
    $$\langle p^*(e_i),p^*(e_j)\rangle_{\ch_\cx^p} = \delta_{ij},$$ meaning that the equality \eqref{eq: s_XX^p} is also a spectral decomposition of $\s_{XX}^p$. Then, using the same eigenfunctions, we have similar representations of the square-root operators $\s_{ZZ}^{1/2}$ and $(\s_{XX}^p)^{1/2}$. This implies that: $$p^*\s_{ZZ}^{1/2} = (\s_{XX}^p)^{1/2}p^*,$$
    paralleling the commutativity at the covariance operator level.
    
    Building on this equality, we draw the following diagram:
    \begin{equation*}
        \begin{tikzcd}[column sep=6em, row sep=1.5em]
            \ch_\cx^p\arrow{r}{(\s_{XX}^p)^{1/2}} & \clo\ran(\s_{XX}^p)\arrow{rd}{\!\!\!\!V_{YX}^p} & & \\
            & & \clo\ran(\s_{YY})\arrow{r}{\s_{YY}^{1/2}} & \ch_\cy, \\
            \ch_\cz \arrow{uu}{p^*} \arrow{r}{\s_{ZZ}^{1/2}} & \clo\ran(\s_{ZZ}) \arrow{uu}{p^*}\arrow{ur}[swap]{\!\!\!V_{YZ}} & &
        \end{tikzcd}
    \end{equation*}
    where the square on the left-hand side is commutative. From our results, all the paths from $\ch_\cz$ to $\ch_\cy$ in this diagram are equal to the operator $\s_{YZ} = \s_{YZ}^p p^*$. In particular, we have
    $$\s_{YY}^{1/2}V_{YX}^pp^*\s_{ZZ}^{1/2} = \s_{YY}^{1/2}V_{YZ}\s_{ZZ}^{1/2}.$$
    Then, by the \textit{uniqueness property} of the operator $V_{YZ}$ given in \eqref{eq: correlation}, we must have the equality $V_{YZ} =V_{YX}^p p^* $, as desired. The other equality is proved symmetrically.
\end{proof}

Finally, we establish the equivalence of the conditional covariance operators using the commutativity results at the covariance and the correlation operators:

\begin{lemma}\label{lemma:condcov-equality}
        The two conditional covariance operators on $\ch_\cy$ coincide:
        \begin{equation*}%\label{eq: equality condcov}
        \s_{YY|p(X)} = \s_{YY|X}^p.
        \end{equation*}
\end{lemma}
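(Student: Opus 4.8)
The plan is to reduce the claimed operator identity to an equality of the ``middle'' factors in the correlation-operator representation of the conditional covariance, and then discharge that equality by a direct substitution from \cref{lemma: pullback corr}. Recall from the definition \eqref{eq: condcov-def} that, writing $Z = p(X)$,
\begin{equation*}
    \s_{YY|p(X)} = \s_{YY} - \s_{YY}^{1/2}V_{YZ}V_{ZY}\s_{YY}^{1/2},
    \qquad
    \s_{YY|X}^p = \s_{YY} - \s_{YY}^{1/2}V_{YX}^p V_{XY}^p\s_{YY}^{1/2}.
\end{equation*}
The first key observation is that the response-side operator $\s_{YY}$, and hence its square root $\s_{YY}^{1/2}$, is identical in both expressions: it is built solely from the random element $\psi(Y) = k_\cy(Y,\cdot)\in\ch_\cy$ and the kernel $k_\cy$, neither of which is affected by whether we work in $\ch_\cz$ or in the pullback space $\ch_\cx^p$. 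Consequently it suffices to prove the single operator equality $V_{YZ}V_{ZY} = V_{YX}^p V_{XY}^p$ on $\clo\ran(\s_{YY})\subseteq\ch_\cy$.

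This equality follows immediately from \cref{lemma: pullback corr}, which supplies the two coherence relations $V_{YZ} = V_{YX}^p\, p^*$ and $V_{XY}^p = p^*\, V_{ZY}$. Substituting the first into $V_{YZ}V_{ZY}$ and then recognizing the second gives
\begin{equation*}
    V_{YZ}V_{ZY} = \left(V_{YX}^p\, p^*\right) V_{ZY} = V_{YX}^p\left(p^*\, V_{ZY}\right) = V_{YX}^p V_{XY}^p,
\end{equation*}
where the middle step is just associativity of operator composition. Plugging this back into the two displayed expressions for the conditional covariance operators yields $\s_{YY|p(X)} = \s_{YY|X}^p$, as desired.

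In short, there is no genuine obstacle remaining at this stage: all the analytic content has already been absorbed into \cref{lemma: pullback coherent} and \cref{lemma: pullback corr}, which handle the interplay between the (possibly non-injective) pullback operator $p^*$ and the covariance/correlation operators. The only point that warrants an explicit remark is the first one above—that the $\s_{YY}$ and $\s_{YY}^{1/2}$ factors are \emph{literally the same} operator in the two formulations, so that the entire difference between $\s_{YY|p(X)}$ and $\s_{YY|X}^p$ is carried by the composite correlation factor, which the substitution collapses. I would therefore present the argument exactly as above: a two-line reduction to $V_{YZ}V_{ZY} = V_{YX}^p V_{XY}^p$, followed by the associativity computation invoking \cref{lemma: pullback corr}.
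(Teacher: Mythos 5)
Your proposal is correct and is essentially identical to the paper's own proof: the paper likewise derives $V_{YZ}V_{ZY} = V_{YX}^p\, p^* V_{ZY} = V_{YX}^p V_{XY}^p$ from \cref{lemma: pullback corr} and concludes the operator equality from the representation \eqref{eq: condcov-def}. Your additional remark that the $\s_{YY}^{1/2}$ factors coincide in both formulations is a point the paper leaves implicit, but it is the same argument.
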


\begin{proof}
    Lemma~\ref{lemma: pullback corr} implies that
    $$V_{YZ}V_{ZY} = V_{YX}^p p^*V_{ZY} = V_{YX}^p V_{XY}^p, $$
    which completes the proof of the equality $$\s_{YY|p(X)} = \s_{YY|X}^p.$$
\end{proof}

\subsection{Results on the intrinsic predictive model of CKDR}\label{sec:supp-vv-kdr-proofs}

This section gives the details for the discussions in \cref{sec: vv-krr}. We prove the equivalence between the CKDR empirical objective and the vector-valued KRR with \textit{intercept} in \eqref{eq:empirical-kdr=vv-krr}. To our knowledge, there is no formal study on vector-valued KRR with an intercept. We thus provide a rigorous proof regarding the existence and uniqueness of the solution of such a regression problem, which in turn facilitates the proof of the equivalence result (Theorem~\ref{thm:trace=centered-vv-krr}) in Section~\ref{sec:supp-vv-krr-equivalence}. Below, recall that $H = I_n - \frac{1}{n}\1\1^T$ is the centering matrix.

\begin{proposition}\label{prop: vv-ckrr}
    Let $(z_1, \psi_1),\ldots, (z_n, \psi_n)\in\cz\times\ch_\cy$ be given data, and let $\Psi = (\psi_1,\ldots,\psi_n)^\top\in(\ch_\cy)^n$ denote the column vector. Let $\mathcal G_\cz$ be an $\ch_\cy$-valued RKHS induced by $k_\cz$. If $F^\star\in\mathcal{G}_\cz$ and $\gamma^\star \in\ch_\cy$ minimizes the loss function
    $$L_n(F, \gamma) = \frac{1}{n}\sum_{i=1}^n \norm*{\psi_i - F(z_i) - \gamma}_{\ch_\cy}^2 + \varepsilon_n\norm*{F}_{\mathcal{G}}^2,$$
    then such minimizers are unique, and $F^\star$ has the form 
    \begin{equation}\label{eq: vv-ckrr-solution}
        F^\star(\cdot) = \sum_{i=1}^n k_\cz(z_i, \cdot) \alpha_i :\cz\to\ch_\cy,
    \end{equation}
    where $\alpha = (\alpha_1,\ldots,\alpha_n)^\top = (G_Z+n\varepsilon_nI_n)^{-1}H \Psi \in(\ch_\cy)^n $, and $G_Z$ is the centered Gram matrix formed by $z_1,\ldots, z_n$. Also, the intercept is determined as $\gamma^\star = \frac{1}{n}\sum_{i=1}^n(\psi_i - F^\star(z_i))$.
\end{proposition}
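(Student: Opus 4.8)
The plan is to exploit the strict convexity of the objective so that verifying first-order stationarity is both necessary and sufficient for optimality. Since $\varepsilon_n>0$, the regularizer $\varepsilon_n\|F\|_{\cg_\cz}^2$ is strictly convex in $F$, while the squared-loss term is jointly convex in $(F,\gamma)$ and strictly convex in $\gamma$ along any direction with $V=0$; inspecting the second directional derivative $\tfrac{2}{n}\sum_i\|V(z_i)+\eta\|_{\ch_\cy}^2+2\varepsilon_n\|V\|_{\cg_\cz}^2>0$ for $(V,\eta)\neq(0,0)$ shows that $L_n$ is jointly strictly convex on $\cg_\cz\times\ch_\cy$. Hence $L_n$ has at most one minimizer, and because it is a smooth convex functional on a Hilbert space, any point at which its Gateaux derivative vanishes is the global minimizer; coercivity ($\varepsilon_n\|F\|^2\to\infty$ and, for bounded $F$, the loss blows up as $\|\gamma\|\to\infty$) guarantees one exists. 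Thus it suffices to compute the stationarity equations and solve them.

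First I would take the Gateaux derivative in $\gamma$ along $\eta\in\ch_\cy$, which gives $\tfrac1n\sum_i\langle \wh F(z_i)+\hat\gamma-\psi_i,\eta\rangle_{\ch_\cy}=0$ for all $\eta$, i.e.
$$\hat\gamma=\frac1n\sum_{i=1}^n\bigl(\psi_i-\wh F(z_i)\bigr),$$
the claimed intercept. Differentiating in $F$ along $V\in\cg_\cz$ and invoking the vector-valued reproducing property $\langle V,k_\cz(z_i,\cdot)\beta\rangle_{\cg_\cz}=\langle V(z_i),\beta\rangle_{\ch_\cy}$ to rewrite the data term as an inner product against $V$ yields the condition
$$\varepsilon_n\wh F=\frac1n\sum_{i=1}^n k_\cz(z_i,\cdot)\bigl(\psi_i-\wh F(z_i)-\hat\gamma\bigr).$$
This identity already forces the representer form $\wh F(\cdot)=\sum_i k_\cz(z_i,\cdot)\alpha_i$ with $\alpha_i=\tfrac{1}{n\varepsilon_n}(\psi_i-\wh F(z_i)-\hat\gamma)$, so the finite expansion is obtained for free rather than through a separate representer theorem. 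Summing these coefficients and using the $\gamma$-condition gives $\sum_i\alpha_i=0$, i.e. $\alpha$ is centered, $H\alpha=\alpha$.

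Next I would pass to the block equation over $(\ch_\cy)^n$. Writing $\wh F(z_j)=(K_Z\alpha)_j$ and substituting $\1\hat\gamma=\tfrac1n\1\1^\top(\Psi-K_Z\alpha)$ into $n\varepsilon_n\alpha=\Psi-K_Z\alpha-\1\hat\gamma$ produces $n\varepsilon_n\alpha=H(\Psi-K_Z\alpha)=H\Psi-HK_Z\alpha$ with $H=I_n-\tfrac1n\1\1^\top$. Using centeredness $H\alpha=\alpha$ to rewrite $HK_Z\alpha=HK_ZH\alpha=G_Z\alpha$, this collapses to $(G_Z+n\varepsilon_nI_n)\alpha=H\Psi$; since $G_Z=HK_ZH\succeq0$ and $\varepsilon_n>0$ make $G_Z+n\varepsilon_nI_n$ positive definite, inverting gives $\alpha=(G_Z+n\varepsilon_nI_n)^{-1}H\Psi$. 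One then checks the implications are reversible: this $\alpha$ is centered (as $\1$ is an eigenvector of $G_Z+n\varepsilon_nI_n$ with eigenvalue $n\varepsilon_n$, the $(I-H)$-component of $\alpha$ vanishes), so the constructed $(\wh F,\hat\gamma)$ indeed satisfies both stationarity equations and, by strict convexity, is the unique minimizer.

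The main obstacle I anticipate is the bookkeeping in the vector-valued setting: carrying the $\ch_\cy$-valued inner products correctly through both Gateaux derivatives, justifying that the $F$-stationarity identity genuinely lives in $\cg_\cz$ via the reproducing property of the decomposable kernel, and tracking how the unpenalized intercept introduces the centering matrix $H$ and the centered Gram matrix $G_Z$ in place of $K_Z$. The one genuinely delicate point is the step $HK_Z\alpha=G_Z\alpha$, which is valid only because the $\gamma$-optimality forces $\alpha$ to be centered; without that observation the normal equation would involve the nonsymmetric $HK_Z$ rather than $G_Z$, and the clean closed form would not emerge.
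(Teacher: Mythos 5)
Your proof is correct, but it takes a genuinely different route from the paper's. The paper deliberately avoids a derivation: it guesses the solution $(\wh F,\hat\gamma)$, reduces to the profiled objective $R_n(F)=L_n\bigl(F,\tfrac1n\sum_i(\psi_i-F(z_i))\bigr)$, and then expands $R_n(\wh F+\eta)$, showing by direct computation (using the reproducing property, the commutation $(G_Z+n\varepsilon_nI_n)^{-1}H=H(G_Z+n\varepsilon_nI_n)^{-1}$, and the resulting identity $\psi_i-\wh F(z_i)-\tfrac1n\sum_j(\psi_j-\wh F(z_j))=n\varepsilon_n\alpha_i$) that all cross terms cancel, leaving $R_n(F)=R_n(\wh F)+\tfrac1n\sum_i\|\eta(z_i)-\tfrac1n\sum_j\eta(z_j)\|_{\ch_\cy}^2+\varepsilon_n\|\eta\|_{\cg_\cz}^2$, which is minimized iff $\eta=0$; the authors explicitly chose this completing-the-square verification to sidestep the ``involved computations'' of a first-principles derivation. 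You instead run the convex-analytic derivation they avoided: joint strict convexity (your Hessian computation is right: the quadratic form $\tfrac2n\sum_i\|V(z_i)+\eta\|_{\ch_\cy}^2+2\varepsilon_n\|V\|_{\cg_\cz}^2$ is positive on nonzero directions), Gateaux stationarity in $\gamma$ and in $F$, and the observation that the $F$-stationarity equation $\varepsilon_n\wh F=\tfrac1n\sum_i k_\cz(z_i,\cdot)(\psi_i-\wh F(z_i)-\hat\gamma)$ itself forces the representer form---so you get the finite expansion without invoking a separate representer theorem, which is a nice economy. The two arguments meet at the same algebraic core: the optimal intercept is the mean residual, and the centeredness $\sum_i\alpha_i=0$ (which you obtain by summing the stationarity coefficients, and again from the eigenvector argument $G_Z\1_n=0$, where the paper uses the commutation identity) is exactly what collapses $HK_Z\alpha$ to $G_Z\alpha$ and yields the normal equation $(G_Z+n\varepsilon_nI_n)\alpha=H\Psi$. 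What your route buys is explanatory value---the formula is derived rather than checked---plus existence via coercivity/construction; what it costs is the need to justify Gateaux differentiability, strict convexity, and sufficiency of first-order conditions for convex functionals on the product Hilbert space $\cg_\cz\times\ch_\cy$, standard facts but ones the paper's purely algebraic expansion never has to touch. Your closing remark about reversibility (checking that the constructed $\alpha$ is centered so that both stationarity equations hold) correctly closes the loop that a pure necessity argument would otherwise leave open.
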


\subsubsection{Proof of Proposition~\ref{prop: vv-ckrr}}\label{sec: proof of prop6}

\begin{proof}
    Observe first that for any fixed $F \in \mathcal G$, $L_n(F,\gamma)$ is minimized by the mean value $ \gamma = \frac{1}{n}\sum_{i=1}^n(\psi_i - F(z_i)) \in\ch_\cy$. By letting $v_i = \psi_i - F(z_i)$ and $\overline{v} = \frac{1}{n}\sum_i v_i$,
    \begin{align*}
        L_n(F,\gamma) &= \frac{1}{n}\sum_{i=1}^n\|v_i - \gamma\|_{\ch_\cy}^2 + \varepsilon_n\|F\|_\mathcal{G}^2\\
        &= \frac{1}{n}\sum_{i=1}^n \|v_i - \overline{v}\|_{\ch_\cy}^2 + \|\overline{v} - \gamma\|_{\ch_\cy}^2 + \varepsilon_n\|F\|_\mathcal{G}^2.
    \end{align*}
    Thus, the unique value $\gamma$ that minimizes $L_n(F,\gamma)$ for a fixed $F$ is $\overline{v} = \frac{1}{n}\sum_{i=1}^n(\psi_i - F(z_i))$. We set $R_n(F) = L_n(F, \frac{1}{n}\sum_{i=1}^n(\psi_i - F(z_i)))$ the loss function depending only on $F\in\mathcal G_\cz$.
    
    Let $F^\star$ be as defined in \eqref{eq: vv-ckrr-solution}, and let $\eta = F - F^\star$ be an $\ch_\cy$-valued function on $\cz$ for an arbitrary $F\in\mathcal G_\cz$. We then compute the loss function as:
    \begin{align*}
        R_n(F) &= R_n(\eta + F^\star)\\
        &= \frac{1}{n}\sum_{i=1}^n \norm*{\psi_i - (\eta + F^\star)(z_i) - \frac{1}{n}\sum_{j=1}^n(\psi_j - (\eta + F^\star)(z_j))}_{\ch_\cy}^2 + \varepsilon_n\|\eta + F^\star\|_{\mathcal G_\cz}^2\\
        &= R_n(F^\star) + \frac{1}{n}\sum_{i=1}^n\left\|\eta(z_i) - \frac{1}{n}\sum_{j=1}^n\eta(z_j)\right\|_{\ch_\cy}^2 \\
        & \qquad - \frac{2}{n}\sum_{i=1}^n\left\langle \psi_i - F^\star(z_i) - \frac{1}{n}\sum_{j=1}^n(\psi_j - F^\star(z_j)),\ \eta(z_i) - \frac{1}{n}\sum_{j=1}^n\eta(z_j)\right\rangle_{\ch_\cy} \\ 
        & \qquad + 2\varepsilon_n \langle \eta, F^\star\rangle_{\mathcal G_\cz} + \varepsilon_n \|\eta\|_{\mathcal G_\cz}^2.
    \end{align*}
    As $F^\star(\cdot) = \sum_{i=1}^n k_\cz(z_i, \cdot) \alpha_i$, where $\alpha = (\alpha_1,\ldots,\alpha_n)^\top = (G_Z+n\varepsilon_nI_n)^{-1}H \Psi$, we can compute the inner products using the reproducing property. The latter inner product is readily computed as
    $$\langle \eta, F^\star\rangle_{\mathcal G_\cz} = \langle \eta, \sum_{i=1}^n k_\cz(z_i, \cdot)\alpha_i\rangle_{\mathcal{G}_\cz} = \sum_{i=1}^n\langle \eta(z_i), \alpha_i\rangle_{\ch_\cy}. $$
    To compute the other inner product, observe first that 
    \begin{equation}\label{eq: commutative H}
        (G_Z+n\varepsilon_nI_n)^{-1} H = H(G_Z + n\varepsilon_nI_n)^{-1},
    \end{equation}
    which implies $H\alpha =\alpha $, meaning that $\alpha_1+\cdots +\alpha_n =0$. Then, 
    \begin{align*}
        F^\star(z_i) - \frac{1}{n}\sum_{j=1}^n F^\star(z_j) &= \sum_{l=1}^n \pr*{k_\cz(z_i, z_l) - \frac{1}{n}\sum_{j=1}^nk_\cz(z_j,z_l) }\alpha_l \\
        &= e_i^\top HK_Z \alpha\\
        &= e_i^\top G_Z \alpha
    \end{align*}
    since $H\alpha = \alpha$ and $G_Z = HK_ZH$. Using the relation $\psi_i - \frac{1}{n}\sum_{j=1}^n \psi_j = e_i^\top (G_Z+n\varepsilon_nI_n)\alpha$ from the definition of $\alpha$, we have 
    \begin{align*}
        \psi_i - F^\star(z_i) - \frac{1}{n}\sum_{j=1}^n(\psi_j - F^\star(z_j)) &= e_i^\top (G_Z+n\varepsilon_nI_n)\alpha - e_i^\top G_Z\alpha \\
        &=n\varepsilon_n\alpha_i,
    \end{align*}
    and thus, the inner product becomes
    \begin{align*}
        &\frac{2}{n}\sum_{i=1}^n\left\langle \psi_i - F^\star(z_i) - \frac{1}{n}\sum_{j=1}^n(\psi_j - F^\star(z_j)),\ \eta(z_i) - \frac{1}{n}\sum_{j=1}^n\eta(z_j)\right\rangle_{\ch_\cy} \\
        & \qquad = \frac{2}{n} \sum_{i=1}^n\left\langle n\varepsilon_n\alpha_i,\eta(z_i) - \frac{1}{n}\sum_{j=1}^n\eta(z_j)\right\rangle_{\ch_\cy}\\
        & \qquad = 2\varepsilon_n \sum_{i=1}^n \left\langle \alpha_i, \eta(z_i) - \frac{1}{n}\sum_{j=1}^n\eta(z_j)\right\rangle_{\ch_\cy} \\
        & \qquad = 2\varepsilon_n\sum_{i=1}^n \langle \eta(z_i), \alpha_i \rangle_{\ch_\cy},
    \end{align*}
    where the last equality is derived from the fact that $\sum_i \alpha_i = 0$. Therefore,
    \begin{align*}
        R_n(F) = R_n(F^\star) + \frac{1}{n}\sum_{i=1}^n\left\|\eta(z_i) - \frac{1}{n}\sum_{j=1}^n\eta(z_j)\right\|_{\ch_\cy}^2 + \varepsilon_n \|\eta\|_{\mathcal G_\cz}^2,
    \end{align*}
    which is minimized if and only if $\eta = F - F^\star = 0$.
\end{proof}

\subsubsection{Proof of Theorem~\ref{thm:trace=centered-vv-krr}}\label{sec:supp-vv-krr-equivalence}
\begin{proof}
    The equivalence is established by directly computing the minimized loss function
    $$\text{(RHS)}\quad\underset{F\in\mathcal{G}_\cz,\,
        \gamma\in\ch_\cy}{\min}\ \frac{1}{n}\sum_{i=1}^n \norm*{k_\cy(y_i, \cdot) - F(p(x_i)) - \gamma}_{\ch_\cy}^2 + \varepsilon_n\norm*{F}_{\mathcal{G}_\cz}^2,$$
    where RHS stands for the right-hand side of the equality of \eqref{eq:empirical-kdr=vv-krr}. Setting $\psi_i = k_\cy(y_i,\cdot)$ and $z_i = p(x_i)$, Proposition~\ref{prop: vv-ckrr} gives the unique minimizers $(F_P, \gamma_P)$ with $\alpha_i$ as defined there, so that $\psi_i - F_P(z_i) - \frac{1}{n}\sum_{j=1}^n(\psi_j - F_P(z_j)) = n\varepsilon_n\alpha_i$. Then,
    \begin{align*}
        \text{(RHS)} &= n\varepsilon_n^2\sum_{i=1}^n\|\alpha_i\|_{\ch_\cy}^2 + \varepsilon_n \norm*{\sum_{i=1}^n k_\cz(z_i,\cdot)\alpha_i}_{\cg_\cz}^2\\
        &= \varepsilon_n\pr*{n\varepsilon_n \Psi^\top H(G_Z+n\varepsilon_nI_n)^{-2}H\Psi + \Psi^\top H(G_Z+n\varepsilon_nI_n)^{-1}K_Z(G_Z+n\varepsilon_nI_n)^{-1}H\Psi}\\
        &\overset{(*)}{=} \varepsilon_n \Tr\pr*{G_Y(G_Z+n\varepsilon_nI_n)^{-1}(n\varepsilon_n(G_Z+n\varepsilon_nI_n)^{-1} + G_Z(G_Z+n\varepsilon_nI_n)^{-1}}\\
        &= \varepsilon_n \Tr(G_Y(G_Z+n\varepsilon_nI_n)^{-1}),
    \end{align*}
    where the equality with (*) indicates the equality \eqref{eq: commutative H} is used. Since $\Tr(\hatsig_{YY|p(X)}) = \varepsilon_n \Tr(G_Y(G_Z+n\varepsilon_nI_n)^{-1}),$ the proof of the equation \eqref{eq:empirical-kdr=vv-krr} is complete, and thus so is \eqref{eq:joint-learning}.
\end{proof}

\subsection{Convergence of the SCA algorithm}\label{sec:supp-sca-convergence}

This section proves Proposition~\ref{prop:sca-convergence} under the theoretical condition $\epsilon_{\mathrm{tol}} = 0$. Note that while general SCA convergence relies on the strong convexity of the surrogate to uniformly bound the directions \citep[e.g.,][]{scutariParallelDistributedMethods2017}, the compactness of $\cm_{m,d}$ enables this guarantee utilizing only the continuous differentiability of our convex surrogate. 

The surrogate satisfies the following properties established in \cref{sec:optimization-sca}:
\begin{enumerate}
    \item[(S1)] $Q_t(P)$ is convex and continuously differentiable in $P$;
    \item[(S2)] $Q_t(P)$ is continuous jointly in $(P, P_t)$;
    \item[(S3)] $\nabla Q_t(P_t) = \nabla T_n(P_t)$ (gradient consistency).
\end{enumerate}
Since the regularization $\varepsilon_n > 0$ ensures that $G_{PX} + n\varepsilon_n I_n$ is uniformly positive definite on the compact set $\cm_{m,d}$, the objective $T_n$ is continuously differentiable with Lipschitz-continuous gradient on $\cm_{m,d}$.

\begin{proof}
Let $\wt{P}_t \in \argmin_{P \in \cm_{m,d}} Q_t(P)$ and $D_t = \wt{P}_t - P_t$.

\noindent\textbf{Step 1: Stationarity or descent direction.} 
If $Q_t(\wt{P}_t) = Q_t(P_t)$, then $P_t$ minimizes the convex function $Q_t$ over $\cm_{m,d}$. Consequently, $\langle \nabla Q_t(P_t), P - P_t \rangle \ge 0$ for all $P \in \cm_{m,d}$.
By gradient consistency~(S3), this is precisely the first-order stationarity condition for $T_n$ at $P_t$. In this case, the algorithm terminates.

If $Q_t(\wt{P}_t) < Q_t(P_t)$, convexity of $Q_t$ yields
\[
\langle \nabla T_n(P_t), D_t \rangle
= \langle \nabla Q_t(P_t), D_t \rangle
\le Q_t(\wt{P}_t) - Q_t(P_t) < 0,
\]
confirming that $D_t$ is a descent direction for the true objective $T_n$.

\noindent\textbf{Step 2: Well-defined Armijo line search.}
Since $T_n$ is continuously differentiable and $D_t$ is a descent direction, Taylor expansion gives $T_n(P_t + \lambda D_t) = T_n(P_t) + \lambda \langle \nabla T_n(P_t), D_t \rangle + o(\lambda)$. Because $c \in (0, 1)$, the Armijo inequality 
\[
T_n(P_t + \lambda D_t) \le T_n(P_t) + c\lambda \langle \nabla T_n(P_t), D_t\rangle
\]
holds for all sufficiently small $\lambda > 0$, guaranteeing the backtracking search terminates after finitely many halvings.

\noindent\textbf{Step 3: Objective convergence.}
The accepted step ensures strict decrease: $T_n(P_{t+1}) \le T_n(P_t) + c\lambda_t \langle \nabla T_n(P_t), D_t \rangle < T_n(P_t)$. Since the continuous function $T_n$ is bounded below on the compact domain $\cm_{m,d}$, the sequence $\{T_n(P_t)\}$ converges, and the successive decrements vanish: $T_n(P_t) - T_n(P_{t+1}) \to 0$.

\noindent\textbf{Step 4: Accumulation points are stationary.}
Suppose, for the sake of contradiction, that a subsequence $P_{t_k}\to\bar P$, where $\bar P$ is not first-order stationary. Then there exists $P'\in\cm_{m,d}$ such that
\[
\langle \nabla T_n(\bar P), P' - \bar P\rangle < 0.
\]
Write the surrogate as $Q(P;Y)$ when its base point is $Y$, so that $Q_t(P)=Q(P;P_t)$. By gradient consistency, $\nabla_P Q(\bar P;\bar P)=\nabla T_n(\bar P)$. Since $Q(\cdot;\bar P)$ is continuously differentiable, there is an $\alpha\in(0,1]$ such that $P_\alpha:=\bar P+\alpha(P'-\bar P)\in\cm_{m,d}$ and
\[
Q(P_\alpha;\bar P) < Q(\bar P;\bar P).
\]
Hence, for some $\eta>0$ and all sufficiently large $k$, the joint continuity of $Q$ gives
\[
Q(P_\alpha;P_{t_k}) \le Q(P_{t_k};P_{t_k})-\eta .
\]
Since $\wt{P}_{t_k}$ minimizes $Q(\cdot;P_{t_k})$ over $\cm_{m,d}$,
\[
Q(\wt{P}_{t_k};P_{t_k}) - Q(P_{t_k};P_{t_k}) \le -\eta.
\]
Convexity of $Q(\cdot;P_{t_k})$ then yields
\[
\langle \nabla T_n(P_{t_k}), D_{t_k}\rangle
= \langle \nabla_P Q(P_{t_k};P_{t_k}), D_{t_k}\rangle
\le Q(\wt{P}_{t_k};P_{t_k}) - Q(P_{t_k};P_{t_k})
\le -\eta .
\]
Let $L$ be a Lipschitz constant of $\nabla T_n$ on $\cm_{m,d}$ and let $R=\operatorname{diam}(\cm_{m,d})$. The $L$-Lipschitz continuity of the gradient guarantees that the Armijo condition holds whenever
\[
\lambda \le \frac{2(1-c)\eta}{LR^2}.
\]
Thus the backtracking rule accepts step sizes bounded below by $\lambda_{\min}:=\min\{1,\beta\,2(1-c)\eta/(LR^2)\}>0$ along the subsequence. Consequently, the Armijo condition gives
\[
T_n(P_{t_k}) - T_n(P_{t_k+1}) \ge c\,\lambda_{\min} \eta > 0
\]
for all sufficiently large $k$. This contradicts $T_n(P_t) - T_n(P_{t+1}) \to 0$ from Step 3. Therefore, every accumulation point must be a first-order stationary point. 
\end{proof}

\section{Consistency of the CKDR estimator (\cref{sec:consistency})}\label{sec:supp-consistency-proof}

This section proves our consistency result in \cref{sec:consistency} over the set of CDR matrices $\cm_{m,d}$. In Section~\ref{sec:supp-counterexample}, we also give an illustration of why the population objective is discontinuous, precluding the classical uniform convergence argument of KDR.

We first introduce some notations for convenience:
$$T_n(P):= \Tr(\hatsig_{YY|PX}) \quad\text{and}\quad T(P):=\Tr(\s_{YY|PX}), $$
where $T_n(P)$ is computed as in the equation \eqref{eq:empirical-objective}. Since the empirical objective $T_n$ is regularized by the parameter $\varepsilon_n$ and $T$ is not regularized, we also introduce an intermediate bridge, the regularized function at the population level: for $\varepsilon >0$, 
$$T^{\varepsilon}(P):= \Tr\pr*{\s_{YY} - \s_{Y,PX}(\s_{PX,PX}+\varepsilon I)^{-1}\s_{PX,Y} },$$
where $I$ denotes the identity operator. Using these notations, we prove our consistency result based on the following three key results:
\begin{enumerate}
    \item[(i)] In Section~\ref{sec:supp-part-i}, we prove the uniform convergence between $T_n$ and $T^{\varepsilon_n}$ (Corollary~\ref{cor: part (i)}):
    $$\sup_{P\in\cm_{m,d}}\, \abs*{T_n(P) - T^{\varepsilon_n}(P)} = O_p\pr*{\frac{1}{\varepsilon_n\sqrt{n}}}.$$
    \item[(ii)] In Section~\ref{sec:supp-part-ii}, we show that $T^{\varepsilon_n}(P)$ \textit{monotonically} converges to $T(P)$ (Lemma~\ref{cco:lemma-pointwise convergence}), and that $T$ is continuous on each rank-$k$ subset $\cm_{m,d}^{(k)}$ of $\cm_{m,d}$ (Lemma~\ref{lemma:continuity-population}).
    \item[(iii)] In Section~\ref{sec:supp-part-iii}, we complete the consistency proof by establishing a pointwise convergence $T(\wh P_n) \to T(P^\star)$ first, followed by showing that the minimum of $T$ is \textit{well-separated} from ``bad regions''
    (Lemma~\ref{lemma:uniform-gaps}).
\end{enumerate}

The uniform rate part (i) largely follows the corresponding result of \citet{fukumizuKernelDimensionReduction2009}, based on the compatibility results in Section~\ref{sec:supp-kdr-pullback}. 
Part (ii) extends similar results in prior work, while we derive \textit{monotonic pointwise convergence} and \textit{rank-wise continuity} rather than uniform convergence between $T^\varepsilon$ and $T$, which is invalid in our varying-rank domain $\cm_{m,d}$. The monotonicity is crucial in deriving the convergence $T(\wh P_n)\to T(P^\star)$ (Corollary~\ref{cor:key-convergence}), and the rank-wise continuity is also essential in proving a uniform-gap result in part (iii). We finish the consistency proof in part (iii), which establishes the pointwise convergence and uniform gap results outlined in Remark~\ref{rmk:consistency-proof}.

\subsection{Part (i): uniform rate with the intermediate function}\label{sec:supp-part-i}

In the following lemma, $\|\Sigma\|_{HS}$ denotes the Hilbert-Schmidt norm of the operator $\Sigma$ on a Hilbert space, and $\|\Sigma\|$ denotes the operator norm of $\Sigma$. Its proof is given in the reference:

\begin{lemma}[{\citet[Lemma 8]{fukumizuKernelDimensionReduction2009}}]\label{cco:lemma-decomposition}
    For $P\in\cm_{m,d}$, write $Z=PX$. Then,
    \begin{align*}
        & \abs*{T_n(P) - T^{\varepsilon_n}(P)} \\[0.5mm]
        & \quad \leq \frac{1}{\varepsilon_n}\br*{(\|{\hatsig_{YZ}}\|_{HS} + \|\s_{YZ}\|_{HS}) \|\hatsig_{YZ} - \s_{YZ}\|_{HS} + \Tr(\s_{YY})\|\hatsig_{ZZ} - \s_{ZZ}\|} \\
        & \qquad\quad + \abs*{\Tr(\hatsig_{YY} - \s_{YY})}.
    \end{align*}
\end{lemma}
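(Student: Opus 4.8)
The plan is to set $Z=PX$ and compare the ``regression parts'' of the two conditional covariance operators, abbreviating $R_n=(\hatsig_{ZZ}+\varepsilon_n I)^{-1}$, $R=(\s_{ZZ}+\varepsilon_n I)^{-1}$ and $M=\hatsig_{ZZ}-\s_{ZZ}$. From the definition \eqref{eq: empirical condcov} of $\hatsig_{YY|PX}$ and that of $T^{\varepsilon_n}$,
\[
    T_n(P)-T^{\varepsilon_n}(P)=\Tr(\hatsig_{YY}-\s_{YY})-\Tr\pr*{\hatsig_{YZ}R_n\hatsig_{ZY}-\s_{YZ}R\s_{ZY}},
\]
so the triangle inequality immediately peels off the term $\abs*{\Tr(\hatsig_{YY}-\s_{YY})}$, and it remains to bound $\abs{\Tr D}$ for the operator difference $D:=\hatsig_{YZ}R_n\hatsig_{ZY}-\s_{YZ}R\s_{ZY}$ on $\ch_\cy$.

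First I would telescope $D$ so that each summand perturbs a single factor:
\[
    D=(\hatsig_{YZ}-\s_{YZ})R_n\hatsig_{ZY}+\s_{YZ}R_n(\hatsig_{ZY}-\s_{ZY})+\s_{YZ}(R_n-R)\s_{ZY}.
\]
For the first two summands I would apply the trace–Cauchy–Schwarz inequality $\abs{\Tr(CD)}\le\norm{C}_{HS}\norm{D}_{HS}$, the submultiplicativity $\norm{AB}_{HS}\le\norm{A}\,\norm{B}_{HS}$, and the resolvent bound $\norm{R_n}\le\varepsilon_n^{-1}$ (valid since $\hatsig_{ZZ}\succeq0$). Using $\norm{\hatsig_{ZY}-\s_{ZY}}_{HS}=\norm{\hatsig_{YZ}-\s_{YZ}}_{HS}$ and $\norm{\hatsig_{ZY}}_{HS}=\norm{\hatsig_{YZ}}_{HS}$, these two pieces are bounded by $\varepsilon_n^{-1}\norm{\hatsig_{YZ}}_{HS}\norm{\hatsig_{YZ}-\s_{YZ}}_{HS}$ and $\varepsilon_n^{-1}\norm{\s_{YZ}}_{HS}\norm{\hatsig_{YZ}-\s_{YZ}}_{HS}$, which combine to the first bracketed expression in the stated inequality.

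The hard part will be the resolvent-difference summand $\s_{YZ}(R_n-R)\s_{ZY}$, and this is where I expect the main obstacle: the crude estimate $\norm{R_n-R}\le\norm{R_n}\norm{M}\norm{R}\le\varepsilon_n^{-2}\norm{M}$ loses a full power of $\varepsilon_n$ relative to the target rate. To recover the single power $\varepsilon_n^{-1}$ and also produce the factor $\Tr(\s_{YY})$ rather than $\norm{\s_{YZ}}_{HS}^2$, I would factor through the correlation operators \eqref{eq: correlation}, writing $\s_{YZ}=\s_{YY}^{1/2}V_{YZ}\s_{ZZ}^{1/2}$ and $\s_{ZY}=\s_{ZZ}^{1/2}V_{ZY}\s_{YY}^{1/2}$ with $\norm{V_{YZ}},\norm{V_{ZY}}\le1$. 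Cycling the trace then collects a factor $\s_{YY}$:
\[
    \abs*{\Tr\pr*{\s_{YZ}(R_n-R)\s_{ZY}}}=\abs*{\Tr\pr*{\s_{YY}\,V_{YZ}\,\s_{ZZ}^{1/2}(R_n-R)\s_{ZZ}^{1/2}\,V_{ZY}}}\le\Tr(\s_{YY})\,\norm*{\s_{ZZ}^{1/2}(R_n-R)\s_{ZZ}^{1/2}},
\]
where the inequality uses $\abs{\Tr(\s_{YY}A)}\le\norm{A}\Tr(\s_{YY})$ for the positive trace-class operator $\s_{YY}$ together with $\norm{V}\le1$.

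The decisive gain is that each square root $\s_{ZZ}^{1/2}$ can now be paired with a resolvent. Applying the resolvent identity $R_n-R=-R\,M\,R_n$ and the spectral bound $\norm{\s_{ZZ}^{1/2}R}=\norm{\s_{ZZ}^{1/2}(\s_{ZZ}+\varepsilon_n I)^{-1}}\le\tfrac12\varepsilon_n^{-1/2}$ (which holds because $\s_{ZZ}$ commutes with its own resolvent), one absorbs the population side of the product into a norm-controlled operator while the $\s_{ZZ}$ remaining against the empirical resolvent collapses via $\s_{ZZ}R_n$ being norm-bounded by $1$ up to the surplus factor $\varepsilon_n^{-1}$; this leaves exactly a single $\norm{M}=\norm{\hatsig_{ZZ}-\s_{ZZ}}$ and one power $\varepsilon_n^{-1}$, yielding $\varepsilon_n^{-1}\Tr(\s_{YY})\norm{\hatsig_{ZZ}-\s_{ZZ}}$. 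I expect this last power-matching to be the delicate step — it is precisely the point where the unbounded behaviour of $R_n$ must be tamed by the $\s_{ZZ}^{1/2}$ factors rather than bounded crudely — and it is carried out in the cited reference. Summing the three telescoped pieces with the already-isolated $\abs*{\Tr(\hatsig_{YY}-\s_{YY})}$ then produces the stated inequality.
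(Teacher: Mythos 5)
The paper itself contains no proof of this lemma: it is quoted verbatim from Lemma~8 of \citet{fukumizuKernelDimensionReduction2009}, and the text explicitly defers to that reference. Measured against that argument, your telescoping of $D$, the trace--Cauchy--Schwarz bounds on the two cross-covariance perturbation terms (which do yield the first bracketed expression), and the reduction of the resolvent-difference term to $\Tr(\s_{YY})\,\norm{\s_{ZZ}^{1/2}(R_n-R)\s_{ZZ}^{1/2}}$ via the correlation factorization \eqref{eq: correlation} and cyclicity of the trace are all correct. The genuine gap is the final step. Your justification of $\norm{\s_{ZZ}^{1/2}(R_n-R)\s_{ZZ}^{1/2}}\le\varepsilon_n^{-1}\norm{\hatsig_{ZZ}-\s_{ZZ}}$ rests on the assertion that ``$\s_{ZZ}R_n$ is norm-bounded by $1$,'' and that is false: $\s_{ZZ}$ is the \emph{population} covariance while $R_n=(\hatsig_{ZZ}+\varepsilon_n I)^{-1}$ is the \emph{empirical} resolvent, and these do not commute. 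The spectral bounds $\norm{\hatsig_{ZZ}R_n}\le 1$ and $\norm{\hatsig_{ZZ}^{1/2}R_n}\le\tfrac12\varepsilon_n^{-1/2}$ hold only for matched pairs (likewise $\s_{ZZ}$ with $R$); for a mismatched pair one can have $\norm{\s_{ZZ}R_n}$ as large as $\norm{\s_{ZZ}}/\varepsilon_n$ (e.g.\ when $\s_{ZZ}$ and $\hatsig_{ZZ}$ have orthogonal ranges). Because your telescoping places the population operators $\s_{YZ},\s_{ZY}$ on \emph{both} sides of $R_n-R$, after the identity $R_n-R=-RMR_n$ one factor $\s_{ZZ}^{1/2}$ is inevitably paired with the mismatched resolvent $R_n$, and the mixed factor $\norm{R_n\s_{ZZ}^{1/2}}$ admits no deterministic bound better than $\varepsilon_n^{-1}\norm{\s_{ZZ}}^{1/2}$ or $\tfrac12\varepsilon_n^{-1/2}\bigl(1+4\norm{M}/\varepsilon_n\bigr)^{1/2}$; either way you get an extra $\varepsilon_n^{-1/2}$ or a spurious $\norm{M}^2/\varepsilon_n^2$ term rather than the stated $\varepsilon_n^{-1}\norm{M}$. (A scalar sanity check will not detect this, since in the commuting case the claim is true.) Deferring precisely this ``power-matching'' step to the cited reference concedes the one point that makes the lemma nontrivial.

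The known repair is not a sharper bound on the mixed factor but an \emph{asymmetric} telescoping that avoids it altogether: write
\begin{equation*}
    D=(\hatsig_{YZ}-\s_{YZ})R_n\hatsig_{ZY}+\s_{YZ}(R_n-R)\hatsig_{ZY}+\s_{YZ}R(\hatsig_{ZY}-\s_{ZY}),
\end{equation*}
so the resolvent difference is flanked by one population and one empirical cross-covariance. The outer terms are bounded exactly as in your proposal. For the middle term, the resolvent identity gives $\s_{YZ}(R_n-R)\hatsig_{ZY}=\s_{YZ}R(\s_{ZZ}-\hatsig_{ZZ})R_n\hatsig_{ZY}$, and the two correlation factorizations $\s_{YZ}R=\s_{YY}^{1/2}V_{YZ}\bigl(\s_{ZZ}^{1/2}R\bigr)$ and $R_n\hatsig_{ZY}=\bigl(R_n\hatsig_{ZZ}^{1/2}\bigr)\wh{V}_{ZY}\hatsig_{YY}^{1/2}$ pair each resolvent with \emph{its own} covariance square root, so both parenthesized factors are bounded by $\tfrac12\varepsilon_n^{-1/2}$ by legitimate spectral calculus. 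Cycling the trace then bounds the middle term by $\tfrac14\varepsilon_n^{-1}\norm{\hatsig_{ZZ}-\s_{ZZ}}$ times a trace-class factor in the response (of the form $\norm{\hatsig_{YY}^{1/2}\s_{YY}^{1/2}}_1$), which is the mechanism producing the $\Tr(\s_{YY})$-type constant in the statement. This asymmetric pairing is the idea your sketch is missing, and it, together with the bookkeeping of the trace factor, is what the cited reference carries out.
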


Since the operator norm is bounded by the Hilbert-Schmidt norm (spectral theorem), $\|\hatsig_{ZZ} - \s_{ZZ}\|\leq \|\hatsig_{ZZ}-\s_{ZZ}\|_{HS}$, the following lemma suffices to establish part (i).

\begin{lemma}\label{cco:lemma-sup convergence}
    Under Assumption~\ref{assumption: Lipschitz}, all the terms
    $$\sup_{P\in\cm_{m,d}} \|\hatsig_{Y,PX} - \s_{Y,PX}\|_{HS},\ \sup_{P\in\cm_{m,d}} \|\hatsig_{PX,PX} - \s_{PX,PX}\|_{HS},\ \text{ and }\ \abs*{\Tr(\hatsig_{YY} - \s_{YY})} $$
    are of order $O_p\big(\frac{1}{\sqrt{n}}\big)$ as $n\to\infty$.
\end{lemma}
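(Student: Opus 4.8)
The plan is to treat the three quantities separately, isolating the only genuine difficulty---the uniformity over $P\in\cm_{m,d}$ in the first two---while disposing of the $P$-free third quantity by a standard argument. Throughout I would write $\psi_i=k_\cy(y_i,\cdot)\in\ch_\cy$ and $\phi_{P,i}=k_\cz(Px_i,\cdot)\in\ch_\cz$, and use the rank-one representations $\hatsig_{Y,PX}=\frac1n\sum_i(\psi_i-\bar\psi)\otimes(\phi_{P,i}-\bar\phi_P)$, $\hatsig_{PX,PX}=\frac1n\sum_i(\phi_{P,i}-\bar\phi_P)\otimes(\phi_{P,i}-\bar\phi_P)$, together with their population analogues. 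For the third term, $\hatsig_{YY}-\s_{YY}$ does not depend on $P$; writing $\Tr(\hatsig_{YY})=\frac1n\sum_i k_\cy(y_i,y_i)-\norm{\bar\psi}_{\ch_\cy}^2$ reduces the claim to a scalar $\sqrt n$-rate law of large numbers for the i.i.d.\ average of $k_\cy(y_i,y_i)$ and the $\sqrt n$-consistency of $\bar\psi$ for $m_Y$, both holding under the standing integrability $\E[k_\cy(Y,Y)]<\infty$ together with the mild moment conditions inherited from \citet{fukumizuKernelDimensionReduction2009}.

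A key structural simplification in the compositional setting is that the reduced data $Px_i$ always lie in the \emph{compact} simplex $\cz=\Delta^{m-1}$, so by continuity of $k_\cz$ the feature maps are uniformly bounded: $\norm{\phi_{P,i}}_{\ch_\cz}^2=k_\cz(Px_i,Px_i)\le \kappa_\cz:=\sup_{z\in\Delta^{m-1}}k_\cz(z,z)<\infty$ for all $P$ and $i$. This uniform bound yields the pointwise rates at once: for each fixed $P$ the relevant summands have finite second Hilbert--Schmidt moment (bounded by $\kappa_\cz\,\E[k_\cy(Y,Y)]$ for the cross term and by $\kappa_\cz^2$ for the auto term), so standard covariance-operator concentration gives $\E\norm{\hatsig_{Y,PX}-\s_{Y,PX}}_{HS}^2=O(1/n)$ and likewise for $\hatsig_{PX,PX}$.

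To upgrade these pointwise rates to uniform ones I would invoke \cref{assumption: Lipschitz}. The bound $\norm{\phi_{P_1,i}-\phi_{P_2,i}}_{\ch_\cz}\le\varphi(x_i)\norm{P_1-P_2}$, combined with the identity $a\otimes a-b\otimes b=(a-b)\otimes a+b\otimes(a-b)$ and the centering bookkeeping, shows that $P\mapsto\hatsig_{Y,PX}$ and $P\mapsto\hatsig_{PX,PX}$ are Lipschitz in $P$ with random moduli of the form $\frac1n\sum_i\norm{\psi_i-\bar\psi}_{\ch_\cy}(\varphi(x_i)+\bar\varphi)$ and $\frac{2\sqrt{\kappa_\cz}}{n}\sum_i(\varphi(x_i)+\bar\varphi)$; by the law of large numbers and Cauchy--Schwarz (using $\E[\varphi(X)^2]<\infty$ and $\E[k_\cy(Y,Y)]<\infty$) these moduli are $O_p(1)$, and the same bounds hold at the population level with deterministic constants. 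Thus the Hilbert--Schmidt-valued centered processes $P\mapsto\sqrt n(\hatsig_\bullet-\s_\bullet)$ are Lipschitz-parametrized over the compact, finite-dimensional set $\cm_{m,d}\subset\R^{m\times d}$ with square-integrable Lipschitz envelope $\varphi$. I would then conclude by an empirical-process argument: the covering numbers satisfy $\log N(\epsilon,\cm_{m,d},\norm{\cdot})\lesssim md\,\log(1/\epsilon)$, so the entropy integral $\int_0^{\operatorname{diam}}\sqrt{\log N(\epsilon)}\,d\epsilon$ is finite, which---with the pointwise rate at an anchor point---forces both suprema to be $O_p(1/\sqrt n)$.

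The main obstacle is precisely this last upgrade. A crude $\epsilon$-net plus union bound over the polynomially many ($\asymp n^{md/2}$) net points would only give $O_p(\sqrt{\log n/n})$, losing the sharp rate; avoiding the spurious $\sqrt{\log n}$ requires a chaining/bracketing argument (equivalently, establishing that the Lipschitz-parametrized, square-integrable-envelope class is $\P$-Donsker). Additional care is needed because the objects are Hilbert--Schmidt-operator-valued rather than scalar: this is handled either by pairing with a fixed orthonormal basis of $\ch_\cy\otimes\ch_\cz$ to reduce to scalar empirical processes, or by invoking the Hilbert-space form of the maximal inequality, with the condition $\E[\varphi(X)^2]<\infty$ of \cref{assumption: Lipschitz} supplying exactly the integrability that the entropy bound consumes.
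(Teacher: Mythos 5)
Your proposal is correct and follows essentially the same route as the paper's proof: uniform boundedness of the $\ch_\cz$ feature maps from compactness of $\Delta^{m-1}$, Lipschitz continuity in $P$ of the tensor-product summands via \cref{assumption: Lipschitz}, a separate elementary treatment of the $P$-free trace term, and a final maximal-inequality step for Lipschitz-parametrized, Hilbert-space-valued classes over the compact set $\cm_{m,d}$. The only difference is packaging: the paper imports that last chaining-type step as a black box (\cref{cco:prop-uniform rate}, i.e.\ Proposition 15 of \citet{fukumizuKernelDimensionReduction2009}), whereas you sketch re-deriving it via covering-number/entropy-integral arguments---which is exactly what that proposition encapsulates.
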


\begin{proof}
    For each $P\in\cm_{m,d}$, set $p(x)=Px$ and use the pullback RKHS $\ch_\cx^p$ induced by $k_\cx^p(x,x')=k_\cz(p(x),p(x'))$.
    Assumption~\ref{assumption: Lipschitz} is precisely condition (A-3) of \citet{fukumizuKernelDimensionReduction2009}, and Lemma~9 therein directly gives
    $$\sup_{P\in\cm_{m,d}}\|\hatsig_{YX}^{p}-\s_{YX}^{p}\|_{HS}=O_p(n^{-1/2}),\qquad
    \sup_{P\in\cm_{m,d}}\|\hatsig_{XX}^{p}-\s_{XX}^{p}\|_{HS}=O_p(n^{-1/2}).$$
    Since $p^*$ induces an isometry on $(\ker p^*)^\perp$, by results in Section~\ref{sec:supp-kdr-pullback}, the pullback via $p^*$ preserves the corresponding Hilbert-Schmidt norms. Therefore, these two HS norms are exactly the desired HS norms for $\hatsig_{Y,PX}-\s_{Y,PX}$ and $\hatsig_{PX,PX}-\s_{PX,PX}$. By Lemma~9 of \citet{fukumizuKernelDimensionReduction2009}, we have all three terms $O_p(n^{-1/2})$.
\end{proof}

Lemmas \ref{cco:lemma-decomposition} and \ref{cco:lemma-sup convergence} complete the proof of the following uniform rate between $T_n$ and the intermediate bridge function $T^{\varepsilon_n}$:

\begin{corollary}\label{cor: part (i)}
    Under Assumption~\ref{assumption: Lipschitz} and the condition \eqref{eq: epsilon condition} on the regularization parameter $\varepsilon_n$, we have the uniform rate
    $$\sup_{P\in\cm_{m,d}}\, \abs*{T_n(P) - T^{\varepsilon_n}(P)} = O_p\pr*{\frac{1}{\varepsilon_n\sqrt{n}}}, $$
    as $n\to\infty$.
\end{corollary}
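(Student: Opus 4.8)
The plan is to derive the corollary directly by taking the supremum over $P \in \cm_{m,d}$ of the pointwise decomposition in \cref{cco:lemma-decomposition} and inserting the uniform rates of \cref{cco:lemma-sup convergence}. Writing $Z = PX$, that decomposition controls $\abs{T_n(P) - T^{\varepsilon_n}(P)}$ by the sum of a $1/\varepsilon_n$-scaled piece involving the products $(\|\hatsig_{YZ}\|_{HS} + \|\s_{YZ}\|_{HS})\|\hatsig_{YZ} - \s_{YZ}\|_{HS}$ and $\Tr(\s_{YY})\|\hatsig_{ZZ} - \s_{ZZ}\|$, together with the unscaled piece $\abs{\Tr(\hatsig_{YY} - \s_{YY})}$. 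Since $\Tr(\s_{YY}) = \E[\|k_\cy(Y,\cdot) - m_Y\|_{\ch_\cy}^2]$ does not depend on $P$ and is finite under the integrability assumption, and since the operator norm is dominated by the Hilbert--Schmidt norm, applying $\sup_P$ and the sub-multiplicativity of the supremum over products reduces everything to the three uniform quantities of \cref{cco:lemma-sup convergence}, up to the two multiplicative factors $\|\hatsig_{YZ}\|_{HS}$ and $\|\s_{YZ}\|_{HS}$.

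The only step requiring care is showing that these two factors are uniformly $O_p(1)$ over $\cm_{m,d}$; I expect this mild point to be the main obstacle, since the distribution of $Z = PX$ itself varies with $P$. The key observation is that $k_\cz$ is continuous on the compact simplex $\Delta^{m-1}$, hence bounded, so that $\kappa := \sup_{z \in \Delta^{m-1}} k_\cz(z,z) < \infty$. Consequently $\Tr(\s_{ZZ}) \le \E[k_\cz(Z,Z)] \le \kappa$ for every $P$, and the standard estimate $\|\s_{YZ}\|_{HS}^2 \le \E[\|k_\cy(Y,\cdot)-m_Y\|_{\ch_\cy}^2]\,\E[\|k_\cz(Z,\cdot)-m_Z\|_{\ch_\cz}^2] \le \Tr(\s_{YY})\,\kappa$ gives a $P$-free bound on $\|\s_{YZ}\|_{HS}$. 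For the empirical factor, the triangle inequality $\|\hatsig_{YZ}\|_{HS} \le \|\hatsig_{YZ} - \s_{YZ}\|_{HS} + \|\s_{YZ}\|_{HS}$ combined with \cref{cco:lemma-sup convergence} yields $\sup_P \|\hatsig_{YZ}\|_{HS} = O_p(1/\sqrt n) + O(1) = O_p(1)$.

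Assembling the pieces, I would bound
\[
\sup_{P\in\cm_{m,d}} \abs{T_n(P) - T^{\varepsilon_n}(P)} \le \frac{1}{\varepsilon_n}\Big[ O_p(1)\cdot O_p(\tfrac{1}{\sqrt n}) + O(1)\cdot O_p(\tfrac{1}{\sqrt n})\Big] + O_p(\tfrac{1}{\sqrt n}),
\]
using the uniform $O_p(1/\sqrt n)$ rates for $\sup_P\|\hatsig_{YZ}-\s_{YZ}\|_{HS}$ and $\sup_P\|\hatsig_{ZZ}-\s_{ZZ}\|$ (the latter via the operator-norm bound by the Hilbert--Schmidt norm). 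The bracketed term is $O_p(1/\sqrt n)$, so the first summand is $O_p(1/(\varepsilon_n\sqrt n))$. Finally, because $\varepsilon_n \to 0$ by \eqref{eq: epsilon condition}, we have $1/\sqrt n = o(1/(\varepsilon_n\sqrt n))$, so the unscaled term $O_p(1/\sqrt n)$ is absorbed into the leading rate, yielding the claimed $\sup_P\abs{T_n(P)-T^{\varepsilon_n}(P)} = O_p(1/(\varepsilon_n\sqrt n))$.
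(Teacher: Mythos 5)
Your proposal is correct and takes essentially the same route as the paper: the paper's proof likewise consists of taking the supremum of the decomposition in \cref{cco:lemma-decomposition}, inserting the uniform $O_p(1/\sqrt{n})$ rates of \cref{cco:lemma-sup convergence} together with the operator-norm/Hilbert--Schmidt-norm comparison, and absorbing the unscaled $O_p(1/\sqrt{n})$ term using $\varepsilon_n \to 0$. Your explicit verification that $\sup_{P}\bigl(\|\hatsig_{YZ}\|_{HS} + \|\s_{YZ}\|_{HS}\bigr) = O_p(1)$, via the boundedness of $k_\cz$ on the compact simplex and the triangle inequality, is a detail the paper leaves implicit, but it is the same argument rather than a different approach.
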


\subsection{Part (ii): properties of $T^{\varepsilon_n}$ and $T$}\label{sec:supp-part-ii}

Next, we study the properties between the bridge function $T^{\varepsilon}$ and the population objective $T$. Intuitively, $T^{\varepsilon}$ behaves like a smoothed version of $T$: as the ridge parameter $\varepsilon$ shrinks, the smoothing vanishes and the function $T^{\varepsilon}$ approaches $T$.
The next lemma shows the monotonic pointwise convergence of $T^\varepsilon\to T$, which is essential in our consistency proof:
\begin{lemma}\label{cco:lemma-pointwise convergence}
    Whenever $\varepsilon > \varepsilon' > 0$, we have $T^\varepsilon \ge T^{\varepsilon'}$. Moreover, for each $P\in\cm_{m,d}$, 
    $T^{\varepsilon}(P)\to T(P) $ as $\varepsilon\to 0$.
\end{lemma}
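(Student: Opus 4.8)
The plan is to fix $P\in\cm_{m,d}$, write $Z=PX$, and reduce both assertions to spectral statements about the bounded nonnegative operator $\s_{ZZ}$ via functional calculus. First I would apply the correlation-operator factorization \eqref{eq: correlation}, $\s_{YZ}=\s_{YY}^{1/2}V_{YZ}\s_{ZZ}^{1/2}$ and its adjoint $\s_{ZY}=\s_{ZZ}^{1/2}V_{ZY}\s_{YY}^{1/2}$, to the regularized term. Since $\s_{ZZ}^{1/2}$ and $(\s_{ZZ}+\varepsilon I)^{-1}$ share the same spectral family, we have $\s_{ZZ}^{1/2}(\s_{ZZ}+\varepsilon I)^{-1}\s_{ZZ}^{1/2}=g_\varepsilon(\s_{ZZ})$, where $g_\varepsilon(t)=t/(t+\varepsilon)$ is applied through the functional calculus for $\s_{ZZ}$. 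This yields
$$T^\varepsilon(P)=\Tr(\s_{YY})-\Tr(B_\varepsilon),\qquad B_\varepsilon:=\s_{YY}^{1/2}V_{YZ}\,g_\varepsilon(\s_{ZZ})\,V_{ZY}\s_{YY}^{1/2}.$$
Because $\s_{YY}$ is trace-class and the bounds $0\preceq g_\varepsilon(\s_{ZZ})\preceq I$ and $\|V_{YZ}\|\le 1$ force $0\preceq B_\varepsilon\preceq\s_{YY}$, all traces above are finite.

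For the monotonicity claim, I would observe that for each fixed $t\ge0$ the scalar map $\varepsilon\mapsto t/(t+\varepsilon)$ is nonincreasing, so functional calculus gives $g_\varepsilon(\s_{ZZ})\preceq g_{\varepsilon'}(\s_{ZZ})$ whenever $\varepsilon>\varepsilon'>0$. Writing $M=V_{ZY}\s_{YY}^{1/2}$ so that $B_\varepsilon=M^{*}g_\varepsilon(\s_{ZZ})M$, conjugation by the fixed bounded operator $M$ preserves the positive semidefinite order, giving $B_\varepsilon\preceq B_{\varepsilon'}$. Taking traces yields $\Tr(B_\varepsilon)\le\Tr(B_{\varepsilon'})$ and hence $T^\varepsilon(P)\ge T^{\varepsilon'}(P)$, as claimed.

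For convergence as $\varepsilon\to0$, functional calculus gives $g_\varepsilon(\s_{ZZ})\to\Pi_{\clo\ran(\s_{ZZ})}$ in the strong operator topology, since $t/(t+\varepsilon)\to\mathbf 1\{t>0\}$ pointwise and $g_\varepsilon$ is uniformly bounded by $1$. The factorization property in \eqref{eq: correlation} gives $V_{YZ}\Pi_{\clo\ran(\s_{ZZ})}=V_{YZ}$, so the strong limit of $B_\varepsilon$ equals $\s_{YY}^{1/2}V_{YZ}V_{ZY}\s_{YY}^{1/2}=\s_{YY}-\s_{YY|Z}$ by the definition \eqref{eq: condcov-def} with $X$ replaced by $Z$. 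To pass the trace through this limit, I would expand $\Tr(B_\varepsilon)=\sum_i\langle f_i,B_\varepsilon f_i\rangle_{\ch_\cy}$ over a CONS $\{f_i\}$ of $\ch_\cy$ and write each summand as $\langle w_i,g_\varepsilon(\s_{ZZ})w_i\rangle_{\ch_\cz}$ with $w_i=Mf_i$; the monotonicity from the previous step shows each summand increases to $\langle w_i,\Pi_{\clo\ran(\s_{ZZ})}w_i\rangle_{\ch_\cz}$ as $\varepsilon\downarrow0$. Applying the monotone convergence theorem over the counting measure in $i$ then lets me interchange the sum and the limit, giving $\Tr(B_\varepsilon)\to\Tr(\s_{YY}-\s_{YY|Z})$ and therefore $T^\varepsilon(P)\to\Tr(\s_{YY|Z})=T(P)$.

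The main obstacle is this final step: $B_\varepsilon$ converges to its limit only strongly, not in trace norm, so the interchange of the infinite trace sum with $\varepsilon\to0$ is not automatic. It is resolved precisely by the operator monotonicity established in the second step, which upgrades the termwise (strong) convergence to convergence of the trace via monotone convergence, while the uniform domination $B_\varepsilon\preceq\s_{YY}$ guarantees that the limiting trace is finite. The remaining manipulations—the factorization identities and the spectral bounds—are routine once the operators are expressed through $g_\varepsilon(\s_{ZZ})$.
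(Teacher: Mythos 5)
Your proof is correct, and it diverges from the paper's in an instructive way. For the monotonicity claim the two arguments are equivalent in substance but different in mechanics: the paper applies the resolvent identity to obtain the explicit formula $T^\varepsilon(P)-T^{\varepsilon'}(P)=(\varepsilon-\varepsilon')\Tr\bigl(\s_{YZ}(\s_{ZZ}+\varepsilon' I)^{-1}(\s_{ZZ}+\varepsilon I)^{-1}\s_{ZY}\bigr)\ge 0$, whereas you reach the same conclusion from the pointwise monotonicity of $g_\varepsilon(t)=t/(t+\varepsilon)$ under the functional calculus of the single operator $\s_{ZZ}$, followed by conjugation with $M=V_{ZY}\s_{YY}^{1/2}$; both are elementary spectral arguments. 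The genuine difference is in the convergence claim: the paper does not prove it at all, deferring entirely to Lemma 11 of Fukumizu, Bach, and Jordan (2009), while you supply a complete, self-contained argument---strong convergence $g_\varepsilon(\s_{ZZ})\to\Pi_{\clo\ran(\s_{ZZ})}$, the identity $V_{YZ}\Pi_{\clo\ran(\s_{ZZ})}=V_{YZ}$ drawn from \eqref{eq: correlation}, and the monotone convergence theorem to pass the limit through the infinite trace sum. Your key observation is exactly the right one: strong operator convergence alone would not justify the trace limit (trace-norm convergence is not available), and it is the operator monotonicity $B_\varepsilon\preceq B_{\varepsilon'}$, established for the first claim, that licenses the termwise monotone-convergence interchange, with the uniform domination $B_\varepsilon\preceq\s_{YY}$ keeping every trace finite. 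What the paper's route buys is brevity by outsourcing the analytic core to the literature; what yours buys is a unified, self-contained proof in which both assertions flow from the single representation $B_\varepsilon=M^{*}g_\varepsilon(\s_{ZZ})M$, making the logical dependence of the convergence statement on the monotonicity statement explicit rather than hidden inside a citation.
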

\begin{proof}
    Let $P\in\cm_{m,d}$, and set $Z = PX$. Recall that 
    $$T^\varepsilon(P) = \Tr(\s_{YY} - \s_{YZ}(\s_{ZZ} + \varepsilon I)^{-1}\s_{ZY}),$$
    so we can write the difference as
    \begin{align*}
        T^\varepsilon(P) - T^{\varepsilon'}(P) &= \Tr\bigl(\s_{YZ}\{(\s_{ZZ} + \varepsilon' I)^{-1} - (\s_{ZZ} + \varepsilon I)^{-1}\}\s_{ZY}\bigr) \\
        &= (\varepsilon - \varepsilon')\Tr(\s_{YZ}(\s_{ZZ} + \varepsilon' I)^{-1}(\s_{ZZ} + \varepsilon I)^{-1}\s_{ZY}),
    \end{align*}
    which is nonnegative due to the positivity of the operators $(\s_{ZZ} + \varepsilon I)^{-1}$ and $(\s_{ZZ} + \varepsilon' I)^{-1}$. 
    
    The proof of pointwise convergence $T^{\varepsilon}(P)\to T(P)$ can be directly adopted from Lemma 11 of \citet{fukumizuKernelDimensionReduction2009}.
\end{proof}

The next lemma establishes the continuity of the population objective function $T(P)$ on \textit{each} rank-$k$ subset $\cm_{m,d}^{(k)}$ of $\cm_{m,d}$. %Our focus on the target RKHS $\ch_\cz$ simplifies the corresponding proof in the classical KDR methods. 

\begin{lemma}\label{lemma:continuity-population}
    Suppose that Assumption~\ref{assumption:modified-continuity} holds and that $k_\cz$ is characteristic. Then, $T(P)$ is continuous on each $\cm_{m,d}^{(k)}$, $k=1,\ldots, m$.
\end{lemma}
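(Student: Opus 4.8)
The plan is to reduce $T(P)=\Tr(\s_{YY|PX})$ to a sum over a complete orthonormal system (CONS) and then apply \cref{assumption:modified-continuity} termwise. Since $k_\cz$ is characteristic, \eqref{eq:condvar} gives $\langle g,\s_{YY|PX}g\rangle_{\ch_\cy}=\E[\var(g(Y)\mid PX)]$ for every $g\in\ch_\cy$; picking a CONS $\{g_i\}$ of $\ch_\cy$ and invoking the law of total variance, I would write
$$T(P)=\Tr(\s_{YY})-\sum_{i=1}^\infty\var\bigl(\E[g_i(Y)\mid PX]\bigr),$$
so that continuity of $T$ on $\cm_{m,d}^{(k)}$ is equivalent to continuity of $D(P):=\sum_i\var(\E[g_i(Y)\mid PX])$. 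The first step is a reduction to the row space: for $P\in\cm_{m,d}^{(k)}$ with $V:=\row(P)$ (which lies in $\Gr^\1(k,d)$ because $\1_m^\top P=\1_d^\top$ forces $\1_d\in V$), the identity $Px=P\Pi_V x$ together with injectivity of $P|_V$ (as $\ker P=V^\perp$) yields $\sigma(PX)=\sigma(\Pi_V X)$, hence $\E[g_i(Y)\mid PX]=\E[g_i(Y)\mid\Pi_V X]$ almost surely. Thus $D(P)=\widetilde D(\row(P))$ with $\widetilde D(V):=\sum_i\var(\E[g_i(Y)\mid\Pi_V X])$, and it suffices to prove that $\widetilde D$ is continuous on $\Gr^\1(k,d)$ and that $\row:\cm_{m,d}^{(k)}\to\Gr^\1(k,d)$ is continuous.

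For the infinite sum I would use a Weierstrass-type argument. Each term $a_i(V):=\var(\E[g_i(Y)\mid\Pi_V X])$ obeys $0\le a_i(V)\le\var(g_i(Y))$ by the law of total variance, and $\sum_i\var(g_i(Y))=\Tr(\s_{YY})=\E[k_\cy(Y,Y)]-\|m_Y\|_{\ch_\cy}^2<\infty$ by the integrability assumption on $k_\cy$. Consequently, once each $a_i$ is known to be continuous, the partial sums $\sum_{i\le N}a_i$ converge to $\widetilde D$ uniformly on $\Gr^\1(k,d)$, and continuity of $\widetilde D$ follows.

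It remains to establish termwise continuity. Writing $\tilde a(V;g):=\E[\E[g(Y)\mid\Pi_V X]^2]=\|\Pi_V^{L^2}g(Y)\|_{L^2(\P)}^2$, where $\Pi_V^{L^2}$ is the conditional-expectation projection onto $L^2(\sigma(\Pi_V X))$, the map $V\mapsto\sqrt{\tilde a(V;g)}$ is $1$-Lipschitz in $g(Y)$ for the $L^2(\P_Y)$ norm, since $\|\Pi_V^{L^2}\|\le1$. For bounded continuous $g$, \cref{assumption:modified-continuity} yields continuity of $V\mapsto\tilde a(V;g)$. Because $g_i\in\ch_\cy\subseteq L^2(\P_Y)$ (the RKHS embeds continuously into $L^2(\P_Y)$ under the integrability hypothesis) and bounded continuous functions are dense in $L^2(\P_Y)$ (every finite Borel measure on the metric space $\cy$ is regular), I would approximate $g_i$ by bounded continuous $g_i^{(\ell)}\to g_i$ in $L^2(\P_Y)$; the Lipschitz bound then forces $V\mapsto\sqrt{\tilde a(V;g_i^{(\ell)})}$ to converge uniformly to $V\mapsto\sqrt{\tilde a(V;g_i)}$, so the latter, and hence $a_i(V)=\tilde a(V;g_i)-(\E g_i(Y))^2$, is continuous on $\Gr^\1(k,d)$. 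Finally, continuity of $P\mapsto\row(P)$ on the constant-rank set $\cm_{m,d}^{(k)}$ follows from continuity of the orthogonal projection $P\mapsto\Pi_{\row(P)}$ (e.g.\ via continuity of the Moore--Penrose pseudoinverse on matrices of fixed rank $k$), which is precisely continuity into $(\Gr^\1(k,d),\rho)$.

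The row-space reduction and the Weierstrass summation are routine. The main obstacle is the termwise continuity: \cref{assumption:modified-continuity} supplies continuity only for bounded continuous integrands, whereas the CONS elements $g_i$ are arbitrary RKHS functions. The crux is therefore the uniform-in-$V$ $L^2(\P_Y)$ approximation that transfers continuity from bounded continuous $g$ to each $g_i\in\ch_\cy$, carried out while keeping the infinite sum controlled by the summable envelope $\var(g_i(Y))$.
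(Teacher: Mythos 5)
Your proposal is correct and takes essentially the same route as the paper's proof: a CONS expansion of the trace with the summable envelope $\var(g_i(Y))$ to reduce to termwise continuity, passage to the row space and the Grassmannian $\Gr^\1(k,d)$ where \cref{assumption:modified-continuity} applies to bounded continuous $g$, a density argument in $L^2(\P_Y)$ to transfer continuity to general RKHS functions, and continuity of $P\mapsto\row(P)$ on the fixed-rank set via the Moore--Penrose pseudoinverse formula $\Pi_{\row(P)}=P^\top(PP^\top)^\dagger P$. The only differences are that you use a Weierstrass-type uniform-convergence argument where the paper invokes dominated convergence, and you explicitly verify two steps the paper leaves implicit (the $\sigma$-field identity $\sigma(PX)=\sigma(\Pi_{\row(P)}X)$ and the uniform-in-$V$ contraction bound justifying the reduction to bounded continuous $g$), which strengthens rather than changes the argument.
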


\begin{proof}
    By taking a CONS of $\ch_\cy$ and applying the dominant convergence theorem, it suffices to show that the mapping $P\mapsto \langle g, \s_{YY|PX}\,g\rangle_{\ch_\cy}$ is continuous on $\cm_{m,d}^{(k)}$ for any $g\in\ch_\cy$. Since $k_\cz$ is characteristic, we apply the equality \eqref{eq:condvar}, which yields:
    \begin{align*}
        \langle g, \s_{YY|PX}g\rangle_{\ch_\cy} &= \E[\var[g(Y)|PX]]\\
        &=\E[g(Y)^2] - \E[\E[g(Y)|PX]^2].
    \end{align*}
    Thus, the desired continuity is equivalent to the continuity of $P\mapsto \E[\E[g(Y)|PX]^2] = \E[\E[g(Y)|\Pi_{\row(P)}X]^2] $ on each $\cm_{m,d}^{(k)}$. Since the set of continuous bounded functions on $\cy$ is dense in $L^2(\P_Y)$, which contains the RKHS $\ch_\cy$, we may assume that $g$ is continuous and bounded on $\cy$.

    For such $g$, Assumption~\ref{assumption:modified-continuity} implies the continuity of the mapping $V\mapsto \E[\E[g(Y)|\Pi_{V}X]^2] $ on each $\Gr^\1(k, d)$. To extend this continuity to the matrix level, we only need to check the continuity of the mapping
    $$\cm_{m,d}^{(k)}\to\Gr^\1(k,d);\quad P\mapsto \row(P).$$ 
    Here, the row space projection map $\Pi_{\row(P)}$ is represented by the matrix $P^\top(PP^\top)^\dagger P$, where $\dagger$ indicates the Moore-Penrose pseudoinverse. As the association $A\mapsto A^\dagger$ is continuous on any set of matrices of \textit{fixed rank}, it completes the proof.
\end{proof}

We emphasize that the pointwise convergence $T^\varepsilon\to T$, the monotonicity $T^\varepsilon \ge T$, and the rank-wise continuity of $T$ will suffice for our consistency proof, unlike the uniform convergence-based proofs in the related KDR literature \citep{fukumizuKernelDimensionReduction2009, chenTheoryFeatureLearning2026}.

\subsection{Consistency proof: pointwise convergence and uniform gaps}\label{sec:supp-part-iii}

We now prove our main consistency result. Throughout this section, we pick a minimizer $P^\star$ of the population function $T(P)$ on $\cm_{m,d}$, whose existence is guaranteed by the existence of the central compositional subspace $\mathcal{C}_{Y|X}$ and Proposition~\ref{thm:sdr-population-guarantee} (with Assumption~\ref{assumption:kernels-and-dimension}). We assume that $P^\star$ satisfies $\row(P^\star) = \mathcal{C}_{Y|X}$ using Lemma~\ref{lemma:1subspace}, and write the global minimum as:
$$T_0 := T(P^\star) = \min_{P\in\cm_{m,d}}T(P).$$
As mentioned in Remark~\ref{rmk:consistency-proof}, we prove Theorem~\ref{thm: consistency} by establishing 
\begin{enumerate}
    \item the pointwise convergence $T(\wh P_n)\to T_0$ (Corollary~\ref{cor:key-convergence}); and
    \item the \textit{uniform separation} of the minimum $T_0$ from ``two bad regions'' (Lemma~\ref{lemma:uniform-gaps}).
\end{enumerate}
The pointwise convergence is derived from the results in Sections \ref{sec:supp-part-i} and \ref{sec:supp-part-ii}, while the uniform separation Lemma~\ref{lemma:uniform-gaps} requires a novel Grassmannian projection argument, whose proof is deferred to the end of this section.

\subsubsection{Pointwise convergence}

We begin with establishing the following one-sided convergence corollary, which is in turn equivalent to the pointwise convergence $T(\wh P_n)\to T_0$ due to the minimality of $T_0$:
\begin{corollary}\label{cor:key-convergence}
    Suppose the same assumptions of Theorem~\ref{thm: consistency}. For any positive number $\eta > 0$, we have
    $$\P(T(\wh P_n) \le T_0 + \eta)\to 1 \quad\text{as}\quad n\to\infty.$$
\end{corollary}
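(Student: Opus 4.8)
The plan is to bound $T(\wh P_n)$ from above by a quantity converging to $T_0$ in probability, using the \emph{one-sided} comparison $T \le T^{\varepsilon_n}$ supplied by the monotonicity in \cref{cco:lemma-pointwise convergence}, rather than any full uniform approximation of $T$ by $T^{\varepsilon_n}$ (which is unavailable owing to the varying-rank geometry of $\cm_{m,d}$). First I would set $\Delta_n := \sup_{P\in\cm_{m,d}}|T_n(P) - T^{\varepsilon_n}(P)|$, and invoke \cref{cor: part (i)} together with the condition $n^{1/2}\varepsilon_n\to\infty$ from \eqref{eq: epsilon condition} to conclude $\Delta_n = O_p(1/(\varepsilon_n\sqrt{n})) = o_p(1)$.

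Next I would assemble a short chain of inequalities. Since $T^\varepsilon$ decreases to $T$ as $\varepsilon\to 0$, the monotonicity of \cref{cco:lemma-pointwise convergence} gives $T^{\varepsilon_n}\ge T$ pointwise, so in particular $T(\wh P_n)\le T^{\varepsilon_n}(\wh P_n)$. Bounding by $\Delta_n$ gives $T^{\varepsilon_n}(\wh P_n)\le T_n(\wh P_n) + \Delta_n$; the optimality of the empirical minimizer gives $T_n(\wh P_n)\le T_n(P^\star)$; and a second application of the uniform bound gives $T_n(P^\star)\le T^{\varepsilon_n}(P^\star) + \Delta_n$. Chaining these four steps yields the single estimate $T(\wh P_n) \le T^{\varepsilon_n}(P^\star) + 2\Delta_n$.

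Finally, at the \emph{fixed} matrix $P^\star$ the pointwise convergence in \cref{cco:lemma-pointwise convergence} gives $T^{\varepsilon_n}(P^\star)\to T(P^\star) = T_0$ as $\varepsilon_n\to 0$, while $2\Delta_n = o_p(1)$ from the first step. Hence $T(\wh P_n)\le T_0 + o_p(1)$, which delivers $\P(T(\wh P_n)\le T_0 + \eta)\to 1$ for every $\eta > 0$. The only delicate point is the asymmetric treatment of the two endpoints: at the data-dependent $\wh P_n$ one cannot control $T^{\varepsilon_n}(\wh P_n) - T(\wh P_n)$ (uniform convergence $T^{\varepsilon_n}\to T$ fails in $\cm_{m,d}$), so I rely solely on the one-sided bound $T(\wh P_n)\le T^{\varepsilon_n}(\wh P_n)$ furnished by monotonicity, whereas at the fixed $P^\star$ genuine pointwise convergence is available. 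This asymmetry is precisely where the monotonicity of \cref{cco:lemma-pointwise convergence} substitutes for the missing uniform convergence, and it is the step I would be most careful to state cleanly.
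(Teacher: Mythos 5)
Your proof is correct and follows essentially the same route as the paper: both arguments combine the optimality $T_n(\wh P_n)\le T_n(P^\star)$, the uniform control of \cref{cor: part (i)} (which you apply explicitly at both $\wh P_n$ and $P^\star$, while the paper applies it at $\wh P_n$ and folds the $P^\star$ instance into the claim $T_n(P^\star)\to T_0$), and the monotonicity $T\le T^{\varepsilon_n}$ plus pointwise convergence $T^{\varepsilon_n}(P^\star)\to T_0$ from \cref{cco:lemma-pointwise convergence}. Your closing remark about the deliberate asymmetry—one-sided monotonicity at the random point $\wh P_n$ versus genuine pointwise convergence at the fixed $P^\star$—is exactly the mechanism the paper exploits.
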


\begin{proof}
    As $T_n(\wh P_n)\le T_n(P^\star)$ and $T_n(P^\star)\to T(P^\star)=T_0$ in probability (Corollary~\ref{cor: part (i)} and Lemma~\ref{cco:lemma-pointwise convergence}), we have
    \begin{equation*}
        T_n(\wh P_n)\le T_0 + o_P(1).
    \end{equation*}
    By the \textit{uniform control} between $T_n$ and $T^{\varepsilon_n}$ (Corollary~\ref{cor: part (i)}), we get $|T_n(\wh P_n ) - T^{\varepsilon_n}(\wh P_n)|\to 0$ in probability, implying that
    \begin{equation*}
        T^{\varepsilon_n}(\wh P_n) \le T_n(\wh P_n) + o_P(1).
    \end{equation*}
    Combining these two inequalities and the \textit{monotonicity} $T \le T^{\varepsilon_n}$ (Lemma~\ref{cco:lemma-pointwise convergence}) yields
    \begin{equation}\label{eq:key-asymp-ineq}
        T(\wh P_n) \le T_0 + o_P(1),
    \end{equation}
    which deduces the desired one-sided convergence
    $$\P(T(\wh P_n) \le T_0 + \eta)\to 1.$$
\end{proof}

\subsubsection{Uniform separation results and the proof of Theorem~\ref{thm: consistency}}

To prove Theorem~\ref{thm: consistency}, we introduce two ``bad regions'' that the estimator must avoid: the low-rank subset $\cm_{m,d}^{(<m^\star)}:=\{P\in\cm_{m,d}:\rk(P) < m^\star\}$ and the distant subset $K_\delta:= \{P\in\cm_{m,d}: \rho(\row(P), \mathcal{C}_{Y|X})\ge \delta\}$, where $m^\star = \dim \mathcal{C}_{Y|X}$.
We establish the following key uniform separation lemma, whose proof is deferred to Section~\ref{sec:supp-separation-proof}.

\begin{lemma}\label{lemma:uniform-gaps}
    Under Assumptions~\ref{assumption:kernels-and-dimension} and \ref{assumption:modified-continuity}, we have the strict inequalities:
    \begin{equation}\label{eq:uniform-gaps}
        \inf_{P\in\cm_{m,d}^{(< m^\star)}} T(P) > T_0 \quad\text{and}\quad \inf_{P\in K_\delta} T(P) > T_0.
    \end{equation}
\end{lemma}

With this lemma, the proof of Theorem~\ref{thm: consistency} follows immediately.

\paragraph*{Proof of Theorem~\ref{thm: consistency}.}
By Lemma~\ref{lemma:uniform-gaps}, we set $$\eta = \min\bigg(\inf_{P\in\cm_{m,d}^{(< m^\star)}} T(P) - T_0,\ \inf_{P\in K_\delta} T(P) - T_0\bigg) > 0.$$ 
On the event that either $\rk(\wh P_n) < m^\star$ or $\rho(\row(\wh P_n), \mathcal{C}_{Y|X}) \ge \delta$ occurs, we must have $T(\wh P_n) \ge T_0 + \eta$.
However, by Corollary~\ref{cor:key-convergence}, the probability of the disjoint event $T(\wh P_n) \le T_0 + \eta / 2$ converges to 1.
Therefore, the probability that $\wh P_n$ falls into either bad region converges to 0, which yields:
$$\lim_{n\to\infty}\P\left(\rk(\wh P_n)\ge m^\star\ \wedge\ \rho(\row(\wh P_n), \mathcal{C}_{Y|X}) < \delta\right) = 1,$$
completing the proof. \qed

\subsubsection{Proof of Lemma~\ref{lemma:uniform-gaps}}\label{sec:supp-separation-proof} 

We now prove the key Lemma~\ref{lemma:uniform-gaps}. A direct analysis of the infima over $\cm_{m,d}$ is hindered by the discontinuity of $T(P)$ at rank-deficient limits and the non-compactness of the subsets. We circumvent this issue by projecting these sets into the union of the Grassmannians $\Gr^\1(k, d)$, $k=1, \ldots, m$.

For each $k=1,\ldots, m$, consider the row space mapping from the rank-$k$ subset $\cm_{m,d}^{(k)}$ to the compact manifold $\Gr^\1(k,d)$:
$$\Pi_k:\cm_{m,d}^{(k)} \to \Gr^\1(k,d);\quad P\mapsto \row(P), 
$$
which is \textit{surjective} by Lemma~\ref{lemma:1subspace}. Denoting $\cs$ by the disjoint union $\bigcup_{k=1}^m\Gr^\1(k,d)$, the set of subspaces of $\R^d$ containing $\1_d$, there is a natural extension of $\Pi_k$:
$$\Pi:\cm_{m,d}\to \cs; \quad P\mapsto \row(P),
$$
which is again \textit{surjective}. We identify $\Pi(P) = \Pi_{\row(P)}$, the orthogonal projection matrix onto $\row(P)$.

Given a CONS $g_1,g_2,\ldots$ of $\ch_\cy$, we formally define a function $J:\cs\to\R$ by:
$$J(V):= \sum_{i\ge 1} \E[\var(g_i(Y)|\Pi_VX)]. $$
Because $k_\cz$ is characteristic, the equality \eqref{eq:condvar} yields $T(P) = J(\Pi(P))$ on $\cm_{m,d}$. By surjectivity of $\Pi$, the minimizer $P^\star$ of $T$ attains the minimum of $J$ on $\cs$; i.e., 
$$\min_{V\in\cs} J(V) = J(\Pi(P^\star)) = T_0.$$
Furthermore, by Assumption~\ref{assumption:modified-continuity}, each summand $\E[\var(g_i(Y)|\Pi_VX)]$ of $J$ is continuous on $\Gr^\1(k, d)$. As we have assumed $\E[k_\cy(Y,Y)] < \infty$ throughout the paper, the sum of these summands is bounded by a finite number: $\sum_{i\ge 1} \E[g_i(Y)^2] = \E[\|k_\cy(Y, \cdot)\|^2_{\ch_\cy}] = \E[k_\cy(Y,Y)] < \infty$. Thus, $J$ is \textit{continuous} on each $\Gr^\1(k, d)$.

We first prove $\inf_{P\in\cm_{m,d}^{(< m^\star)}} T(P) > T_0$. Note that the image of $\cm_{m,d}^{(<m^\star)}$ under $\Pi$ is exactly $\bigcup_{k=1}^{m^\star - 1} \Gr^\1(k, d)$. Thus:
$$\inf_{P\in\cm_{m,d}^{(< m^\star)}} T(P) = \min_{1 \le k < m^\star} \inf_{V\in\Gr^\1(k, d)} J(V).$$
Because $J$ is continuous on each compact manifold $\Gr^\1(k, d)$, the infimum on the right-hand side is attained at some subspace $V^*$ with $\dim(V^*) < m^\star$. If this infimum equals $T_0$, i.e., $J(V^*) = T_0$, any matrix $P_V$ with $\Pi(P_V) = V^*$ satisfies $T(P_V) = T_0$. By Proposition~\ref{thm:sdr-population-guarantee}, such a matrix $P_V$ satisfies compositional SDR, making $V^*$ a CSDR subspace. Hence, by the definition of the central compositional subspace, $\mathcal{C}_{Y|X}\subseteq V^*$. This contradicts $\dim(V^*) < m^\star = \dim(\mathcal{C}_{Y|X})$. Thus, the strict inequality holds.

Next, we prove $\inf_{P\in K_\delta} T(P) > T_0$. The set $K_\delta$ projects via $\Pi$ to 
$$F_\delta = \{V\in\cs: \rho(V, \mathcal{C}_{Y|X})\ge \delta \}.$$
On each $\Gr^\1(k, d)$, the distance $\rho(V, \mathcal{C}_{Y|X})$ is continuous in $V$, making the intersection $F_\delta \cap \Gr^\1(k, d)$ a closed subset of a compact space, and hence compact \citep{yeSchubertVarietiesDistances2016}.
As $J$ is continuous, it attains its minimum on $F_\delta$, denoted by $J(V_\delta)$ for some $V_\delta\in F_\delta$. Since $V_\delta \not\supset \mathcal{C}_{Y|X}$ by the distance condition $\rho(V_\delta, \mathcal{C}_{Y|X}) \ge \delta$, we obtain the strict inequality $J(V_\delta) > T_0$: if we had $J(V_\delta) = T_0$, then any CDR matrix $P_\delta$ with $\Pi(P_\delta) = V_\delta$ would satisfy compositional SDR by Proposition~\ref{thm:sdr-population-guarantee}, leading to a contradictory inclusion $V_\delta \supseteq \mathcal{C}_{Y|X}$.
Since $K_\delta = \Pi^{-1}(F_\delta)$, we conclude $$\inf_{P\in K_\delta} T(P) = \inf_{V\in F_\delta} J(V) = J(V_\delta) > T_0.$$ \qed

\subsection{Counterexample to uniform convergence over the rank-variable CDR domain}\label{sec:supp-counterexample}

In this section, we illustrate why the population objective $T(P)$ on $\cm_{m,d}$ essentially has discontinuities, occurring when a sequence of matrices converges to a lower-rank matrix. This discontinuity not only invalidates the classical uniform convergence argument but also indicates that Assumption (A-1) of \citet{fukumizuKernelDimensionReduction2009} cannot directly apply to our compositional domain $\cm_{m,d}$, which led to our modified subspace dimension-wise Assumption~\ref{assumption:modified-continuity}.

For ease of illustration, we set $Y$ to be a univariate variable in $\R$, endowed with the linear kernel $k_\cy(y, y') = yy'$. Then, for any $P\in\cm_{m,d}$, we have
\begin{align*}
    \Tr(\s_{YY|PX}) &= \E[\var(Y|PX)] \\
    &= \E[Y^2] - \E[\E[Y|PX]^2],
\end{align*}
whenever $k_\cz$ is characteristic by the equality \eqref{eq:condvar}. The continuity of $T(P)$ is thus equivalent to the continuity of the mapping $P\mapsto \E[\E[Y|PX]^2]$ on $\cm_{m,d}$; note that this equivalence can extend to general response kernels $k_\cy$.

Then, we give a concrete example of discontinuity. Let $U\sim\mathcal{U}(0, 1)$ be a uniform random variable, let $X = (U, 1- U)^\top\in\Delta^1$, and let $Y = U$. We design the following rank-degenerating sequence of CDR matrices that map $\Delta^1\to\Delta^1$:
$$ P = \frac{1}{2}\begin{pmatrix}
    1 & 1 \\ 1 & 1
\end{pmatrix},\quad P_n = P + \frac{1}{n}\begin{pmatrix}
    1 & -1 \\ -1 & 1
\end{pmatrix},$$
which are CDR matrices satisfying $\rk(P_n) = 2$ for all $n\ge 2$. Then, since $PX$ is always the constant vector $\1_2 /2$, we have
\begin{align*}
    \E[Y|P_nX] & = \E[Y|X] = \E[Y|U] = U \\
    \E[Y|PX] &= \E[Y] = \frac{1}{2},
\end{align*}
which gives
$$\E[\E[Y|PX]^2] = \frac{1}{4},\quad\text{and}\quad \E[\E[Y|P_nX]^2] = \frac{1}{3}.$$
Therefore, $T(P_n)$ does not converge to $T(P)$, and thus $T$ is not continuous.

One can obtain countless such discontinuity examples by creating a sequence of CDR matrices that converges to a lower-rank matrix. Intuitively, the rank drop causes an abrupt reduction of the residual information in $Y$ after being described by $P_nX$, caused by a reduction of independent directions over which $P_nX$ can vary, resulting in discontinuities as above. Such concrete counterexamples confirm that the prior KDR theory based on uniform convergence works only on the fixed-rank Stiefel manifold.

\begin{remark}\label{rmk:counterexample-previous}
    The same example also provides a counterexample in the weak-limit step used in a related rank-collapse prevention argument for KDR \citep[Lemma~12]{chenTheoryFeatureLearning2026}. 
    The proof obtains a weak $L^2$ accumulation point of $\{\E[Y| P_nX]\}_{n=1}^\infty$ and then implicitly treats it as $\sigma(PX)$-measurable, thereby identifying it with $\E[Y| PX]$. In our example, however, $\E[Y| P_nX]=U$ for every $n$, so any weak $L^2$ limit is $U$, whereas $PX$ is constant and $U$ is not $\sigma(PX)$-measurable. Our Grassmannian-projection argument in Lemma~\ref{lemma:uniform-gaps} avoids this issue while establishing the rank-collapse prevention rigorously.
\end{remark}

\putbib
\end{bibunit}
\end{document}